\documentclass[11pt,notitlepage,tightenlines,nofootinbib,superscriptaddress]{revtex4-2}

\bibliographystyle{unsrt}

\usepackage{newpxtext,newpxmath}

\usepackage[latin1]{inputenc}
\usepackage{amsthm}
\usepackage{amssymb}
\usepackage{amsmath}
\usepackage{bbold}
\usepackage{bbm}
\usepackage[pdftex, backref=page]{hyperref}
\usepackage{braket}
\usepackage{dsfont}
\usepackage{mathdots}
\usepackage{mathtools}
\usepackage{enumerate}
\usepackage[shortlabels]{enumitem}
\usepackage{csquotes}
\usepackage{stmaryrd}
\usepackage[cal=boondox]{mathalfa}
\usepackage{graphicx}
\usepackage{stackengine}
\usepackage{scalerel}
\usepackage{tensor}       
\usepackage{array}
\usepackage{makecell}
\newcolumntype{x}[1]{>{\centering\arraybackslash}p{#1}}
\usepackage{tikz}
\usepackage{pgfplots}
\usetikzlibrary{shapes.geometric, shapes.misc, positioning, arrows, arrows.meta, decorations.pathreplacing, decorations.pathmorphing, patterns, angles, quotes, calc}
\usepackage{booktabs}
\usepackage{xfrac}
\usepackage{siunitx}
\usepackage{centernot}
\usepackage{comment}
\usepackage{chngcntr}
\usepackage[justification=justified,format=plain]{caption}
\usepackage{subcaption}
\usepackage{ragged2e}

\newtheorem{thm}{Theorem}
\newtheorem*{thm*}{Theorem}
\newtheorem{prop}[thm]{Proposition}
\newtheorem*{prop*}{Proposition}
\newtheorem{lemma}[thm]{Lemma}
\newtheorem*{lemma*}{Lemma}
\newtheorem{cor}[thm]{Corollary}
\newtheorem*{cor*}{Corollary}
\newtheorem{cj}[thm]{Conjecture}
\newtheorem*{cj*}{Conjecture}
\newtheorem{Def}[thm]{Definition}
\newtheorem*{Def*}{Definition}

\newtheorem*{question*}{Question}

\newtheorem*{problem*}{Problem}

\makeatletter
\def\thmhead@plain#1#2#3{%
  \thmname{#1}\thmnumber{\@ifnotempty{#1}{ }\@upn{#2}}%
  \thmnote{ {\the\thm@notefont#3}}}
\let\thmhead\thmhead@plain
\makeatother

\theoremstyle{definition}
\newtheorem{rem}[thm]{Remark}
\newtheorem*{note}{Note}
\newtheorem{ex}[thm]{Example}

\makeatletter
\newcommand{\manualifempty}[3]{%
  \edef\@tempa{#1}%
  \ifx\@tempa\@empty
    #2
  \else
    #3
  \fi
}
\makeatother

\makeatletter
\newtheoremstyle{manualstyle}
  {3pt}{3pt}{\itshape}{}{\bfseries}{.}{ }{}

\theoremstyle{manualstyle}

\newtheorem{manualthminner}{Theorem}
\newenvironment{manualthm}[1]{%
  \def\@tempa{#1}%
  \ifx\@tempa\@empty
  \else
  \fi
  \begin{manualthminner}
}{\end{manualthminner}}

\newtheorem{manualpropinner}{Proposition}

\newtheorem{manuallemmainner}{Lemma}
\newenvironment{manuallemma}[1]{%
  \def\@tempa{#1}%
  \ifx\@tempa\@empty
  \else
  \fi
  \begin{manuallemmainner}
}{\end{manuallemmainner}}

\newtheorem{manualcorinner}{Corollary}

\makeatother

\newcommand{\bb}{\begin{equation}\begin{aligned}\hspace{0pt}}
\newcommand{\bbb}{\begin{equation*}\begin{aligned}}
\newcommand{\ee}{\end{aligned}\end{equation}}
\newcommand{\eee}{\end{aligned}\end{equation*}}

\newcommand{\eqt}[1]{\stackrel{\mathclap{\scriptsize \mbox{#1}}}{=}}
\newcommand{\leqt}[1]{\stackrel{\mathclap{\scriptsize \mbox{#1}}}{\leq}}

\newcommand{\geqt}[1]{\stackrel{\mathclap{\scriptsize \mbox{#1}}}{\geq}}

\newcommand{\sumno}{\sum\nolimits}
\newcommand{\prodno}{\prod\nolimits}
\newcommand{\e}{\varepsilon}
\renewcommand{\epsilon}{\varepsilon}

\newcommand{\dd}{\mathrm{d}}

\newcommand{\R}{\mathds{R}}
\newcommand{\N}{\mathds{N}}

\newcommand{\E}{\mathds{E}}

\newcommand{\SEP}{\pazocal{S}}

\DeclareMathOperator{\rk}{rk}

\DeclareMathOperator{\co}{conv}

\DeclareMathAlphabet{\pazocal}{OMS}{zplm}{m}{n}

\DeclareMathOperator{\pr}{Pr}
\DeclareMathOperator{\supp}{supp}

\newcommand{\HH}{\pazocal{H}}

\newcommand{\D}{\pazocal{D}}

\newcommand{\XX}{\pazocal{X}}
\newcommand{\TT}{\pazocal{T}}
\newcommand{\PP}{\pazocal{P}}
\newcommand{\FF}{\pazocal{F}}

\newcommand{\lsmatrix}{\left(\begin{smallmatrix}}
\newcommand{\rsmatrix}{\end{smallmatrix}\right)}

\newcommand{\deff}[1]{\textbf{\emph{#1}}}
\newcommand{\rel}[3]{#1\big(#2\,\big\|\,#3\big)}

\stackMath
\newcommand\xxrightarrow[2][]{\mathrel{%
  \setbox2=\hbox{\stackon{\scriptstyle#1}{\scriptstyle#2}}%
  \stackunder[5pt]{%
    \xrightarrow{\makebox[\dimexpr\wd2\relax]{$\scriptstyle#2$}}%
  }{%
   \scriptstyle#1\,%
  }%
}}

\newcommand{\tends}[2]{\xxrightarrow[\! #2 \!]{\mathrm{#1}}}
\newcommand{\tendsn}[1]{\xxrightarrow[\! n\rightarrow \infty\!]{\mathrm{#1}}}

\stackMath

\makeatletter
\newcommand*\rel@kern[1]{\kern#1\dimexpr\macc@kerna}
\newcommand*\widebar[1]{%
  \begingroup
  \def\mathaccent##1##2{%
    \rel@kern{0.8}%
    \overline{\rel@kern{-0.8}\macc@nucleus\rel@kern{0.2}}%
    \rel@kern{-0.2}%
  }%
  \macc@depth\@ne
  \let\math@bgroup\@empty \let\math@egroup\macc@set@skewchar
  \mathsurround\z@ \frozen@everymath{\mathgroup\macc@group\relax}%
  \macc@set@skewchar\relax
  \let\mathaccentV\macc@nested@a
  \macc@nested@a\relax111{#1}%
  \endgroup
}

\counterwithin*{equation}{part}
\counterwithin*{thm}{part}
\counterwithin*{figure}{part}

\tikzset{meter/.append style={draw, inner sep=10, rectangle, font=\vphantom{A}, minimum width=30, line width=.8, path picture={\draw[black] ([shift={(.1,.3)}]path picture bounding box.south west) to[bend left=50] ([shift={(-.1,.3)}]path picture bounding box.south east);\draw[black,-latex] ([shift={(0,.1)}]path picture bounding box.south) -- ([shift={(.3,-.1)}]path picture bounding box.north);}}}
\tikzset{roundnode/.append style={circle, draw=black, fill=gray!20, thick, minimum size=10mm}}
\tikzset{squarenode/.style={rectangle, draw=black, fill=none, thick, minimum size=10mm}}

\definecolor{Blues5seq1}{RGB}{239,243,255}
\definecolor{Blues5seq2}{RGB}{189,215,231}
\definecolor{Blues5seq3}{RGB}{107,174,214}
\definecolor{Blues5seq4}{RGB}{49,130,189}
\definecolor{Blues5seq5}{RGB}{8,81,156}

\definecolor{Greens5seq1}{RGB}{237,248,233}
\definecolor{Greens5seq2}{RGB}{186,228,179}
\definecolor{Greens5seq3}{RGB}{116,196,118}
\definecolor{Greens5seq4}{RGB}{49,163,84}
\definecolor{Greens5seq5}{RGB}{0,109,44}

\definecolor{Reds5seq1}{RGB}{254,229,217}
\definecolor{Reds5seq2}{RGB}{252,174,145}
\definecolor{Reds5seq3}{RGB}{251,106,74}
\definecolor{Reds5seq4}{RGB}{222,45,38}
\definecolor{Reds5seq5}{RGB}{165,15,21}

\allowdisplaybreaks

\usepackage[most,breakable]{tcolorbox}
\makeatletter

\def\boxed@gobegin#1{\def\@tempa{#1}\def\@tempb{orange}\ifx\@tempa\@tempb\begin{tcolorbox}[colback=red!15,colframe=orange!70,breakable,enhanced]\else\begin{tcolorbox}[colback=Blues5seq1,colframe=Blues5seq5,breakable,enhanced]\fi}
\def\boxed@gobegin@empty{\begin{tcolorbox}[colback=Blues5seq1,colframe=Blues5seq5,breakable,enhanced]}
\makeatother

\newcommand{\stein}{\mathrm{Stein}}

\newcommand{\RR}{\pazocal{R}}

\newcommand{\YY}{\pazocal{Y}}
\newcommand{\ZZ}{\pazocal{Z}}
\renewcommand{\SS}{\pazocal{S}}

\renewcommand{\D}{\mathcal{D}}

\renewcommand{\SEP}{\mathrm{SEP}}

\newcommand{\mm}{\pazocal{D}_{\delta,R}}

\renewcommand{\deff}[1]{\emph{#1}}

\newtheorem{ax}{Axiom}
\newtheorem{newax}{Axiom}

\makeatletter

\renewcommand*{\p@subsection}{}

\renewcommand*{\p@subsubsection}{}
\makeatother

\setlength{\parskip}{4pt}

\begin{document}

\title{A doubly composite Chernoff--Stein lemma and its applications
}

\author{Ludovico Lami}
\email{ludovico.lami@gmail.com}
\affiliation{Scuola Normale Superiore, Piazza dei Cavalieri 7, 56126 Pisa, Italy}

\begin{abstract}
Given a sequence of random variables $X^n=X_1,\ldots, X_n$, discriminating between two hypotheses on the underlying probability distribution is a key task in statistics and information theory. Of interest here is the Stein exponent, i.e.\ the largest rate of decay (in $n$) of the type II error probability for a vanishingly small type I error probability. When the hypotheses are simple and i.i.d., the Chernoff--Stein lemma states that this is given by the relative entropy between the single-copy probability distributions. Generalisations of this result exist in the case of composite hypotheses, but mostly to settings where the probability distribution of $X^n$ is not genuinely correlated, but rather, e.g., a convex combination of product distributions with components taken from a base set. Here, we establish a general Chernoff--Stein lemma that applies to the setting where both hypotheses are composite and genuinely correlated, satisfying only generic assumptions such as convexity (on both hypotheses) and some weak form of permutational symmetry (on either hypothesis). Our result, which strictly subsumes most prior work, is proved using a refinement of the blurring technique developed in the context of the generalised quantum Stein's lemma \href{https://ieeexplore.ieee.org/abstract/document/10898013}{[Lami, IEEE Trans.\ Inf.\ Theory 2025]}. In this refined form, blurring is applied symbol by symbol, which makes it both stronger and applicable also in the absence of permutational symmetry. The second part of the work is devoted to applications: we provide a single-letter formula for the Stein exponent characterising the discrimination of broad families of null hypotheses vs a composite i.i.d.\ or an arbitrarily varying alternative hypothesis, and establish a `constrained de Finetti reduction' statement that covers a wide family of convex constraints. Applications to quantum hypothesis testing are explored in a related paper~[Lami, arXiv:today].
\end{abstract}

\maketitle

\tableofcontents

\section{Introduction}

\subsection{Background}

Hypothesis testing is a fundamental primitive in statistics, and, as such, an essential ingredient of the scientific method. It also has profound ramifications in information theory~\cite[Chapter~4]{HAN}, where it can be connected, e.g., with coding theory. One of the technical keystones of the theory is the Chernoff--Stein lemma~\cite{stein_unpublished, chernoff_1956}, which establishes an operational interpretation of the Kullback--Leibler divergence~\cite{Kullback-Leibler}, also called the relative entropy, in the task of deciding whether a random variable $X$ is distributed according to a certain law $P$ (null hypothesis) or an alternative law $Q$ (alternative hypothesis), given many i.i.d.\ realisations of $X$. The lemma states that the relative entropy $D(P\|Q)$ coincides with the optimal rate of decay of the probability of a type II error (mistaking $Q$ for $P$), under the constraint that the probability of a type I error (mistaking $P$ for $Q$) be smaller than a fixed threshold. Remarkably, such rate can be connected with the maximum size of reliable codes for communication over a channel~\cite{Feinstein1954, Blackwell1959, Verdu1994}. 

In the decades since its inception, the Chernoff--Stein lemma has been extended in several different directions. Looking at the problem from the point of view of large deviation theory, Sanov~\cite{Sanov1957} (see also~\cite{Hoeffding1965}) generalised it to the case of a composite i.i.d.\ null hypothesis. In this context, composite (i.e.\ non-simple) hypotheses are those that contain not one but many probability distributions, and one is interested in tests that work for all of them --- equivalently, in the worst-case scenario. Composite hypotheses comprising arbitrarily varying sources have been investigated in~\cite[Theorem~4.1]{Fangwei1996} (see also~\cite{Levitan2002}), and in~\cite[Theorem~2]{brandao_adversarial} the analysis has been expanded to encompass also adversarially chosen distributions. The case where the composite hypotheses include a potentially infinite number of distributions has been tackled in~\cite[Theorem~III.7]{Mosonyi2022}.  

Most works so far, however, have dealt with cases where the extremal points of the sets of probability distributions representing the two hypotheses have a product structure across the copies --- i.e.\ the corresponding random variables are independent. Here we are instead interested in treating `genuinely correlated' hypotheses, i.e.\ hypotheses that do \emph{not} have this property. Genuinely correlated but simple (i.e.\ non-composite) hypotheses have been considered already, and can be analysed with the information spectrum method~\cite[Chapter~4]{HAN}. Tackling \emph{composite} genuinely correlated hypotheses, however, requires significantly more effort, as well as more refined tools. 

Our motivation to embark on this endeavour is twofold. First, composite and genuinely correlated hypotheses are the most general class of hypothesis one might think of, and arise naturally in operational contexts --- consider, for example, classes of sources, or channels, with memory. Secondly, they are fundamental in quantum information theory, where, due to the presence of entanglement~\cite{Horodecki-review}, it in general impossible to write a multi-partite quantum state as a convex combination of product states. A paradigmatic example of this behaviour occurs in the setting of the `generalised quantum Stein's lemma', which has attracted much attention recently~\cite{Brandao2010, gap, gap-comment, Hayashi-Stein, GQSL}. Although this may seem like an exquisitely quantum problem, it also reflects back on classical information theory and classical statistics, because many quantum results in hypothesis testing are obtained by `lifting' corresponding classical results. This is the case already for Hiai and Petz's ground-breaking work in proving the original quantum Stein's lemma~\cite{Hiai1991}, as well as for more modern approaches and results~\cite{brandao_adversarial, berta_composite, generalised-Sanov}.

The aforementioned work~\cite{GQSL} introduced a new technique to deal with composite and genuinely correlated hypotheses, called \emph{blurring}. Intuitively, blurring allows us to make a probability distribution more regular by adding some noise to it, thereby `smearing' its weight over nearby type classes. Besides leading to a simple proof of the classical version of the generalised Stein's lemma~\cite[Theorem~4]{GQSL}, the blurring technique has also been used to establish a complementary statement, the generalised quantum Sanov theorem~\cite{generalised-Sanov}.

\subsection{Contribution}

In this paper we prove a generalised, doubly composite version of the classical Chernoff--Stein lemma, which applies to scenarios in which both the null and the alternative hypotheses are not only composite but also genuinely correlated (Theorem~\ref{double_Stein_thm}). Our result holds under a small set of basic compatibility assumptions on the families of probability distributions defining the hypotheses. These assumptions are relatively loose, allowing our theorem to encompass a broad range of previously studied settings, which are subsumed by our general framework. The resulting Stein exponent is given by the minimum regularised relative entropy distance of the single-copy probability distributions in the null hypothesis to the sets representing the alternative hypothesis. 

In general, the regularisation cannot be removed (Example~\ref{additivity_violation_ex}). However, it \emph{can} be removed when the alternative hypothesis is either composite i.i.d.\ or arbitrarily varying, while the null hypothesis is still allowed to be genuinely correlated --- provided it obeys our compatibility assumptions. This is stated in Theorem~\ref{stronger_generalised_Sanov_thm}, which is a relatively straightforward application of Theorem~\ref{double_Stein_thm} but has the advantage of providing a single-letter formula for the Stein exponent. 

These results are obtained by extending and generalising the blurring technique introduced in~\cite{GQSL}. Here we devise a more sophisticated version of this technique that we refer to as `symbol-by-symbol blurring', due to the fact that some noise is added to a given probability distribution over a product space by acting on each of its components independently. The advantage of this approach is that it requires fewer assumptions to be implemented, meaning that the obtained result is more general. In particular, one assumption that we are able to forgo is permutational symmetry on one of the two hypotheses, which is known to be superfluous~\cite{Hayashi-Stein}. On the technical level, our advancements are enabled by more refined estimates on the size of Hamming distance neighbourhoods of large sets in the Hamming space $\XX^n$ (Lemma~\ref{Alon_Spencer_lemma}). The culmination of these efforts is the new \emph{symbol-by-symbol blurring lemma} (Lemma~\ref{sbs_blurring_lemma}).

While conceptually transparent, the blurring technique can become technically cumbersome to wield. Thus, we use the symbol-by-symbol blurring lemma only to fabricate ourselves a handier tool, the \emph{`meta-lemma'} (Lemma~\ref{meta_lemma}; see also the simplified version in Lemma~\ref{meta_lemma_perm_symm}).
To appreciate why this is a much easier statement to handle, consider a family $\FF = (\FF_n)_n$ of sets $\FF_n$ of probability distributions over strings of length $n$ made of symbols taken from some finite alphabet $\XX$. The meta-lemma then formalises an intuitive truth: if $\FF$ represents a physically meaningful hypothesis, then any $Q_n\in \FF_n$ should, with high probability, output strings whose associated empirical probability distribution, that is, the `type' of the string~\cite{CSISZAR-KOERNER}, belongs to $\FF_1$. That is, loosely speaking, $\FF$ should be closed under the operation of taking types. Lemma~\ref{meta_lemma_perm_symm} makes this intuition quantitative, and along the way it will tell us something else: the combined weight of all the strings whose empirical probability distribution is \emph{far} from $\FF_1$ is exponentially suppressed.

The rest of the paper is devoted to presenting the applications of our main results to classical information theory. For applications in quantum information theory, instead, we refer the reader to~\cite{doubly-comp-quantum}. In Corollary~\ref{both_composite_iid_or_av_cor}, we refine earlier results for the case where both hypotheses are either composite i.i.d.\ or arbitrarily varying, while Corollary~\ref{almost_iid_GSL_cor} extends the classical version of the generalised Stein's lemma from~\cite{GQSL}, covering the case of an `almost i.i.d.'\ null hypothesis. Outside the context of hypothesis testing, we obtain a general `constrained de Finetti reduction' statement (Lemma~\ref{constrained_deFinetti_lemma}), which allows us to upper bound any permutationally symmetric probability distribution in $\FF_n$ by a `small' multiple of a convex combination of i.i.d.\ distributions, where only those close to $\FF_1$ are assigned a weight that does not vanish exponentially. Our estimate for the coefficients governing the decay is based on the relative entropy and improves upon the original (quantum) findings from~\cite{Lancien2017}, which employed the fidelity.

The rest of the paper is organised as follows. In Sections~\ref{subsec_general_setting} and~\ref{subsec_prior_results} we formulate the problem and present a brief overview of some prior results. Section~\ref{sec_main_results} then includes the complete technical statements of our main results and of some notable corollaries thereof. In Section~\ref{sec_preliminaries} we present the basic technical tools needed to prove our main results (Theorems~\ref{double_Stein_thm} and~\ref{stronger_generalised_Sanov_thm}), something we then do in Section~\ref{sec_proof_main_result}. In the latter section we also establish our workhorse result, the meta-lemma (Lemma~\ref{meta_lemma}). Section~\ref{sec_classical_applications} is then devoted to the applications of our methods.

\subsection{General setting} \label{subsec_general_setting}

In its most basic form, the task of classical hypothesis testing can be defined as follows. Let $X^n=X_1,\ldots, X_n$ be a string of $n$ random variables from a finite alphabet $\XX$, which might represent readings of a physical instrument, output signals of a channel, or something else entirely. We will denote as $\PP(\XX)$ the set of probability distributions on $\XX$. 


While we do not know the probability distribution that has generated the string, we are promised that one of the following two hypotheses holds:
\begin{itemize}
\item[$\mathrm{H}_0$.] Null hypothesis: $X^n \sim P_n$, for some $P_n\in \RR_n$;
\item[$\mathrm{H}_1$.] Alternative hypothesis: $X^n \sim Q_n$, for some $Q_n\in \SS_n$.
\end{itemize}
Our goal is to guess which option is the correct one. Here,
\bb
\RR_n,\,\SS_n \subseteq \PP(\XX^n)
\ee
are two a priori generic sets of probability distributions on $n$ copies of the alphabet $\XX$, which we can collect into two sequences $\RR = (\RR_n)_n$ and $\SS = (\SS_n)_n$. Our goal is to make a guess as to which hypothesis holds by looking only at the realisation of $X^n$.

Stated in these general terms, the problem subsumes many known scenarios, e.g.\ those corresponding to the following choices of the sets $\RR_n$ and $\SS_n$: 
\begin{itemize}
\item Simple i.i.d.\ hypotheses: 
\bb
\RR_n = \big\{P^{\otimes n}\big\}\, ,\qquad \SS_n = \big\{Q^{\otimes n}\big\}\, , 
\ee
for some fixed $P,Q$. These hypotheses are called `simple' because they comprise single probability distributions. 

\item Composite i.i.d.\ hypotheses: for some base sets $\RR_1,\SS_1\subseteq \PP(\XX)$, 
\bb
\RR = \RR_1^{\mathrm{iid}} &\coloneqq \big( \RR_1^{\otimes n,\,\mathrm{iid}}\big)_n\, \qquad \RR_1^{\otimes n,\,\mathrm{iid}} \coloneqq \big\{P^{\otimes n}:\, P\in \RR_1\big\}\, ,
\label{F_n_iid} 
\ee
and analogously for $\SS_1$. These hypotheses are non-simple, i.e.\ they are composite, because they comprise multiple probability distributions.

\item Composite arbitrarily varying hypotheses: for some base sets $\RR_1,\SS_1\subseteq \PP(\XX)$ of probability distributions on $\XX$,
\bb
\RR = \RR_1^{\mathrm{av}} &\coloneqq \big( \RR_1^{\otimes n,\,\mathrm{av}}\big)_n\, \qquad \RR_1^{\otimes n,\, \mathrm{av}} \coloneqq \big\{P_1\!\otimes\! \ldots \!\otimes P_n\!:\ P_1,\ldots, P_n\in \RR_1\big\}\, ,
\label{F_n_av}
\ee
and the same for $\SS_1$.
\end{itemize}

Naturally, hybrid settings are also possible --- for instance, scenarios in which one of the two hypotheses is simple i.i.d.\ while the other is composite i.i.d. However, it is even more interesting for us to consider broader classes of composite hypotheses, whose underlying probability distributions do not exhibit a product structure over the $X_i$ variables. We refer to such hypotheses as \emph{genuinely correlated}. (We are not interested in defining this term rigorously, but a possible definition would be as follows: a convex set of probability distributions over $\XX^n$ is genuinely correlated if some of its extreme points are not product distributions.) Our main result, Theorem~\ref{double_Stein_thm} below, applies to general classes of hypotheses and subsumes, as special cases, the simple i.i.d., composite i.i.d., and arbitrarily varying settings, as well as genuinely correlated ones.

A natural goal of hypothesis testing is to design suitable tests that minimise the error probabilities. There are two different types of errors:
\begin{itemize} 
\item \emph{Type I error}: $\mathrm{H}_0$ was correct, but we guessed $\mathrm{H}_1$.

\item \emph{Type II error}: $\mathrm{H}_1$ was correct, but we guessed $\mathrm{H}_0$.
\end{itemize}

In this context, a (probabilistic) \deff{test} is simply a function $A_n :\XX^n\to [0,1]$, where $A(x^n)$ represents the probability that we guess $\mathrm{H}_0$ upon seeing the string $x^n$. The worst-case probabilities of the two types of error are 
\bb
\alpha_n(A_n) \coloneqq \sup_{P_n\in \RR_n} \sum_{x^n\in \XX^n} \big(1-A_n(x^n)\big) P_n(x^n)\, ,\qquad \beta_n(A_n) \coloneqq \sup_{Q_n\in \SS_n} \sum_{x^n\in \XX^n} A_n(x^n) Q_n(x^n)\, ,
\ee
respectively, where the dependence on $\RR_n$ and $\SS_n$ is implicit. Note that the above error probabilities are left invariant if we replace $\RR_n$ and $\SS_n$ by their convex hulls. The minimal type II error probability for a given constraint on the type I error probability is thus obtained as
\bb
\beta_\e(\RR_n\|\SS_n) \coloneqq \inf\left\{ \beta_n(A_n):\ \ A_n:\XX^n\to [0,1],\ \alpha_n(A_n)\leq \e \right\} .
\label{beta_e_level_n}
\ee
In many applications, including coding theory and quantum information theory, it is of interest to minimise the rate of decay in $n$ of $\beta_\e(\RR_n\|\SS_n)$. We can formalise this by introducing the \deff{Stein exponent} between the hypotheses $\RR = (\RR_n)_n$ and $\SS = (\SS_n)_n$, defined as
\bb
\stein(\RR\|\SS) \coloneqq \lim_{\e\to 0^+} \liminf_{n\to\infty} \left\{ -\frac1n\log \beta_\e(\RR_n\|\SS_n)\right\} .
\label{Stein}
\ee

Our goal is to calculate the above limit with a limited set of assumptions on $\RR$ and $\SS$, and, in particular, for some interesting classes of genuinely correlated hypotheses. To this end, we begin by recalling an important set of axioms introduced by Brand\~{a}o and Plenio~\cite{BrandaoPlenio2, Brandao2010} (see also~\cite{Brandao-Gour}), which we therefore refer to as the \deff{Brand\~{a}o--Plenio axioms}.\footnote{We have adapted them to the classical setting, as the original axioms concern quantum states. The translation is however straightforward.}  Although we will \emph{not} rely on these axioms in our analysis, they have played a historically important role and provide a useful point of comparison. In terms of a generic sequence $(\FF_n)_n$ of sets $\FF_n\subseteq \PP(\XX)$, which might represent either of the two hypotheses, they can be stated as follows:

{\renewcommand{\theax}{BP\arabic{ax}}
\begin{ax} \label{BP_ax_convex_closed}
Each $\FF_n$ is a convex and closed subset of $\PP(\XX^{n})$.
\end{ax}

\begin{ax} \label{BP_ax_full_rank}
$\FF_1$ contains some probability distribution $R\in \FF_1$ with full support, i.e.\ such that $\min_{x\in \XX} R(x) \geq c > 0$.
\end{ax}

\begin{ax} \label{BP_ax_partial_traces}
The family $(\FF_{n})_n$ is closed under partial traces, i.e.\ if $n\in \N^+$ and $Q_n = Q_{X_1\ldots X_{n}}\in \FF_{n}$, then $Q_{X_1\ldots X_{n-1}} \in \FF_{n-1}$, where $Q_{X_1\ldots X_{n-1}}$ denotes the probability distribution obtained by discarding the last symbol.
\end{ax}

\begin{ax} \label{BP_ax_tensor_products}
The family $(\FF_{n})_n$ is closed under tensor products: if $Q_n\in \FF_n$ and $Q'_m\in \FF_m$, then the product distribution belongs to $\FF_{n+m}$, i.e.\ $Q_n\otimes Q'_m\in \FF_{n+m}$.
\end{ax}

\begin{ax} \label{BP_ax_permutations}
Each $\FF_n$ is closed under permutations: if $Q_n \in \FF_n$ and $\pi\in S_n$ denotes an arbitrary permutation of a set of $n$ elements, then also $Q_n \circ \pi \in \FF_n$, where $\pi$ acts on $\XX^n$ by permuting the string symbols. 
\end{ax}
}

For how operationally reasonable the Brand\~{a}o and Plenio axioms might be, we will not adopt them in this form, for at least three reasons. First, they do not subsume all of the above basic settings. Namely, a composite i.i.d.\ hypothesis of the form $\FF_n = \co\big(\FF_1^{\otimes n,\, \mathrm{iid}}\big)$, where $\FF_1\subseteq \PP(\XX)$ and $\FF_1^{\otimes n,\, \mathrm{iid}}$ is defined as in~\eqref{F_n_iid}, violates Axiom~\ref{BP_ax_tensor_products}, simply because the tensor product of different i.i.d.\ distributions is not itself i.i.d. Secondly, recent approaches to the generalised quantum Stein's lemma~\cite{Hayashi-Stein} have shown that some of these axioms on the alternative hypothesis can be removed --- specifically, Axioms~\ref{BP_ax_partial_traces} and~\ref{BP_ax_permutations} (see below). Thirdly, it has also been shown that the fact that the null hypothesis satisfies the Brand\~{a}o--Plenio axioms does not suffice to calculate the Stein exponent, even when the alternative hypothesis is simple and i.i.d.~\cite[Appendix~E.2]{generalised-Sanov}. For all these reasons, we will base our analysis on a somewhat different set of axioms (see Section~\ref{subsec_new_axioms}).

\subsection{Prior results} \label{subsec_prior_results}

In the case of two simple i.i.d.\ hypotheses represented by probability distributions $P$ and $Q$ (see above), the Chernoff--Stein lemma~\cite{stein_unpublished, chernoff_1956} states that
\bb
\stein(P\|Q) = D(P\|Q) \coloneqq \sum_{x\in \XX} P(x) \log \frac{P(x)}{Q(x)}\, ,
\label{KL}
\ee
where $D(P\|Q)$ is the \emph{relative entropy}, also called the \emph{Kullback--Leibler divergence}. Note that, with a slight abuse of notation, we identified
\bb
\stein(P\|Q) \coloneqq \stein\big(\big(\{P^{\otimes n}\}\big)_n\, \big\|\, \big(\{Q^{\otimes n}\}\big)_n\big)\, .
\ee

Several generalisations of the Chernoff--Stein lemma are known. Without any claim of completeness, here we list some of the most notable ones. To simplify the notation, we adopt the conventions from~\eqref{F_n_iid}--\eqref{F_n_av}. We also henceforth establish the following notation: for a function $\mathds{D} : \PP(\XX) \times \PP(\XX) \to \R \cup \{+\infty\}$ and any two sets $\RR_1,\SS_1\subseteq \PP(\XX)$, we set
\bb
\mathds{D}(\RR_1\|\SS_1) \coloneqq \inf_{P\in \RR_1,\ Q\in \SS_1} \mathds{D}(P\|Q)\, .
\label{divergence_sets}
\ee
We will also write compactly $\mathds{D}(\{P\}\|\SS_1) = \mathds{D}(P\|\SS_1)$ if, say, the first set is a singlet. 

\begin{enumerate}[(A)]
\item When the alternative hypothesis is simple but the null hypothesis is composite i.i.d., Sanov showed that~\cite{Sanov1957, Hoeffding1965}
\bb
\rel{\stein}{\RR_1^\mathrm{iid}}{Q} = D(\RR_1\|Q)
\label{Sanov_theorem}
\ee
for all closed sets $\RR_1\subseteq \PP(\XX)$. On the left-hand side the symbol $Q$ is again a shorthand for the sequence of simple hypotheses $\big(\{Q^{\otimes n}\}\big)_n$. 

\item It is also known that~\cite[Theorem~III.2]{Mosonyi2022}
\bb
\rel{\stein}{\RR_1^\mathrm{iid}}{\SS_1^\mathrm{iid}} = D(\RR_1\|\SS_1)
\label{Stein_iid_without_convexity}
\ee
for all pairs of finite sets of probability distributions $\RR_1,\SS_1\subseteq \PP(\XX)$. 

\item For any two closed sets $\RR_1,\SS_1\subseteq \PP(\XX)$, it holds that~\cite[Theorem~III.7]{Mosonyi2022}
\bb
\rel{\stein}{\RR_1^{\mathrm{av}}}{\SS_1^{\mathrm{av}}} &= \rel{\stein}{\co(\RR_1)^{\mathrm{av}}}{\co(\SS_1)^{\mathrm{av}}} \\
&= \rel{\stein}{\co(\RR_1)^{\mathrm{iid}}}{\co(\SS_1)^{\mathrm{iid}}} \\
&= \rel{D}{\co(\RR_1)}{\co(\SS_1)}\, .
\label{Stein_av}
\ee

\item In the case where the hypotheses are composite i.i.d.\ or arbitrarily varying, with convex and closed base sets $\RR_1,\SS_1\subseteq \PP(\XX)$, we have~\cite{Fangwei1996, Levitan2002, brandao_adversarial,Fang2025}\footnote{It is not difficult to show  that~\eqref{Stein_iid_or_av_convex} actually subsumes~\eqref{Stein_av}.}
\bb
\rel{\stein}{\RR_1^{\mathrm{a}}}{\SS_1^{\mathrm{b}}} = D(\RR_1\|\SS_1) \qquad \forall\ \mathrm{a},\mathrm{b}\in \{\mathrm{iid},\mathrm{av}\}\, .
\label{Stein_iid_or_av_convex}
\ee

\item \emph{Generalised classical Stein's lemma}~\cite{Hayashi-Stein, GQSL}: for a simple i.i.d.\ null hypothesis represented by $P$ and a composite (and possibly genuinely correlated) alternative hypothesis $\SS = (\SS_n)_n$ that satisfies Axioms~\ref{BP_ax_convex_closed},~\ref{BP_ax_full_rank}, and~\ref{BP_ax_tensor_products}, it holds that
\bb
\stein(P\|\SS) = D^\infty(P\|\SS) \coloneqq \lim_{n\to\infty} \frac1n\, \min_{Q_n\in \SS_n} D(P^{\otimes n}\| Q_n)\, .
\label{GSL}
\ee
This version of the result, which does not rely on Axioms~\ref{BP_ax_partial_traces} and~\ref{BP_ax_permutations}, is due to~\cite[Theorem~1]{Hayashi-Stein}. In~\cite{GQSL}, all the Brand\~{a}o--Plenio axioms are assumed instead, yielding a stronger statement that works even for a certain class of `almost i.i.d.'\ null hypotheses. Denoting with $\RR^{\mathrm{aiid}}_{r,P}$ the sequence of sets of probability distributions on the random variable $X^n = (X_1,\ldots, X_n)$ such that, for all $n$, at least $n-r$ among the $X_i$'s are independent and distributed according to $P$, it follows from~\cite[Theorem~32]{GQSL} that
\bb
\rel{\stein}{\RR^{\mathrm{aiid}}_{r,P}}{\SS} = D^\infty(P\|\SS)
\label{GSL_almost_iid}
\ee
for all $r\in \N^+$ and $P\in \PP(\XX)$, provided that $\SS = (\SS_n)_n$ satisfies Axioms~\ref{BP_ax_convex_closed}--\ref{BP_ax_permutations}. We will explain and strengthen this result in Section~\ref{subsec_GSL_almost_iid}. Note that~\eqref{GSL_almost_iid} is the first extension of the Chernoff--Stein lemma that deals with the case where \emph{both} hypotheses are genuinely correlated --- albeit, admittedly, this is more of a formal rather than a conceptual difference. 

\item \emph{Generalised classical Sanov theorem}~\cite{generalised-Sanov, Hayashi-Sanov-2}: In~\eqref{GSL}, we considered an i.i.d.\ null hypothesis and a general alternative hypothesis, but we can also investigate the opposite scenario in which $\RR = (\RR_n)_n$ is general, while $\SS = \big(\{Q^{\otimes n}\}\big)_n$ is i.i.d. However, it turns out that assuming only the Brand\~{a}o--Plenio axioms on $\RR$ does not yield a simple expression for the Stein exponent~\cite[Appendix~E.2]{generalised-Sanov}. To remedy this, one needs to impose an additional regularity assumption, and there is some arbitrariness in this choice. In~\cite{generalised-Sanov}, the choice fell on the following axiom, stated here for a general sequence $\FF = (\FF_n)_n$:

{\renewcommand{\theax}{BP\arabic{ax}} 
\begin{ax} \label{BP_ax_6}
The function $D^\infty(\cdot\|\FF)$ of~\eqref{GSL} is faithful on $\FF_1$, i.e.\ $D^\infty(P\|\FF) > 0$ whenever $P\notin \FF_1$.
\end{ax}
}

Now, if $\RR=(\RR_n)_n$ satisfies Axioms~\ref{BP_ax_convex_closed}--\ref{BP_ax_permutations} and also Axiom~\ref{BP_ax_6}, for all $Q\in \PP(\XX)$ we have~\cite[Theorem~8]{generalised-Sanov}
\bb
\stein(\RR\|Q) = D(\RR_1\|Q)\, .
\label{generalised_Sanov}
\ee
Notably, this shows that the Stein exponent is given by a single-letter expression, in spite of the fact that the null hypothesis can be genuinely correlated. This is in stark contrast with~\eqref{GSL}, which features a regularised expression on the right-hand side. A result similar to~\eqref{generalised_Sanov}, albeit relying on a slightly different set of assumptions, is obtained in~\cite[Theorem~7]{Hayashi-Sanov-2}.

\item Another result that deals with the case where both hypotheses are composite and genuinely correlated was obtained in~\cite[Theorem~25]{Fang2025}. The required assumptions, however, are rather restrictive~\cite[Assumption~24]{Fang2025}, and are typically not satisfied by many relevant sets of probability distributions. For example, composite i.i.d.\ hypotheses violate~\cite[Assumption~24(A.3)]{Fang2025}, and, perhaps more importantly, the families of classical probability distributions obtained by measuring fundamental sets of quantum states such as separable states~\cite{Werner} or stabiliser states~\cite{Veitch2014} violate~\cite[Assumption~24(A.4)]{Fang2025}.

\end{enumerate}

\section{Main results} \label{sec_main_results}

\subsection{New axioms} \label{subsec_new_axioms}

To formulate our general result on the calculation of classical Stein exponents, we start by discussing the axiomatic framework underpinning it. An important definition in this regard is the following.

\begin{Def} \label{depolarising_def}
Given a finite alphabet $\XX$, some $\delta\in [0,1]$, and a probability distribution $R\in \PP(\XX)$ on $\XX$, we denote with $\mm:\PP(\XX)\to \PP(\XX)$ the channel that replaces the input symbol with a symbol drawn from $R$ with probability $\delta$, and acts as the identity channel with probability $1-\delta$. In other words, 
\bb
\left(\mm(P)\right)(x) = (1-\delta) P(x) + \delta R(x)\, .
\label{depolarising}
\ee
\end{Def}

{\renewcommand{\thenewax}{\Roman{newax}}

We can use the above map $\mm$ to state our central assumption:

\begin{newax} \label{new_ax_depolarising}
There exists some $R\in \PP(\XX)$ such that, for all $n\in \N^+$ and all $Q_n\in \FF_n$:
\vspace{-1.2ex}
\begin{enumerate}[(a)] 
\item $\supp(Q_n) \subseteq \supp(R)^n$; and 
\item $\mm^{\otimes n}(Q_n)\in \FF_n$ for all $\delta\in [0,1]$, where $\mm$ is as in Definition~\ref{depolarising_def}. 
\end{enumerate}
\vspace{-1ex}
We denote by $c$ a constant with the property that $\min_{x\in \supp(R)} R(x) \geq c > 0$.
\end{newax}

In~\cite{GQSL}, the Brand\~{a}o--Plenio axioms are used to implement a procedure called \emph{blurring}, in which some noise is added to a probability distribution to make it more regular. One of the conceptual contributions of this paper is to recognise that the same effect can be achieved by means of the much weaker Axiom~\ref{new_ax_depolarising}, which, in the context of our work, should thus be viewed as a sort of condensed version of Axioms~\ref{BP_ax_convex_closed}--\ref{BP_ax_permutations}. We refer to the new blurring procedure that is enabled by Axiom~\ref{new_ax_depolarising} as \emph{symbol-by-symbol blurring}, to reference the fact that the blurring effect will be obtained by applying the map $\mm$ independently to every symbol of the input string --- equivalently, to every random variable. The new statement replacing the classical blurring lemma of~\cite[Lemma~9]{GQSL} is the forthcoming Lemma~\ref{sbs_blurring_lemma}. 

Among the immediate advantages of adopting Axiom~\ref{new_ax_depolarising} over the Brand\~{a}o--Plenio axioms, we note that the former can also cover the case of a composite i.i.d.\ hypothesis $\FF_n = \FF_1^{\otimes n,\, \mathrm{iid}}$ with convex base set $\FF_1$, defined as in~\eqref{F_n_iid}, which, as we saw before, violates Axiom~\ref{BP_ax_tensor_products}.

We now introduce a weakened version of Axiom~\ref{BP_ax_tensor_products}, followed by the original statement for completeness.

\begin{newax} \label{new_ax_tensor_powers}
$(\FF_n)_n$ is closed under tensor powers from $\FF_1$, in the sense that $Q_1^{\otimes n} \in \FF_{n}$ for all $Q_1\in \FF_1$ and all $n\in \N^+$.
\end{newax}

{\renewcommand{\thenewax}{\Roman{newax}+}
\setcounter{newax}{1}
\begin{newax} \label{new_ax_tensor_products_n_and_m}
The family $(\FF_{n})_n$ is closed under tensor products: if $Q_n\in \FF_n$ and $Q'_m\in \FF_m$, then $Q_n\otimes Q'_m\in \FF_{n+m}$.
\end{newax}
}

For completeness, we also report again Axiom~\ref{BP_ax_permutations} on the closedness of $\FF_n$ under permutations, unchanged, together with a stronger form that will be useful later on:

\begin{newax} \label{new_ax_closed_permutations}
Each $\FF_n$ is closed under permutations: if $Q_n \in \FF_n$ and $\pi\in S_n$ denotes an arbitrary permutation of a set of $n$ elements, then also $Q_n \circ \pi \in \FF_n$, where $\pi$ acts on $\XX^n$ by permuting the string symbols. 
\end{newax}


{\renewcommand{\thenewax}{\Roman{newax}{\large +}}
\setcounter{newax}{2}
\begin{newax} \label{new_ax_permutational_symmetry}
Each $\FF_n$ contains only permutationally symmetric probability distributions.
\end{newax}
}

As mentioned, Axioms~\ref{new_ax_depolarising}--\ref{new_ax_closed_permutations} are directly implied by the original Brand\~ao--Plenio axioms (Lemma~\ref{BP_ax_1_5_imply_new_ax_1_3_lemma}). However, as already mentioned, even Axioms~\ref{BP_ax_convex_closed}--\ref{BP_ax_permutations} together do not appear to suffice to solve the Stein exponent~\cite[Appendix~E.2]{generalised-Sanov}, making it necessary to introduce an additional assumption of a different nature. In~\cite{generalised-Sanov} we chose Axiom~\ref{BP_ax_6}; here, we distil this condition down to the following: if $Q_n\in \FF_n$ outputs strings whose type is close to $P$ `too often', i.e.\ with probability that vanishes \emph{sub-}exponentially for large $n$, then it must be the case that $P\in \FF_1$:

\begin{newax}[(Type stability)] \label{new_ax_type_stability}
If a probability distribution $P\in \PP(\XX)$ is such that there exists a constant $K>0$ with the property that, for all $\delta>0$,
\bb
\sup_{Q_n\in \FF_n} \pr_{X^n\sim Q_n}\!\big\{ \tfrac12 \|P_{X^n} - P \|_1 \leq \delta \big\} \geq \frac{1}{n^K}
\ee
holds for infinitely many values of $n$, then $P\in \FF_1$. Here, $P_{X^n}$ is the type of the string $X^n$.
\end{newax}

As a corollary of our results, we will see later that, in the presence of Axioms~\ref{BP_ax_convex_closed}--\ref{BP_ax_permutations}, the above Axiom~\ref{new_ax_type_stability} is implied by, and hence strictly weaker than, Axiom~\ref{BP_ax_6} (Lemma~\ref{BP_ax_1_6_imply_new_ax_1_4_lemma}).

This exhausts the list of axioms we will actually need in order to prove our doubly composite Chernoff--Stein lemma. However, it is helpful for the applications to state two more assumptions, which, when satisfied, make our life easier. The first one is inspired by the work by Piani~\cite{Piani2009}; it allows us to verify immediately the slightly obscure Axiom~\ref{new_ax_type_stability}:

\begin{newax} \label{new_ax_filtering}
There exists a classical channel $W:\XX\to \YY$ (with $|\YY|<\infty$) such that:
    \begin{enumerate}[A., topsep=0pt,itemsep=0pt]
    \item $W$ is informationally complete, in the sense that the output statistics determines the input completely;\footnote{In other words, $\rk \big(W(y|x)\big)_{x,y} = |\XX|$, where $\rk$ is the matrix rank.}
    \item $W$ is compatible with $(\FF_n)_n$, in the sense that for all $Q_n = Q_{X_1\ldots X_n}\in \FF_n$ and all $y_n\in \YY$, defining $Y_n\coloneqq W(X_n)$ we have $Q_{X_1\ldots X_{n-1}\,|\,Y_n=y_n} \in \FF_{n-1}$.
    \end{enumerate}
\end{newax}
}

\subsection{Main result: doubly composite Chernoff--Stein lemma} \label{subsec_main_result}

We are now ready to state our general, doubly composite Chernoff--Stein lemma:

\begin{thm}[(Doubly composite Chernoff--Stein lemma)] \label{double_Stein_thm}
Let $\XX$ be a finite alphabet, and let $\RR = (\RR_n)_n$ and $\SS = (\SS_n)_n$ be two families of sets of probability distributions $\RR_n,\SS_n\subseteq \PP(\XX^n)$, representing the null and the alternative hypotheses, respectively. Assume that:
\begin{enumerate}[(a)]
\item $\RR$ satisfies Axioms~\ref{new_ax_tensor_powers} and~\ref{new_ax_type_stability}; also, $\RR_1$ is topologically closed;
\item $\SS$ satisfies Axiom~\ref{new_ax_depolarising};
\item either $\RR$ satisfies Axiom~\ref{new_ax_permutational_symmetry}, or $\SS$ satisfies Axiom~\ref{new_ax_closed_permutations}.
\end{enumerate}
Then the Stein exponent, defined by~\eqref{Stein}, is given by
\bb
\stein(\RR\|\SS) = \inf_{P\in \RR_1} D^\infty(P\|\co(\SS)) = \inf_{P\in \RR_1} \liminf_{n\to\infty} \frac1n\, \rel{D}{P^{\otimes n}}{\co(\SS_n)}\, .
\label{double_Stein}
\ee
In particular, Eq.~\eqref{double_Stein} holds under assumption~(b), if in addition
\begin{enumerate}[(a')]
\item $\RR$ satisfies Axioms~\ref{new_ax_depolarising},~\ref{new_ax_tensor_powers}, and~\ref{new_ax_filtering}, all sets $\RR_n$ are convex, and $\RR_1$ is topologically closed; and 
\setcounter{enumi}{2}
\item either $\RR$ satisfies Axiom~\ref{new_ax_permutational_symmetry}, or both $\RR$ and $\SS$ satisfy Axiom~\ref{new_ax_closed_permutations}.
\end{enumerate}
\end{thm}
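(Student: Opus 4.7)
The plan is to match the two bounds defining the Stein exponent, with the converse being short and the achievability substantive. For the converse, I would fix any $P\in\RR_1$; Axiom~\ref{new_ax_tensor_powers} gives $P^{\otimes n}\in\RR_n$, so restricting the composite null to the singleton can only shrink $\beta_\e$, yielding $\beta_\e(\RR_n\|\SS_n)\ge \beta_\e(P^{\otimes n}\|\SS_n)=\beta_\e(P^{\otimes n}\|\co(\SS_n))$ (the convex hull of the alternative does not affect $\beta_\e$). A standard data-processing bound gives $-\log\beta_\e(P^{\otimes n}\|Q_n)\le (1-\e)^{-1}(D(P^{\otimes n}\|Q_n)+1)$ for every $Q_n\in\co(\SS_n)$; infimizing over $Q_n$, dividing by $n$, taking $\liminf_n$, and sending $\e\to 0^+$ yields $\stein(\RR\|\SS)\le D^\infty(P\|\co(\SS))$, and infimizing over $P\in\RR_1$ closes the converse.

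For the achievability, I would build a type-based composite test with two ingredients. First, using Axiom~\ref{new_ax_type_stability} together with a finite-cover argument on the compact set $U_\delta=\{P\in\PP(\XX): \tfrac{1}{2}\|P-P'\|_1\ge\delta\ \text{for all}\ P'\in\RR_1\}$, and combined with the polynomial type count $|\PP_n(\XX)|\le (n+1)^{|\XX|}$, I would show that for every $\delta>0$ the probability $\pr_{X^n\sim P_n}\{P_{X^n}\in U_\delta\}$ vanishes faster than any polynomial, uniformly in $P_n\in\RR_n$; this handles the contribution to the type-I error from strings whose empirical type is far from $\RR_1$. Second, I would fix a finite $\eta$-net $\{P_1,\ldots,P_k\}$ of the compact set $\RR_1$ and for each $P_i$ invoke Hayashi's version of the generalised Stein lemma (Eq.~\eqref{GSL}) applied to $P_i$ versus $\co(\SS)$ (which inherits Axiom~\ref{new_ax_depolarising} and thus the BP axioms relevant to Hayashi's theorem), obtaining tests $A_n^{(i)}$ with vanishing type-I error at $P_i^{\otimes n}$ and type-II exponent at least $D^\infty(P_i\|\co(\SS))-\eta$ against $\co(\SS_n)$. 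The combined test declares H$_0$ iff $P_{X^n}\notin U_\delta$ \emph{and} the test $A_n^{(i)}$ associated with the nearest net point accepts. The symmetry hypothesis~(c) is what allows the meta-lemma (Lemma~\ref{meta_lemma_perm_symm}) to bridge the i.i.d.\ type-I guarantees at each $P_i$ to general $P_n\in\RR_n$; continuity of $D^\infty(\cdot\|\co(\SS))$ on $\RR_1$ (from the full-support constant $c$ in Axiom~\ref{new_ax_depolarising}) then gives an exponent converging to $\inf_{P\in\RR_1} D^\infty(P\|\co(\SS))$ as $\eta,\delta\to 0$.

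The hard part is precisely this final gluing step: combining the type-based type-I control with the i.i.d.\ type-II guarantees from the net and extending them to the composite null via the meta-lemma. It is here that the symbol-by-symbol blurring innovation of the paper is essential, because Hayashi's generalised Stein's lemma does not apply directly to a genuinely correlated composite null. All other ingredients (data-processing, compactness, continuity, polynomial type count, $\eta$-net approximation) are classical.

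For the ``in particular'' clause, assumptions~(a')--(c') imply~(a)--(c): Axioms~\ref{new_ax_depolarising}, \ref{new_ax_tensor_powers}, and~\ref{new_ax_filtering}, together with convexity and closedness of $\RR_1$, imply Axiom~\ref{new_ax_type_stability} via a Piani-style argument using the informationally complete channel $W$ to extract the empirical type from a sequence of conditional marginals; and (c') reduces to (c) either trivially, when Axiom~\ref{new_ax_permutational_symmetry} is in force on $\RR$, or by noting that, when both $\RR$ and $\SS$ satisfy Axiom~\ref{new_ax_closed_permutations}, one may restrict to permutation-invariant tests without loss (such tests see only the symmetrizations $\bar{\RR}_n,\bar{\SS}_n$), which together with the convexity of $\RR_n$ in~(a') yields the desired reduction to the permutationally symmetric setting.
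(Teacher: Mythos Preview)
Your converse is fine and matches the paper's (Lemma~\ref{converse_double_Stein_lemma}). The achievability, however, contains a genuine gap and a misunderstanding of where the meta-lemma enters.

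\textbf{The Hayashi invocation fails.} You propose to obtain, for each net point $P_i$, a test with exponent $D^\infty(P_i\|\co(\SS))$ by applying Hayashi's generalised Stein lemma (Eq.~\eqref{GSL}) to $P_i$ versus $\co(\SS)$, claiming that ``$\co(\SS)$ inherits Axiom~\ref{new_ax_depolarising} and thus the BP axioms relevant to Hayashi's theorem''. This is false: Hayashi's result requires closure under tensor products (Axiom~\ref{BP_ax_tensor_products}, i.e.\ Axiom~\ref{new_ax_tensor_products_n_and_m}), and Axiom~\ref{new_ax_depolarising} does \emph{not} imply this. The paper explicitly notes (Section~\ref{subsec_main_result}) that Theorem~\ref{double_Stein_thm} with simple i.i.d.\ null is \emph{incomparable} to Hayashi's theorem precisely because the latter needs tensor-product closure. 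So even for the singleton null $\{P_i^{\otimes n}\}$ you cannot cite an existing result; the achievability for that case is already part of what Theorem~\ref{double_Stein_thm} establishes, and bootstrapping from it would be circular.

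\textbf{The meta-lemma is on the wrong side.} You say that assumption~(c) ``is what allows the meta-lemma (Lemma~\ref{meta_lemma_perm_symm}) to bridge the i.i.d.\ type-I guarantees at each $P_i$ to general $P_n\in\RR_n$''. But the meta-lemma is a statement about a family obeying Axiom~\ref{new_ax_depolarising}, which under~(a)--(c) is $\SS$, not $\RR$. Its conclusion is a bound of the form $\frac1n D(P^{\otimes n}\|\FF_n)\leq\lambda+\ldots$, i.e.\ it constrains the \emph{alternative}-side relative entropy; it says nothing about how a test calibrated at $P_i^{\otimes n}$ performs under a correlated $P_n\in\RR_n$.

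\textbf{What the paper actually does.} The paper does not construct a net-based test. It expresses the Stein exponent via the smooth max-relative entropy (Lemma~\ref{Stein_expression_with_D_max_lemma}), assumes for contradiction that $\liminf_n \frac1n D_{\max}^\e(\co(\RR_n)\|\co(\SS_n))<\lambda<\inf_{P\in\RR_1}D^\infty(P\|\co(\SS))$, and extracts $P_n\in\co(\RR_n)$, $P'_n$, $Q_n\in\co(\SS_n)$ with $\frac12\|P_n-P'_n\|_1\leq\e$ and $P'_n\leq e^{n\lambda}Q_n$. Condition~(c) is used to symmetrise one side (either $P_n$ is already symmetric, or $P'_n,Q_n$ are symmetrised using Axiom~\ref{new_ax_closed_permutations} on $\SS$), which produces a dominant type $V_n$ carrying weight $\gtrsim(n+1)^{-|\XX|}$ under both $P_n$ and $P'_n$. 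A subsequence gives $V_n\to P$; Axiom~\ref{new_ax_type_stability} on $\RR$ then forces $P\in\RR_1$. Finally, the full meta-lemma (Lemma~\ref{meta_lemma}, not the simplified Lemma~\ref{meta_lemma_perm_symm}) is applied to $\co(\SS)$ with $o_L,o_R$ chosen as $\log\frac{2(n+1)^{|\XX|}}{1-\e}$, yielding $\frac1n D(P^{\otimes n}\|\co(\SS_n))\leq\lambda+\phi(\xi)+\Delta$ and hence $D^\infty(P\|\co(\SS))\leq\lambda$, a contradiction. The meta-lemma thus controls the \emph{type-II} side, opposite to your proposed usage.

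For the ``in particular'' clause, the paper's reduction (a')+(c')$\Rightarrow$(a)+(c) goes through Proposition~\ref{verifying_type_stability_prop}: (c') ensures $\RR$ satisfies Axiom~\ref{new_ax_closed_permutations} in either case, and then (a') plus Axiom~\ref{new_ax_closed_permutations} feed into that proposition to deliver Axiom~\ref{new_ax_type_stability}. Your sketch of this part (the Piani-style argument) is in the right direction, but the ``restrict to permutation-invariant tests'' manoeuvre is not the mechanism used.
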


The proof can be found in Section~\ref{subsec_proof_double_Stein}. Here, we will instead discuss some notable aspects of the above result. First, it provides an explicit solution for the Stein exponent of a general class of hypothesis testing tasks, where both hypotheses are allowed to be both composite and genuinely correlated. It is interesting to observe that the requirements on the null hypothesis are in general stronger than those on the alternative hypothesis. As we mentioned already, this is somewhat unavoidable (see~\cite[Appendix~E.2]{generalised-Sanov}).

Secondly, we will see in Section~\ref{sec_classical_applications} that the assumptions of Theorem~\ref{double_Stein_thm}, while simple to state, are general enough to encompass as special cases --- and, in many case, refine --- almost all previously known results, including those presented in~(A), (C), (D), (E), and~(F) in Section~\ref{subsec_prior_results}.\footnote{Curiously, however, that in~(B) does not seem to fit into our framework. Also~\cite[Theorem~1]{Hayashi-Stein} and~\cite[Theorem~25]{Fang2025} are incomparable to our Theorem~\ref{double_Stein_thm}, as they rely on slightly different sets of assumptions. See the discussion below.}
It is instructive, for example, to examine what our Theorem~\ref{double_Stein_thm} predicts in the very special situation where $\RR_n = \big\{ P^{\otimes n} \big\}$ is simple and i.i.d. In this case, the only constraints imposed on $\SS$ is that it satisfies Axiom~\ref{new_ax_depolarising}, and Theorem~\ref{double_Stein_thm} markedly improves on the generalised (classical) Stein's lemma of~\cite[Theorem~4]{GQSL}, which hinges on all the Brand\~{a}o--Plenio axioms (Axioms~\ref{BP_ax_convex_closed}--\ref{BP_ax_permutations}). Most notably, it does away with the assumption of closure under permutations, showing that the blurring technique can circumvent it. The statement one obtains is, strictly speaking, incomparable with the classical case of~\cite[Theorem~1]{Hayashi-Stein}, which requires closure under tensor products and the existence of a full-support element in $\SS_1$, rather than Axiom~\ref{new_ax_depolarising}. The former assumptions, however, tend to be somewhat more stringent than Axiom~\ref{new_ax_depolarising} in practice: for instance, they are violated in the paradigmatic case of a composite i.i.d.\ hypothesis, which is not closed under tensor products. We will also see in Corollary~\ref{almost_iid_GSL_cor} that our techniques can improve upon~\cite[Theorem~32]{GQSL} and handle the general case of an `almost i.i.d.'\ null hypothesis, which does not seem amenable to the methods of~\cite{Hayashi-Stein}.

Thirdly, one may wish to compare our Theorem~\ref{double_Stein_thm} with the classical case of the quantum~\cite[Theorem~25]{Fang2025}, which likewise addresses the general setting where both hypotheses are composite and genuinely correlated. Although the two results rest on incomparable sets of assumptions, we already noted in Section~\ref{subsec_prior_results}(G) that~\cite[Assumption~24]{Fang2025} excludes many interesting families of probability distributions --- for instance, those obtained by measuring the sets of separable or stabiliser quantum states. Consequently, \cite[Theorem~25]{Fang2025} cannot be applied to the quantum hypothesis testing problems studied in the companion paper~\cite{doubly-comp-quantum}, which are instead amenable to an attack consisting of a quantum-to-classical reduction and, ultimately, Theorem~\ref{double_Stein_thm}.

Lastly, the formula~\eqref{double_Stein} for the Stein exponent involves a regularisation, i.e.\ an asymptotic limit over the number of symbols $n$. One might hope to remove this limit and obtain instead the single-letter distance $D(\RR_1\|\SS_1)$. However, we will show with a simple example (Example~\ref{additivity_violation_ex}) that, \emph{in general}, this is not possible. Indeed, we deem it unlikely that, in the very broad setting we consider here, a universal single-letter formula for the Stein exponent might exist.

\subsection{A key tool: the meta-lemma}

The fundamental tool we will use to prove Theorem~\ref{double_Stein_thm} is an improved version of the blurring technique from~\cite{GQSL}. Blurring, however, is not applied directly; instead, we first use it to establish an intuitive statement that we call a \emph{meta-lemma}. We include it here because we find it of independent conceptual interest. Roughly speaking, it asserts that any sequence of hypotheses $\FF = (\FF_n)_n$ satisfying Axiom~\ref{new_ax_depolarising} must have the following property: if some $Q_n \in \FF_n$ is `sufficiently flat' on a type class\footnote{See~\eqref{all_types} and~\eqref{type_class} for definitions related to the notion of type.} $T_{n,V}$, in the sense that $Q_n(x^n) \approx \frac{q_n}{|T_{n,V}|}$ for a significant fraction of strings $x^n \in T_{n,V}$, then
\bb
- \log q_n \gtrsim \rel{D}{V^{\otimes n}}{\FF_n}\, .
\ee
Typically, this entails that $q_n$ is exponentially suppressed unless $V$ is close to $\FF_1$. A technically precise statement is as follows:

\begin{lemma}[(Meta-lemma)] \label{meta_lemma}
For a finite alphabet $\XX$, let $(\FF_n)_n$ be a sequence of sets $\FF_n\subseteq \PP(\XX^n)$ that obeys Axiom~\ref{new_ax_depolarising} with respect to a probability distribution $R\in \PP(\XX)$ and a constant $c$ such that $\min_{x\in \supp(R)} R(x) \geq c > 0$. Take two real-valued functions $o_L(n)$ and $o_R(n)$ with the property that $\lim_{n\to\infty} \frac{o_L(n)}{n} = \lim_{n\to\infty} \frac{o_R(n)}{n} = 0$. For any $\Delta>0$, we can find $N = N(\Delta,c,o_L,o_R,|\XX|) \in \N^+$ such that, for all integers $n\geq N$, the following holds: given some $Q_n\in \FF_n$, an $n$-type $V\in \TT_n$, $P\in \PP(\XX)$ with $\supp(P)\subseteq \supp(R)$ and $\frac12 \|V-P\|_1\leq \xi\in (0,1/3)$, and some $\lambda\geq 0$, if
\bb
\left|\left\{x^n\in T_{n,V}:\ Q_n(x^n) \geq \frac{\exp[-n\lambda - o_L(n)]}{|T_{n,V}|} \right\}\right| \geq \exp[- o_R(n)]\, |T_{n,V}|\, ,
\label{meta_lemma_0}
\ee
then
\bb
\frac1n\, D\big(P^{\otimes n}\, \big\|\, \FF_n\big) \leq \lambda + \phi(\xi) + \Delta\, ,
\label{meta_lemma_1}
\ee
where $\phi$ is a continuous function that depends only on $c$ and $|\XX|$ and vanishes at $0$.
\end{lemma}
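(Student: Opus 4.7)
The plan is to exhibit a concrete element of $\FF_n$ whose relative entropy with $P^{\otimes n}$ realises the claimed bound, namely $Q_n' := \mm^{\otimes n}(Q_n)$ for some small $\delta>0$ to be chosen. This $Q_n'$ sits in $\FF_n$ by Axiom~\ref{new_ax_depolarising}(b), so it suffices to upper bound $D(P^{\otimes n}\|Q_n')$. Let $A_n := \{x^n \in T_{n,V} : Q_n(x^n) \geq |T_{n,V}|^{-1}\exp[-n\lambda - o_L(n)]\}$, which by~\eqref{meta_lemma_0} satisfies $|A_n|/|T_{n,V}| \geq \exp[-o_R(n)]$; the intuition is that, after blurring, $Q_n'$ should be comparable to $|T_{n,V}|^{-1}\exp[-n\lambda]$ at most strings typical under $P^{\otimes n}$.

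\textbf{Pointwise estimate from blurring.} Expanding $\mm^{\otimes n}$ as a mixture over ``replacement patterns'' (or, equivalently, invoking Lemma~\ref{sbs_blurring_lemma}), for every $x^n$ with $\supp(x^n) \subseteq \supp(R)$ and every $y^n \in A_n$ I would use the bound
\[Q_n'(x^n) \geq (1-\delta)^{n-d_H(x^n,y^n)}\,(\delta c)^{d_H(x^n,y^n)}\,Q_n(y^n),\]
and optimise over $y^n \in A_n$ using $Q_n(y^n) \geq |T_{n,V}|^{-1}\exp[-n\lambda - o_L(n)]$ to get
\[-\log Q_n'(x^n) \leq n\log\tfrac{1}{1-\delta} + d(x^n, A_n)\log\tfrac{1-\delta}{\delta c} + n\lambda + o_L(n) + \log|T_{n,V}|,\]
where $d(x^n,A_n) := \min_{y^n\in A_n} d_H(x^n,y^n)$. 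The task reduces to controlling $d(x^n,A_n)$ on a set of high $P^{\otimes n}$-measure.

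\textbf{Distance to $A_n$ on the typical set and finalisation.} A Sanov-type argument puts at least $1-\exp[-\Omega(n)]$ of the $P^{\otimes n}$-mass on strings whose type $U$ satisfies $\tfrac12\|U-P\|_1 \leq \xi'(n)\to 0$; such $x^n$ have type within $\xi+\xi'(n)$ of $V$, so swapping at most $n(\xi+\xi'(n))$ symbols yields a neighbour in $T_{n,V}$. Independently, Lemma~\ref{Alon_Spencer_lemma} turns the density bound $|A_n|/|T_{n,V}| \geq \exp[-o_R(n)]$ into the fact that the Hamming $r_n$-neighborhood of $A_n$ inside $T_{n,V}$ has fractional measure $\geq 1-\exp[-\Omega(n)]$ for some $r_n = o(n)$. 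A counting argument (showing that, among the exponentially many $y^n \in T_{n,V}$ reachable from a typical $x^n$, at least one falls in the Alon--Spencer-good portion) then yields $d(x^n,A_n) \leq n\xi + o(n)$ on all but an $\exp[-\Omega(n)]$ fraction of the $P^{\otimes n}$-mass. Substituting into the previous display, using $\log|T_{n,V}| \leq nH(V)$, the standard entropy-continuity bound $|H(V)-H(P)| \leq \omega(\xi,|\XX|)$, and bounding the negligible atypical contribution via the crude $Q_n'(x^n) \geq (\delta c)^n$ (valid on $\supp(R)^n$ since $\mm^{\otimes n}(Q_n) \geq \delta^n R^{\otimes n}$), I would arrive at
\[\tfrac{1}{n}D(P^{\otimes n}\|Q_n') \leq \lambda + \log\tfrac{1}{1-\delta} + \xi\log\tfrac{1-\delta}{\delta c} + \omega(\xi,|\XX|) + o(1).\]
Defining $\phi(\xi) := \omega(\xi,|\XX|) + \inf_{\delta\in(0,1/2]}\bigl[\log\tfrac{1}{1-\delta} + \xi\log\tfrac{1-\delta}{\delta c}\bigr]$ gives a continuous function of $\xi$ depending only on $c$ and $|\XX|$, with $\phi(0)=0$ (the near-optimum at $\delta\sim\xi$ gives value $\sim \xi + \xi\log(1/(\xi c))\to 0$). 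Choosing $\delta$ near-optimal for the given $\xi$ and $n$ large enough to make the $o(1)$ term at most $\Delta$ completes the proof.

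\textbf{Main obstacle.} The delicate part is the distance-to-$A_n$ control: transferring the weak density bound $|A_n|/|T_{n,V}| \geq \exp[-o_R(n)]$ into a \emph{sublinear} Hamming radius $r_n = o(n)$ requires the refined Alon--Spencer estimate of Lemma~\ref{Alon_Spencer_lemma}, which is strictly stronger than standard measure concentration; it is precisely this refinement that lets the symbol-by-symbol blurring method dispense with permutational symmetry on $\FF_n$ and also underwrites the sub-exponential tolerance $o_R(n)$ built into the hypothesis. A secondary bookkeeping issue is ensuring that $\phi$ depends only on $c$ and $|\XX|$; this is handled by absorbing all $\delta$-dependence into the infimum defining $\phi$, so that $\Delta$ only needs to swallow the residual $o(1)$ term.
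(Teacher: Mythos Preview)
Your overall architecture is right---blur $Q_n$, control pointwise values via Hamming distance to $A_n$, and split into typical/atypical $P$-strings---but the ``Distance to $A_n$'' step has a genuine gap. Lemma~\ref{Alon_Spencer_lemma} is a statement about an \emph{i.i.d.}\ measure on $\XX^n$, not about the uniform measure on $T_{n,V}$; you cannot feed it the density bound $|A_n|/|T_{n,V}|\geq \exp[-o_R(n)]$ directly. More importantly, your counting argument fails: the set of strings in $T_{n,V}$ reachable from a typical $P$-string by $\sim n\xi$ symbol swaps has size at most $\exp[O(n\xi\log(|\XX|/\xi))]$, an exponentially small fraction of $|T_{n,V}|\approx\exp[nH(V)]$; so even if the ``Alon--Spencer-bad'' portion of $T_{n,V}$ had constant fractional measure $\eta$, it could swallow all your reachable strings. (Your simultaneous claim that the bad portion has measure $\leq\exp[-\Omega(n)]$ while $r_n=o(n)$ is moreover inconsistent with the martingale bound, which forces $r_n\gtrsim\sqrt{n\ln(1/\eta)}$.) The paper instead lower-bounds $P^{\otimes n}$ of (a modification of) $A_n$ and applies Lemma~\ref{Alon_Spencer_lemma} directly with the measure $P^{\otimes n}$; since $P^{\otimes n}(T_{n,V})$ may vanish when $\supp(V)\not\subseteq\supp(P)$, a preliminary support-trimming step (replacing rare symbols to move $A_n$ into a nearby type class $T_{n,\bar V}$ with $\supp(\bar V)\subseteq\{x:P(x)>\nu\}$) is needed before Sanov can give a usable lower bound. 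This is the actual content of Lemma~\ref{sbs_blurring_lemma}, which your parenthetical ``or, equivalently'' misidentifies as a mere pointwise transition estimate.

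There is also a uniformity issue in your finalisation. Your atypical-set bound $Q_n'(x^n)\geq(\delta c)^n$ contributes $\eta\log\tfrac{1}{\delta c}$ to $\tfrac1n D$, where $\delta$ is the \emph{same} blurring parameter you later optimise to $\delta\sim\xi$. This forces $\eta$---and hence the threshold $N$ at which your $o(1)$ becomes $\leq\Delta$---to depend on $\xi$, contradicting $N=N(\Delta,c,o_L,o_R,|\XX|)$. The paper decouples the two roles of blurring: it first bounds $D_{\max}^\eta(P^{\otimes n}\|\FF_n)$ with $\eta$ fixed by $\Delta,c$ alone (optimising $\delta$ against $\xi$ inside Lemma~\ref{sbs_blurring_lemma}), and then converts $D_{\max}^\eta$ to $D$ via asymptotic continuity (Lemma~\ref{ac_relent_resource_lemma}), which amounts to a \emph{second} blurring with parameter tuned to $\eta$ rather than $\xi$, yielding a correction $\eta\log(1/c)+g(\eta)$ that is absorbed into $\Delta$ independently of $\xi$.
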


\subsection{A general single-letter formula for the Stein exponent} \label{subsec_main_result_applications}

While unavoidable in general, the regularised formula in~\eqref{double_Stein} is typically difficult to handle analytically. Our most notable application of Theorem~\ref{double_Stein_thm}, therefore, is to the setting where the alternative hypothesis is either composite i.i.d.\ or arbitrarily varying; in all those cases it is possible to remove the regularisation and give a single-letter formula for the Stein exponent:

\begin{thm} \label{stronger_generalised_Sanov_thm}
Let $\XX$ be a finite alphabet, $\SS_1\subseteq \PP(\XX)$ a set of probability distributions on $\XX$, and $\RR = (\RR_n)_n$ a family of sets $\RR_n\subseteq \PP(\XX^n)$. Assume that either
\begin{enumerate}
\item[(a)] $\RR$ satisfies Axioms~\ref{new_ax_tensor_powers} and~\ref{new_ax_type_stability}; also, $\RR_1$ is topologically closed; or
\item[(a')] $\RR$ satisfies Axioms~\ref{new_ax_depolarising},~\ref{new_ax_tensor_powers},~\ref{new_ax_closed_permutations}, and~\ref{new_ax_filtering}, all sets $\RR_n$ are convex, and $\RR_1$ is topologically closed.
\end{enumerate}
Then, with the notation in~\eqref{F_n_av}, the Stein exponent defined as in~\eqref{Stein} is given by
\bb
\rel{\stein}{\RR}{\SS_1^{\mathrm{av}}} &= D(\RR_1\|\co(\SS_1)) = \inf_{P\in \RR_1,\, Q\in \co(\SS_1)} D(P\|Q)\, .
\label{stronger_generalised_Sanov_av}
\ee
If, moreover, 
\begin{enumerate}[(a)] \setcounter{enumi}{1}
\item $\SS_1$ is star-shaped around some $R\in \SS_1$ such that $\supp(Q)\subseteq \supp(R)$ for all $Q\in \SS_1$,
\end{enumerate}
then it also holds that
\bb
\rel{\stein}{\RR}{\SS_1^{\mathrm{iid}}} &= D(\RR_1\|\SS_1) = \inf_{P\in \RR_1,\, Q\in \SS_1} D(P\|Q)\, , \label{stronger_generalised_Sanov_iid}
\ee
where the notation is defined in~\eqref{F_n_iid} and~\eqref{Stein}.
\end{thm}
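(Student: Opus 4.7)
The plan is to derive Theorem~\ref{stronger_generalised_Sanov_thm} from Theorem~\ref{double_Stein_thm}: the assumptions on $\RR$ match exactly, so it suffices to (i) verify Axioms~\ref{new_ax_depolarising} and~\ref{new_ax_closed_permutations} for a suitable (possibly convex-hulled and topologically closed) version of the alternative, and then (ii) collapse the regularised quantity $D^\infty(P\|\co(\SS))$ in~\eqref{double_Stein} to a single-letter formula.

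For the axiom verification, in the iid case the set $\SS := \SS_1^{\mathrm{iid}}$ already satisfies Axiom~\ref{new_ax_depolarising} with reference equal to the star-centre $R$: star-shape yields $\mm(Q) = (1-\delta)Q + \delta R \in \SS_1$, so $\mm^{\otimes n}(Q^{\otimes n}) = \mm(Q)^{\otimes n}\in\SS_1^{\otimes n,\mathrm{iid}}$, while Axiom~\ref{new_ax_closed_permutations} is automatic for tensor powers. The av case is subtler because depolarising a product distribution produces only a \emph{mixture} of products, so $\SS_1^{\mathrm{av}}$ itself fails Axiom~\ref{new_ax_depolarising}; since the Stein exponent is invariant under taking convex hulls and closures of the alternative, I would instead apply Theorem~\ref{double_Stein_thm} to $\overline{\co(\SS_1^{\mathrm{av}})}$. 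Choosing a reference $R \in \overline{\co(\SS_1)}$ with $\supp(R) \supseteq \bigcup_{Q\in\SS_1}\supp(Q)$ does the job, because every $(1-\delta)Q_{\alpha,i} + \delta R$ then lies in $\overline{\co(\SS_1)}$, and the tensor product expansion returns an element of $\overline{\co(\SS_1^{\otimes n,\mathrm{av}})}$. Permutational closure is evident throughout.

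Theorem~\ref{double_Stein_thm} therefore yields $\stein(\RR\|\SS) = \inf_{P\in\RR_1} D^\infty(P\|\co(\SS))$, and the work is to evaluate the right-hand side. The upper bounds follow from a ``diagonal'' ansatz --- $(Q^*)^{\otimes n}$ with $Q^*\in\co(\SS_1)$ in the av case (which expands into $\co(\SS_1^{\otimes n,\mathrm{av}})$) and $Q^*\in\SS_1$ in the iid case. The matching lower bounds diverge between the two settings. \emph{Av case}: the chain rule for relative entropy gives $D(P^{\otimes n}\|Q_n) = \sum_i \E_{P^{\otimes(i-1)}}[D(P\|Q_{n,i|X^{<i}})]$, and for $Q_n = \sum_k\mu_k\bigotimes_i Q_{k,i}$ an explicit computation shows that each conditional $Q_{n,i|x^{<i}}$ is a convex combination of the $Q_{k,i}\in\SS_1$, hence lies in $\co(\SS_1)$; convexity of $D(P\|\cdot)$ then forces each summand to be $\geq D(P\|\co(\SS_1))$. \emph{Iid case}: for $x^n$ of type $V$, the pointwise bound $\sum_k\mu_k Q_k^{\otimes n}(x^n) \leq \sup_{Q\in\SS_1} Q^{\otimes n}(x^n) = \exp[-n(H(V) + D(V\|\SS_1))]$ gives $D(P^{\otimes n}\|Q_n) \geq n\,\E_{X^n\sim P^{\otimes n}}[D(V_{X^n}\|\SS_1) - D(V_{X^n}\|P)]$, and the right-hand side tends to $nD(P\|\SS_1)+o(n)$ by the law of large numbers and continuity of $D(\cdot\|\SS_1)$ at $P$ (enforced by first replacing $\SS_1$ with $\overline{\SS_1}$, which alters neither $D(P\|\SS_1)$ nor the Stein exponent).

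The main technical obstacle I foresee is the iid lower bound: a convex mixture $\sum_k \mu_k Q_k^{\otimes n}$ might \emph{a priori} be closer to $P^{\otimes n}$ than any single tensor power $Q^{\otimes n}$ with $Q\in\SS_1$, and it is not immediately evident that this potential gain vanishes asymptotically. The saving observation is the elementary inequality $\sum_k \mu_k a_k \leq \max_k a_k$ applied pointwise, which caps the best mixture by the best single-alternative exponent; combined with the axiom verification and standard closure/continuity arguments, this completes the derivation of~\eqref{stronger_generalised_Sanov_av} and~\eqref{stronger_generalised_Sanov_iid}.
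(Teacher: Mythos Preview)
Your plan is correct and matches the paper's proof at the structural level: verify Axioms~\ref{new_ax_depolarising} and~\ref{new_ax_closed_permutations} for the alternative, invoke Theorem~\ref{double_Stein_thm}, then single-letterise $D^\infty(P\|\co(\SS))$. The paper carries out the last step via two dedicated lemmas (Lemmas~\ref{closed_F_1_iid_lemma} and~\ref{closed_F_1_av_lemma}), and your arguments for these are somewhat different and in one case more elementary.

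For the \emph{av} case, the paper obtains the lower bound $\frac1n D\big(P^{\otimes n}\big\|\co(\SS_1^{\otimes n,\mathrm{av}})\big)\geq D(P\|\co(\SS_1))$ by quoting a measured-relative-entropy argument from~\cite{Sutter2016}. Your chain-rule route is a valid and self-contained alternative: the conditionals of any finite mixture of products are indeed posterior-weighted mixtures of the factors, hence lie in $\co(\SS_1)$, and the bound follows directly from the definition of the infimum (your invocation of ``convexity of $D(P\|\cdot)$'' is superfluous here). Also, you do not actually need the topological closure of the alternative: the paper simply takes $R$ in the \emph{relative interior} of $\co(\SS_1)$, which already guarantees both the support condition and $\mm^{\otimes n}\big(\co(\SS_1^{\otimes n,\mathrm{av}})\big)\subseteq \co(\SS_1^{\otimes n,\mathrm{av}})$, avoiding the closure-and-limit bookkeeping altogether.

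For the \emph{iid} case, the paper's lower bound (Lemma~\ref{closed_F_1_iid_lemma}) goes through data processing to a binary divergence on the event $\{P_{X^n}\in B_\delta(P)\}$ combined with Sanov's theorem. Your pointwise bound $\sum_k\mu_k Q_k^{\otimes n}(x^n)\leq \exp[-n(H(V)+D(V\|\SS_1))]$ followed by $\E_{P^{\otimes n}}[D(V_{X^n}\|\SS_1)-D(V_{X^n}\|P)]\to D(P\|\SS_1)$ is also correct; note that only \emph{lower} semicontinuity of $V\mapsto D(V\|\overline{\SS_1})$ is needed (and Fatou suffices), together with the elementary fact that $\E_{P^{\otimes n}}[D(V_{X^n}\|P)]=o(1)$. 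Both arguments are of comparable length; yours avoids Sanov but has to track the empirical-type convergence explicitly.
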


The above result, proved in Section~\ref{subsec_proof_stronger_Sanov}, is quite flexible, and in Section~\ref{sec_classical_applications} we use it to deduce several useful corollaries that apply to different setting. See, for instance, Corollaries~\ref{both_composite_iid_or_av_cor} and~\ref{almost_iid_GSL_cor}.

\section{Preliminary considerations} \label{sec_preliminaries}

\subsection{Notation}

In what follows, we will denote as $\PP(\XX)$ the set of probability distributions on a given finite alphabet $\XX$, whose cardinality we will denote by $|\XX|$. The \deff{support} of some $P\in \PP(\XX)$ is defined as
\bb
\supp(P) \coloneqq \left\{ x\in \XX:\ P(x) > 0\right\} .
\ee
We will write $X\sim P$ to signify that a random variable $X$ is distributed according to the law $P\in \PP(\XX)$. The set of strings of symbols in $\XX$ of length $n\in \N^+$ will be denoted as $\XX^n$. If $X^n \coloneqq (X_1,\ldots, X_n)$ is the collection of $n$ independent and identically distributed (i.i.d.) random variables on $\XX$, and each $X_i$ follows the law $X_i\sim P$, we will also write that $X^n\sim P^{\otimes n}$. (For the i.i.d.\ extension of $P$, we prefer to use the notation $P^{\otimes n}$ instead of the more common $P^n$, so as to better highlight the difference with generic correlated distributions over $\XX^n$, which will be denoted as $P_n$, $Q_n$, etc.)

The \deff{entropy} of a probability distribution $P\in \PP(\XX)$ is defined by
\bb
H(P) \coloneqq - \sum_x P(x) \log P(x)\, ,
\label{entropy}
\ee
with the convention that $0\log 0 = 0$. The \deff{total variation distance} between two probability distributions $P, Q \in \PP(\XX)$ is defined as
\bb
\frac12 \|P - Q\|_1 \coloneqq \frac12 \sum_{x \in \XX} |P(x) - Q(x)|\, .
\label{TV_distance}
\ee
For two finite sets $\XX,\YY$, a \deff{channel} from $\XX$ to $\YY$ is a map $W:\PP(\XX)\to \PP(\YY)$ represented by a conditional probability distribution (that is, a stochastic matrix) $W(y|x)$.

An \deff{$n$-type} (or simply a type) over $\XX$ is a distribution $V \in \PP(\XX)$ such that $n V(x) \in \N$ for all $x \in \XX$~\cite{CSISZAR-KOERNER}. The set of all $n$-types is then given by
\bb
\TT_n \coloneqq \left\{ \left(\tfrac{k(x)}{n}\right)_{x \in \XX}\! :\  k(x) \in \N\ \ \forall x \in \XX,\ \sumno_{x \in \XX} k(x) = n \right\} .
\label{all_types}
\ee
A standard counting argument shows that
\bb
|\TT_n| = \binom{n + |\XX| - 1}{|\XX| - 1} \leq (n + 1)^{|\XX|}\, .
\label{counting_types}
\ee
The \deff{type of a string} $x^n\in \XX^n$ is the probability distribution $P_{x^n}\in \PP(\XX)$ defined by
\bb
P_{x^n}(x) \coloneqq \frac{N(x|x^n)}{n}\, ,\qquad \text{$N(x|x^n) \coloneqq$ number of times $x$ appears in $x^n$,}
\label{type_of_sequence}
\ee
for all $x\in \XX$. We denote as $T_{n,V}$ the \deff{type class} associated with a type $V\in \TT_n$, defined by
\bb
T_{n,V} \coloneqq \left\{ x^n \in \XX^n : P_{x^n} = V \right\} .
\label{type_class}
\ee
Type classes are invariant under permutations, and any string in $T_{n,V}$ can be obtained from any another by permuting symbols. Simple combinatorial considerations reveal that the cardinality of any $T_{n,V}$ can be calculated as
\bb
|T_{n,V}| = \frac{n!}{\prod_{x\in \XX} (nV(x))!}\, .
\label{cardinality_type_class}
\ee
It is often convenient to have handier estimates for~\eqref{cardinality_type_class}. A standard one is the following~\cite[Lemma~2.3]{CSISZAR-KOERNER}:
\bb
(n+1)^{-|\XX|} \exp\left[ n H(V) \right] \leq |T_{n,V}| \leq \exp\left[ n H(V) \right] ,
\label{cardinality_type_class_estimate}
\ee
where $H(V)$ is the entropy of $V$, as defined in~\eqref{entropy}.

\subsection{Relative entropies}

The most important of all relative entropies is the Kullback--Leibler divergence~\cite{Kullback-Leibler}, which we already encountered in~\eqref{KL}. In what follows, however, we will need also several related quantities. The first one is the \emph{max-relative entropy}, defined for any pair $P,Q\in \PP(\XX)$ as~\cite{Datta08}
\bb
D_{\max}(P\|Q) \coloneqq \inf\left\{ \lambda\in \R:\ P(x) \leq \exp[\lambda]\, Q(x)\ \ \forall\ x\in \XX \right\} .
\label{D_max}
\ee

\begin{note}
As is customary in information theory, we adopt a base-agnostic notation in which $\log$ and $\exp$ are the inverse functions of each other, but can be taken with respect to any base that is strictly larger than $1$.
\end{note}

It is elementary to show that
\bb
D(P\|Q) \leq D_{\max}(P\|Q)\, .
\label{D_vs_D_max}
\ee
In general, this inequality can be very loose. To try to tighten it, one can consider a variation of~\eqref{D_max} known as the \deff{smooth max-relative entropy}, defined, for $P,Q\in \PP(\XX)$ and $\e\in [0,1]$, by~\cite[Definition~3]{Datta-alias}
\bb
D_{\max}^\e(P\|Q) \coloneqq \inf_{P'\in \PP(\XX):\ \frac12\|P-P'\|_1\leq \e} D_{\max}(P'\|Q)\, .
\label{D_max_epsilon}
\ee

When Axiom~\ref{new_ax_filtering} is applicable, it is also useful to define the \deff{filtered relative entropy}. 
Here, `filtering' refers to the application of a channel $W$ with input alphabet $\XX$ (and arbitrary finite output alphabet). For $P,Q\in \PP(\XX)$, one defines
\bb
D^{W}(P\|Q) \coloneqq D\big(W(P)\,\big\|\, W(Q)\big)\, .
\label{general_filtered_relent}
\ee

\subsection{Hypothesis testing}

Following the discussion in Section~\ref{subsec_general_setting}, we now formalise the notation on hypothesis testing. Given two sets $\RR_1,\SS_1\subseteq \PP(\XX)$ representing the null and the alternative hypotheses, respectively,  the minimal type II error probability for a given threshold $\e\in (0,1)$ on the type I error probability can be defined as
\bb
\beta_\e(\RR_1\|\SS_1) \coloneqq \inf\left\{ \sup_{Q\in \SS_1} \sum_x A(x) Q(x):\ \ A:\XX\to [0,1],\ \ \sup_{P\in \RR_1} \sum_x \big(1-A(x)\big) P(x) \leq \e \right\} .
\label{beta_e}
\ee
The presence of the sets $\RR_1$ and $\SS_1$ inside the infimum makes this quantity slightly cumbersome to work with. We can remedy this problem by means of~\cite[Lemma~31]{Fang2025}, which shows that\footnote{The first three equalities in~\eqref{convexification_sets_beta_e} hold by inspection, because $\sup_{Q\in \SS_1} \sum_x A(x) Q(x) = \sup_{Q\in \co(\SS_1)} \sum_x A(x) Q(x)$ and $\sup_{P\in \RR_1} \sum_x \big(1-A(x)\big) P(x) = \sup_{P\in \co(\RR_1)} \sum_x \big(1-A(x)\big) P(x)$.}
\bb
- \log \beta_\e(\RR_1\|\SS_1) &= - \log \beta_\e(\co(\RR_1)\|\SS_1) \\
&= - \log \beta_\e(\RR_1\|\co(\SS_1)) \\
&= - \log \rel{\beta_\e}{\co(\RR_1)}{\co(\SS_1)} \\
&= \rel{D_H^\e}{\co(\RR_1)}{\co(\SS_1)}\, ,
\label{convexification_sets_beta_e}
\ee
where $\co$ denotes the convex hull, the rightmost side is defined according to the convention in~\eqref{divergence_sets}, and the \deff{hypothesis testing relative entropy} is given by~\cite{Buscemi2010}
\bb
D_H^\e(P\|Q) \coloneqq - \log \inf\left\{ \sumno_x A(x) Q(x):\ \ A:\XX\to [0,1],\ \ \sumno_x \big(1-A(x)\big) P(x) \leq \e \right\}
\label{D_H}
\ee
for all $P,Q\in \PP(\XX)$ and $\e\in (0,1)$. In particular, from~\eqref{Stein} and~\eqref{convexification_sets_beta_e} we deduce that
\bb
\stein(\RR\|\SS) &= \stein(\co(\RR)\|\SS) \\
&= \stein(\RR\|\co(\SS)) \\
&= \stein(\co(\RR)\|\co(\SS)) \\
&= \lim_{\e\to 0^+} \liminf_{n\to\infty} \frac1n\, \rel{D_H^\e}{\co(\RR_n)}{\co(\SS_n)}\, .\label{convexify_Stein}
\ee
where, with a slight abuse of notation, for a sequence $\FF = (\FF_n)_n$ of sets $\FF_n\subseteq \PP(\XX^n)$ we defined
\bb
\co(\FF) \coloneqq \big(\co(\FF_n)\big)_n\, .
\label{convexify_sequence_sets}
\ee
We record here the elementary but useful fact that, with the notation in~\eqref{F_n_av} and~\eqref{convexify_sequence_sets}, it holds that
\bb
\co\big(\FF_1^\mathrm{av}\big) = \co\big(\co(\FF_1)^\mathrm{av}\big)\, .
\label{convex_hull_goes_inside_av}
\ee

Perhaps surprisingly, the hypothesis testing relative entropy~\eqref{D_H} and the smooth max-relative entropy~\eqref{D_max_epsilon} are deeply related. The \deff{weak/strong converse duality}, first discovered in~\cite{Tomamichel2013, Anshu2019} and later refined in~\cite[Eq.~(59)]{tight-relations}, states that
\bb
D_{\max}^{1-\e}(P\|Q) + \log\frac1\e \leq D_H^\e(P\|Q) \leq D_{\max}^{1-\e-\mu}(P\|Q) + \log\frac1\mu
\label{weak_strong_converse_duality}
\ee
for all $P,Q\in \PP(\XX)$ and $0<\mu \leq 1-\e < 1$. Due to this fundamental relation, it is possible to use~\eqref{convexify_Stein} to express the Stein exponent in an alternative way, as previously observed many times, e.g.\ in~\cite[p.~24]{gap}. It is on this new expression that our entire approach to hypothesis testing hinges, and because of its importance we record it as an independent lemma. 

\begin{lemma} \label{Stein_expression_with_D_max_lemma}
For a finite alphabet $\XX$, let $\RR = (\RR_n)_n$ and $\SS = (\SS_n)_n$ be two sequences of hypotheses $\RR_n,\SS_n\subseteq \PP(\XX^n)$. Then, the corresponding Stein exponent, defined by~\eqref{Stein}, can be expressed as
\begin{align}
\stein(\RR\|\SS) &= \lim_{\e\to 1^-} \liminf_{n\to\infty} \frac1n\, \rel{D_{\max}^\e}{\co(\RR_n)}{\co(\SS_n)} \label{Stein_expression_with_D_max_1} \\
&= \inf_{\e\in (0,1)} \liminf_{n\to\infty} \frac1n\, \rel{D_{\max}^\e}{\co(\RR_n)}{\co(\SS_n)}\, . 
\label{Stein_expression_with_D_max_2}
\end{align}
\end{lemma}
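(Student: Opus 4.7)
The plan is to derive the identity directly from the definition of the Stein exponent given in Eq.~\eqref{convexify_Stein} together with the weak/strong converse duality~\eqref{weak_strong_converse_duality} between the smooth max-relative entropy and the hypothesis testing relative entropy. The key observation is that the additive constants $\log(1/\e)$ and $\log(1/\mu)$ appearing in~\eqref{weak_strong_converse_duality} are independent of the distributions, so the inequalities lift verbatim to the set-indexed quantities, and vanish after dividing by $n$ and sending $n\to\infty$.

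First, I would record the elementary fact that, because the constants in~\eqref{weak_strong_converse_duality} do not depend on $P$ or $Q$, taking an infimum over $P\in \co(\RR_n)$ and $Q\in \co(\SS_n)$ on both sides yields
\begin{equation*}
\begin{aligned}
\rel{D_{\max}^{1-\e}}{\co(\RR_n)}{\co(\SS_n)} + \log\tfrac1\e \leq \rel{D_H^\e}{\co(\RR_n)}{\co(\SS_n)} \leq \rel{D_{\max}^{1-\e-\mu}}{\co(\RR_n)}{\co(\SS_n)} + \log\tfrac1\mu
\end{aligned}
\end{equation*}
for every $0<\mu\leq 1-\e<1$. Dividing by $n$ and taking $\liminf_{n\to\infty}$ kills the constant terms, so
\begin{equation*}
\begin{aligned}
\liminf_n \tfrac1n\, D_{\max}^{1-\e}\big(\co(\RR_n)\big\|\co(\SS_n)\big) \leq \liminf_n \tfrac1n\, D_H^\e\big(\co(\RR_n)\big\|\co(\SS_n)\big) \leq \liminf_n \tfrac1n\, D_{\max}^{1-\e-\mu}\big(\co(\RR_n)\big\|\co(\SS_n)\big) .
\end{aligned}
\end{equation*}

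Next, I would combine this with~\eqref{convexify_Stein} and the substitution $\e'=1-\e$. For the lower bound on $\stein(\RR\|\SS)$, sending $\e\to 0^+$ on the left-hand side gives directly $\stein(\RR\|\SS)\geq \lim_{\e'\to 1^-}\liminf_n \tfrac1n D_{\max}^{\e'}(\co(\RR_n)\|\co(\SS_n))$, with the limit existing because $\e\mapsto D_{\max}^\e$ is non-increasing in the smoothing parameter. For the matching upper bound, given any target smoothing value $\e'\in (0,1)$, I would choose $\mu=1-\e-\e'$ (which is positive for $\e<1-\e'$), giving $1-\e-\mu=\e'$ and hence
\begin{equation*}
\begin{aligned}
\liminf_n \tfrac1n\, D_H^\e\big(\co(\RR_n)\big\|\co(\SS_n)\big) \leq \liminf_n \tfrac1n\, D_{\max}^{\e'}\big(\co(\RR_n)\big\|\co(\SS_n)\big) ,
\end{aligned}
\end{equation*}
so that sending $\e\to 0^+$ yields $\stein(\RR\|\SS)\leq \liminf_n \tfrac1n D_{\max}^{\e'}(\co(\RR_n)\|\co(\SS_n))$ for every fixed $\e'\in(0,1)$. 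Taking the infimum over $\e'$, or equivalently $\e'\to 1^-$ by monotonicity, closes the sandwich and proves both~\eqref{Stein_expression_with_D_max_1} and~\eqref{Stein_expression_with_D_max_2} at once (the two expressions coincide because $\e'\mapsto \liminf_n \tfrac1n D_{\max}^{\e'}$ is non-increasing, so its infimum over $(0,1)$ equals its left limit at $1$).

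I do not expect any real obstacle: the entire argument is a mechanical transfer of a pointwise duality to set-valued divergences, the decisive input being that the penalty terms $\log(1/\e)$ and $\log(1/\mu)$ are $n$-independent. The only place requiring some care is the choice $\mu=1-\e-\e'$ that converts the upper-bound duality (whose smoothing parameter is the tricky quantity $1-\e-\mu$) into a clean statement for the target smoothing value $\e'$, and the use of monotonicity of $D_{\max}^{\e'}$ in $\e'$ to identify $\lim_{\e'\to 1^-}$ with $\inf_{\e'\in(0,1)}$.
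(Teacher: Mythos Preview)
Your proof is correct and follows essentially the same route as the paper's: lift the weak/strong converse duality~\eqref{weak_strong_converse_duality} to the set-valued divergences, divide by $n$, take $\liminf_{n\to\infty}$ so that the $n$-independent penalty terms vanish, and then send $\e\to 0^+$ together with the monotonicity of $\e'\mapsto D_{\max}^{\e'}$. The only cosmetic difference is the choice of $\mu$: the paper sets $\mu=\e$ (so the sandwich has smoothing parameters $1-\e$ and $1-2\e$, both tending to $1^-$), whereas you fix the target $\e'$ and set $\mu=1-\e-\e'$, which yields~\eqref{Stein_expression_with_D_max_2} directly.
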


\begin{proof}
For~\eqref{Stein_expression_with_D_max_1}, it suffices to plug~\eqref{weak_strong_converse_duality} into~\eqref{convexify_Stein} (setting, for example, $\mu = \e$, with $\e\in [0,1/2]$) and change variable $\e\mapsto 1-\e$. For~\eqref{Stein_expression_with_D_max_2}, we further observe that $\e\mapsto D_{\max}^\e(\co(\RR_n)\|\co(\SS_n))$ is a monotonically non-increasing function, as one sees by inspecting directly~\eqref{D_max_epsilon}.
\end{proof}

On a different note, a simple application of the data processing inequality under the action of the channel defined by an arbitrary test $A:\XX\to [0,1]$ as in~\eqref{D_H} shows that $D(P\|Q) \geq \rel{D_2}{\sumno_x A(x) P(x)}{\sumno_x A(x) Q(x)}$, where on the right-hand side we introduced the \deff{binary relative entropy}
\bb
D_2(p\|q) \coloneqq p \log \frac{p}{q} + (1-p) \log \frac{1-p}{1-q}\, .
\label{binary_relative_entropy}
\ee
Writing
\bb
D_2(p\|q) = -h_2(p) + p \log \frac{1}{q} + (1-p) \log \frac{1}{1-q} \geq -1 + p \log \frac{1}{q}\, ,
\ee
where
\bb
h_2(x) \coloneqq -x\log x - (1-x) \log(1-x)
\label{binary_entropy}
\ee
is the \deff{binary entropy}, and optimising over tests $A$ yields the handy inequality
\bb
D(P\|Q) \geq -1 + (1-\e) D_H^\e(P\|Q)\, .
\label{lower_bound_D_with_D_H}
\ee
This can be immediately used to establish a general converse bound on the Stein exponent. To this end, we need to introduce a further definition. For two sequences of sets $\RR_n,\SS_n\subseteq \PP(\XX^n)$, define their \deff{regularised relative entropy} as
\bb
D^\infty(\RR\|\SS) \coloneqq \liminf_{n\to\infty} \frac1n\, D(\RR_n\|\SS_n) = \liminf_{n\to\infty} \frac1n\, \inf_{P_n\in \RR_n,\ Q_n\in \SS_n} D(P_n\|Q_n)\, .
\label{regularised_relent}
\ee
Now, we have the following.

\begin{lemma} \label{converse_double_Stein_lemma}
For a finite alphabet $\XX$, let $\RR = (\RR_n)_n$ and $\SS = (\SS_n)_n$ be two sequences of hypotheses $\RR_n,\SS_n\subseteq \PP(\XX^n)$. Then, using the notation in~\eqref{convexify_sequence_sets} and~\eqref{regularised_relent}, we have
\bb
\stein(\RR\|\SS) \leq D^\infty(\co(\RR)\|\co(\SS))\, .
\label{converse_double_Stein}
\ee
\end{lemma}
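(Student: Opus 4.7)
The plan is to derive the bound by combining the expression for the Stein exponent in terms of the hypothesis testing relative entropy $D_H^\e$, obtained in~\eqref{convexify_Stein}, with the inequality~\eqref{lower_bound_D_with_D_H} relating $D$ and $D_H^\e$. No new technology beyond what is stated earlier in Section~\ref{sec_preliminaries} should be required.

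Concretely, I would proceed as follows. Fix $\e\in (0,1)$ and $n\in \N^+$. For any $P_n\in \co(\RR_n)$ and $Q_n\in \co(\SS_n)$, inequality~\eqref{lower_bound_D_with_D_H} gives
\bb
D(P_n\|Q_n) \geq -1 + (1-\e)\, D_H^\e(P_n\|Q_n) \geq -1 + (1-\e)\, \rel{D_H^\e}{\co(\RR_n)}{\co(\SS_n)}\, ,
\ee
where the second step uses the obvious monotonicity of $D_H^\e$ under taking the infimum over the two arguments, in accordance with the convention~\eqref{divergence_sets}. Taking the infimum over $P_n\in \co(\RR_n)$ and $Q_n\in \co(\SS_n)$ on the left-hand side, dividing by $n$, and letting $n\to\infty$, we obtain
\bb
D^\infty(\co(\RR)\|\co(\SS)) = \liminf_{n\to\infty} \frac1n\, D(\co(\RR_n)\|\co(\SS_n)) \geq (1-\e) \liminf_{n\to\infty} \frac1n\, \rel{D_H^\e}{\co(\RR_n)}{\co(\SS_n)}\, ,
\ee
the $-1/n$ term vanishing in the limit.

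To conclude, I would take $\e\to 0^+$ on the right-hand side and invoke~\eqref{convexify_Stein}, which identifies the resulting expression with $\stein(\RR\|\SS)$. Since the left-hand side does not depend on $\e$, this yields~\eqref{converse_double_Stein}. There is no real obstacle here: the only subtlety is checking that the order of the operations $\inf_{P_n,Q_n}$, $\liminf_n$, and $\lim_{\e\to 0^+}$ is handled consistently, which is automatic because $(1-\e)$ factors cleanly out and the bound holds uniformly in $n$ and in the choice of $P_n, Q_n$.
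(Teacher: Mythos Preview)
Your proposal is correct and follows exactly the route the paper takes: the paper's proof is the one-line remark ``It follows immediately by combining~\eqref{convexify_Stein} and~\eqref{lower_bound_D_with_D_H},'' and you have simply spelled out that combination in detail. There is nothing to add.
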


\begin{proof}
It follows immediately by combining~\eqref{convexify_Stein} and~\eqref{lower_bound_D_with_D_H}. 
\end{proof}


\subsection{Asymptotic continuity}

Entropic functionals of random variables with finite range are typically continuous; moreover, they exhibit a strong form of uniform continuity known as `asymptotic continuity'. As the simplest example of this behaviour, consider the entropy. For an arbitrary $c\in (0,1]$, let us define the auxiliary function $F_c:[0,\infty) \to \R$ as
\bb
F_c(x) \coloneqq \left\{ \begin{array}{ll} x \log \tfrac1c + h_2(x) & \quad \text{if $x\leq \tfrac{1}{c+1}$,} \\[1.5ex] \log \left(1+\tfrac{1}{c}\right) & \quad \text{if $x>\tfrac{1}{c+1}$.} \end{array} \right.
\label{F_c}
\ee
For every fixed $c\in (0,1]$, $F_c$ is uniformly continuous on $[0,\infty)$; furthermore, $F_c(0)=0$. We list some elementary properties of this function in Appendix~\ref{app_proof_variational_F_lemma}; here, instead, we use it to state a useful continuity bound for the entropy, reported below. (A slightly more refined --- and in fact optimal --- version can be found in~\cite{Audenaert2007}.)

\begin{lemma}[{(Asymptotic continuity of the entropy~\cite{Fannes1973, Audenaert2007})}] \label{continuity_entropy_lemma}
Let $P,Q\in \PP(\XX)$ be two probability distributions on the finite alphabet $\XX$. If $\frac12 \|P-Q\|_1\leq \e\in [0,1]$, then
\bb
\big| H(P) - H(Q) \big| \leq F_{1/|\XX|}(\e)\, ,
\label{continuity_entropy}
\ee
where $F_{1/|\XX|}$ is defined by~\eqref{F_c}.
\end{lemma}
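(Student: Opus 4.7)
The plan is to prove the bound through an explicit coupling of $P$ and $Q$ combined with the two-sided estimate relating $H$ to classical mixing. First, I would set $d \coloneqq \frac12 \|P-Q\|_1 \leq \e$ and use the elementary Jordan-type decomposition to write
\[
P = (1-d)\, R + d\, P', \qquad Q = (1-d)\, R + d\, Q',
\]
where $R, P', Q' \in \PP(\XX)$ are obtained from the positive and negative parts of $P-Q$ (with the cases $d=0$ or $d=1$ handled trivially). This reduces the problem to controlling how entropies differ when only a weight-$d$ piece of the distribution is changed.

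Next, I would invoke two standard properties of Shannon entropy. Concavity yields $H((1-d)R + dS) \geq (1-d) H(R) + d\, H(S)$ for any $S \in \PP(\XX)$, while the grouping (chain-rule) identity applied to the mixing index gives the matching upper bound $H((1-d)R + dS) \leq (1-d)H(R) + d\, H(S) + h_2(d)$. Applying the upper bound to $P$ and the lower bound to $Q$, subtracting, and symmetrizing, I obtain
\[
\big|H(P) - H(Q)\big| \leq d\, \big|H(P') - H(Q')\big| + h_2(d) \leq d \log |\XX| + h_2(d),
\]
since any entropy on $\XX$ is bounded by $\log|\XX|$. With $c = 1/|\XX|$ this reads $|H(P) - H(Q)| \leq d \log(1/c) + h_2(d)$, which is precisely the first branch of $F_c(d)$.

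Finally, I would reconcile this with the piecewise definition of $F_c$. For $d \leq 1/(c+1)$ the right-hand side is literally $F_c(d)$. For $d > 1/(c+1)$, a one-line computation of the derivative shows that $d \mapsto d\log(1/c) + h_2(d)$ attains its maximum exactly at $d = 1/(c+1)$ with value $\log(1+1/c)$, so the coupling bound already yields $|H(P)-H(Q)| \leq \log(1+1/c) = F_c(d)$ (alternatively, one may simply bound $|H(P)-H(Q)| \leq \log|\XX| \leq \log(1+1/c)$ in this regime). Since $F_c$ is non-decreasing and $d \leq \e$, monotonicity gives $F_c(d) \leq F_c(\e)$, as desired. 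The whole argument is essentially Fannes' inequality in a slightly relaxed form (with $\log|\XX|$ in place of the optimal $\log(|\XX|-1)$ from Audenaert); consequently there is no substantial obstacle, only the minor bookkeeping needed to recognise $d\log(1/c) + h_2(d)$ as matching the piecewise shape of $F_c$.
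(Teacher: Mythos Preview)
The paper does not actually supply a proof of this lemma; it is stated as a known result with citations to Fannes and Audenaert, so there is nothing to compare against. Your argument is correct: the decomposition $P=(1-d)R+dP'$, $Q=(1-d)R+dQ'$ is valid for $d=\tfrac12\|P-Q\|_1$, the concavity/grouping bounds give $|H(P)-H(Q)|\le d\log|\XX|+h_2(d)$, and your reconciliation with the piecewise form of $F_{1/|\XX|}$ together with its monotonicity (which the paper records in Lemma~\ref{variational_F_lemma}(a)) yields the stated bound.
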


Asymptotic continuity is also a property of the relative entropy distance functional, provided that the set from which we are calculating the distance is somewhat `well behaved'. Here, `well behaved' may have many different technical meanings. The following result, essentially due to~\cite[Proposition~13]{continuity-via-integral}, deals with the case where the set obeys Axiom~\ref{new_ax_depolarising}. It differs from known results in the literature, such as the original one by Donald~\cite{Donald1999} and the subsequent generalisations and refinements by Christandl~\cite[Proposition~3.23]{MatthiasPhD} and Winter~\cite[Lemma~7]{tightuniform}, because it does not require convexity. With the convexity assumption, the filtered case has been essentially solved in~\cite[Proposition~3]{rel-ent-sq}, with improvements in~\cite[Theorem~11]{Schindler2023} and~\cite[Lemma~S12]{catboundent}.

\begin{lemma}[{(Asymptotic continuity of the relative entropy distance functional, without convexity~\cite[Proposition~13]{continuity-via-integral})}] \label{ac_relent_resource_lemma}
For a finite alphabet $\XX$, let $\FF=(\FF_n)_n$ a sequence of sets of probability distributions $\FF_n\subseteq \PP(\XX^n)$ that obeys Axiom~\ref{new_ax_depolarising} with respect to $R\in \FF_1$ and $c>0$. Then, for all $n\in \N^+$ and all $P_n,P'_n\in \PP(\XX^n)$ with $\supp(P_n) \subseteq \supp(R)^{n}$ and $\frac12 \|P_n-P'_n\|_1\leq \e$, it holds that
\bb
D(P_n\|\FF_n) \leq D(P'_n\|\FF_n) + n\e \log\tfrac1c + ng(\e) + h_2(\e)\, ,
\label{ac_relent_resource}
\ee
where $h_2$ is the binary entropy defined in~\eqref{binary_entropy}, and
\bb
g(x) \coloneqq (x+1)\log(x+1) - x\log x\, .
\label{g}
\ee
\end{lemma}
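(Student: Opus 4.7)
The plan is to fix an arbitrary $Q_n \in \FF_n$ and construct a suitably blurred neighbour $Q'_n \in \FF_n$ that absorbs the mass mismatch between $P_n$ and $P'_n$. Concretely, I set $\delta \coloneqq \e/(1+\e)$ and define $Q'_n \coloneqq \mm^{\otimes n}(Q_n)$, which lies in $\FF_n$ by Axiom~\ref{new_ax_depolarising}. Expanding the product action of the symbol-by-symbol blurring channel as a binomial sum, one reads off the two pointwise lower bounds
\bb
Q'_n(x^n) \geq (1-\delta)^n\, Q_n(x^n)\, , \qquad Q'_n(x^n) \geq (\delta c)^n \quad \forall\, x^n \in \supp(R)^n\, ,
\ee
corresponding respectively to the ``no resampling'' and ``full resampling'' corners of the expansion (using $R(x)\geq c$ on $\supp(R)$ for the second).

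Next, I invoke the standard sub-probability coupling: letting $\alpha \coloneqq 1 - \tfrac12 \|P_n - P'_n\|_1 \geq 1 - \e$, there exist probability distributions $\bar\omega_n, \mu_n^\pm$ on $\XX^n$ with $P_n = \alpha \bar\omega_n + (1-\alpha)\mu_n^+$ and $P'_n = \alpha \bar\omega_n + (1-\alpha)\mu_n^-$, and with the crucial support property $\supp(\bar\omega_n), \supp(\mu_n^+) \subseteq \supp(P_n) \subseteq \supp(R)^n$. Joint convexity of the relative entropy in its first argument gives
\bb
D(P_n\|Q'_n) \leq \alpha\, D(\bar\omega_n\|Q'_n) + (1-\alpha)\, D(\mu_n^+\|Q'_n)\, ,
\ee
and I attack each summand with a different lower bound on $Q'_n$. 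The first is handled by $Q'_n \geq (1-\delta)^n Q_n$ pointwise, which after taking logs yields $D(\bar\omega_n\|Q'_n) \leq -n\log(1-\delta) + D(\bar\omega_n\|Q_n)$; the second is handled by $Q'_n \geq (\delta c)^n$ on $\supp(R)^n \supseteq \supp(\mu_n^+)$ together with $H(\mu_n^+) \geq 0$, yielding $D(\mu_n^+\|Q'_n) \leq n \log\tfrac{1}{\delta c}$.

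The step that brings $D(P'_n\|Q_n)$ into the picture is Donald's identity applied to the decomposition $P'_n = \alpha \bar\omega_n + (1-\alpha) \mu_n^-$, combined with the grouping inequality $H(P'_n) \leq \alpha H(\bar\omega_n) + (1-\alpha) H(\mu_n^-) + h_2(\alpha)$ for the Shannon entropy; after dropping the nonnegative term $(1-\alpha) D(\mu_n^-\|Q_n)$ these give
\bb
\alpha\, D(\bar\omega_n\|Q_n) \leq D(P'_n\|Q_n) + h_2(\alpha)\, .
\ee
Assembling all pieces and using $1-\alpha \leq \e$ yields
\bb
D(P_n\|Q'_n) \leq D(P'_n\|Q_n) + h_2(\alpha) - \alpha n \log(1-\delta) + (1-\alpha) n \log\tfrac{1}{\delta c}\, .
\ee
A direct calculation shows that the substitution $\delta = \e/(1+\e)$ — which is in fact the optimal choice — collapses the $\delta$-dependent terms into exactly $n g(\e) + n\e\log\tfrac1c$, with $g$ as in~\eqref{g}. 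Bounding $h_2(\alpha) \leq h_2(\e)$ and taking the infimum over $Q_n\in \FF_n$ then produces~\eqref{ac_relent_resource}.

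The main technical obstacle is essentially bookkeeping: one must verify that every relative entropy appearing is finite, which follows from $Q'_n(x^n) \geq (\delta c)^n > 0$ on $\supp(R)^n \supseteq \supp(P_n)$, while the possible escape of $\supp(\mu_n^-)$ outside $\supp(R)^n$ is immaterial since $\mu_n^-$ enters the argument only through the grouping step (where it is controlled by $H$ directly). A minor caveat is that the bound $h_2(\alpha)\leq h_2(\e)$ applies cleanly for $\e \leq \tfrac12$; outside this range the same strategy still produces an inequality of the same form, with at most a harmless absolute constant in place of $h_2(\e)$.
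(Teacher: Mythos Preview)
Your proof is correct and follows essentially the same route as the paper's: both pass to the blurred element $Q'_n=\mm^{\otimes n}(Q_n)\in\FF_n$, exploit the two pointwise lower bounds $Q'_n\geq (1-\delta)^n Q_n$ and $Q'_n\geq (\delta c)^n$ on $\supp(R)^n$, and optimise at $\delta=\e/(1+\e)$ to produce $g(\e)$. The only difference is packaging: the paper invokes the ready-made continuity estimate $D(P_n\|\sigma)\leq D(P'_n\|\sigma)+\e\,D_{\max}(P_n\|\sigma)+h_2(\e)$ from~\cite{continuity-via-integral} as a black box, whereas you unroll that step explicitly via the maximal-coupling decomposition and Donald's identity; your caveat about $h_2(\alpha)\leq h_2(\e)$ for $\e>1/2$ applies equally to the cited inequality and is immaterial for the applications.
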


The proof is reported for completeness in Appendix~\ref{app_ac_relent_resource}.

\section{Proof of the main result} \label{sec_proof_main_result}

In this section we present the proofs of our main results, Theorem~\ref{double_Stein_thm} and the closely related Theorem~\ref{stronger_generalised_Sanov_thm}.

\subsection{A combinatorial detour}

In what follows, we will often employ the notion of \deff{Hamming distance} between two strings $x^n,y^n\in \XX^n$; this is defined as
\bb
d(x^n,y^n) \coloneqq \left| \left\{ i\in \{1,\ldots,n\}:\ x_i \neq y_i \right\} \right| .
\label{Hamming_distance}
\ee
A key technical tool in our analysis is Lemma~\ref{Alon_Spencer_lemma} below, which gives a relatively refined estimate of the size of Hamming distance neighbourhoods of sets in $\XX^n$ with large probability weight according to some i.i.d.\ probability distribution. We start by recalling the following well-known inequality:

\begin{lemma}[{(Azuma's inequality~\cite[Theorem~7.2.1]{ALON-SPENCER})}] \label{Azuma_inequality_lemma}
Let $Z_0,\ldots,Z_m$ be a martingale, with $|Z_{i+1}-Z_i|\leq 1$ for all $i=0,\ldots,m-1$. For all $\lambda\geq 0$,
\bb
\pr\left\{ Z_m > Z_0 + \lambda \sqrt{m}\right\} < e^{-\lambda^2/2}\, .
\label{Azuma_inequality}
\ee
\end{lemma}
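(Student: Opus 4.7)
The plan is to invoke the standard exponential-moment (Chernoff) argument. Set $X_i \coloneqq Z_i - Z_{i-1}$ for $i = 1,\ldots,m$, so that the martingale property gives $\E[X_i \mid Z_0,\ldots,Z_{i-1}] = 0$, and by hypothesis $|X_i| \leq 1$. Writing $W \coloneqq Z_m - Z_0 = \sum_{i=1}^{m} X_i$ and applying Markov's inequality to $e^{tW}$ with $t > 0$, I would obtain
$$\pr\{W > \lambda \sqrt{m}\} \leq e^{-t\lambda\sqrt{m}}\, \E\!\left[e^{tW}\right].$$

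To control the moment generating function, I would use the textbook convexity trick: for $y \in [-1,1]$, the map $y\mapsto e^{ty}$ lies below the chord joining $(-1,e^{-t})$ and $(1,e^t)$, so $e^{tX_i} \leq \tfrac{1-X_i}{2}\,e^{-t} + \tfrac{1+X_i}{2}\,e^{t}$. Taking conditional expectations and using $\E[X_i\mid Z_0,\ldots,Z_{i-1}] = 0$ gives $\E[e^{tX_i}\mid Z_0,\ldots,Z_{i-1}] \leq \cosh(t) \leq e^{t^2/2}$, with the second bound strict for $t>0$ (comparing Taylor series coefficient by coefficient). Iterating via the tower property then yields $\E[e^{tW}] \leq e^{mt^2/2}$, strictly so whenever $t>0$.

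Substituting back and optimising over $t>0$, the minimiser of $-t\lambda\sqrt{m} + mt^2/2$ is $t = \lambda/\sqrt{m}$, which delivers $\pr\{W > \lambda\sqrt{m}\} < e^{-\lambda^2/2}$ for every $\lambda > 0$. The boundary case $\lambda = 0$ reduces to the statement $\pr\{W > 0\} < 1$, which holds because $\E[W]=0$ forbids $W$ from being strictly positive almost surely. There is no substantive obstacle here; this is a textbook martingale concentration inequality, recorded as a reference tool for the Hamming-neighbourhood size estimates that follow. The only minor bookkeeping point is preserving the \emph{strict} inequality throughout, which is ensured by the strict bound $\cosh(t) < e^{t^2/2}$ for $t>0$.
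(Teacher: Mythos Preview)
The paper does not supply a proof of this lemma; it is quoted directly from Alon--Spencer as a black-box tool. Your argument is the standard exponential-moment proof and is correct, including the handling of the strict inequality via $\cosh t < e^{t^2/2}$ for $t>0$ and the separate treatment of $\lambda=0$.
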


We are now ready to establish a variation on~\cite[Theorem~7.5.3]{ALON-SPENCER}. Essentially, our goal is to show that subsets of $\XX^n$ that include a sizeable fraction of all the strings that are typical for some $P\in \PP(\XX)$ have `large' neighbourhoods with respect to the Hamming distance.

\begin{lemma} \label{Alon_Spencer_lemma}
Let $\XX$ be a finite alphabet, $P\in \PP(\XX)$ a probability distribution on $\XX$, $n\in \N^+$ a positive integer, and $\YY_n\subseteq \XX^n$ a set of strings of length $n$ over $\XX$. If 
\bb
P^{\otimes n}(\YY_n) \geq \e \in (0,1)\, ,
\ee
then, for all $\eta\in (0,1)$ and all
\bb
K \geq \sqrt{2 n \ln(1/\e)} + \sqrt{2n \ln(1/\eta)}\, ,
\ee
we have
\bb
P^{\otimes n}\left(B_d\left(\YY_n, K\right)\right) \geq 1-\eta\, ,
\ee
where
\bb
B_d\left(\YY_n, K\right) \coloneqq \left\{ x^n\in \XX^n:\ \min_{y^n\in \YY_n} d(x^n,y^n)\leq K \right\} ,
\label{B_d_definition}
\ee
and $d(x^n,y^n)$ is the Hamming distance~\eqref{Hamming_distance}.
\end{lemma}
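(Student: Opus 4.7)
The plan is to run a standard Doob martingale argument together with Azuma's inequality (Lemma~\ref{Azuma_inequality_lemma}), applied to the Hamming distance to the set $\YY_n$, in the style of~\cite[Theorem~7.5.3]{ALON-SPENCER}. The key observation is that the function
\[
f : \XX^n \to \{0, 1, \ldots, n\},\qquad f(x^n) \coloneqq \min_{y^n \in \YY_n} d(x^n, y^n)
\]
is $1$-Lipschitz with respect to the Hamming distance, i.e.\ changing a single coordinate of its argument changes its value by at most $1$. This makes it an ideal target for a martingale concentration argument, because $B_d(\YY_n, K) = \{x^n : f(x^n) \leq K\}$.

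Concretely, I would let $X^n = (X_1, \ldots, X_n) \sim P^{\otimes n}$ and define the Doob martingale
\[
Z_i \coloneqq \E\big[f(X^n) \,\big|\, X_1, \ldots, X_i\big],\qquad i = 0, 1, \ldots, n,
\]
so that $Z_0 = \E[f(X^n)]$ is deterministic and $Z_n = f(X^n)$. The $1$-Lipschitz property of $f$, combined with the independence of the $X_i$, gives $|Z_{i+1} - Z_i| \leq 1$ almost surely (a coupling argument: replacing one variable in the conditional expectation changes $f$ by at most $1$). Lemma~\ref{Azuma_inequality_lemma}, applied to both $(Z_i)$ and $(-Z_i)$, then yields for every $\lambda \geq 0$
\[
\pr\{Z_n > Z_0 + \lambda \sqrt{n}\} < e^{-\lambda^2/2},\qquad \pr\{Z_n < Z_0 - \lambda \sqrt{n}\} < e^{-\lambda^2/2}.
\]

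Now I would extract two complementary conclusions. First, the hypothesis $P^{\otimes n}(\YY_n) \geq \e$ means $\pr\{f(X^n) = 0\} \geq \e$, while the second Azuma bound with $\lambda = \sqrt{2\ln(1/\e)}$ gives $\pr\{Z_n < Z_0 - \sqrt{2n\ln(1/\e)}\} < \e$. If we had $Z_0 > \sqrt{2n\ln(1/\e)}$, the event $\{Z_n \leq 0\}$ would be contained in $\{Z_n < Z_0 - \sqrt{2n\ln(1/\e)}\}$, contradicting $\pr\{Z_n \leq 0\} \geq \e$; hence $Z_0 \leq \sqrt{2n \ln(1/\e)}$. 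Second, the upper-tail Azuma bound with $\lambda = \sqrt{2\ln(1/\eta)}$ gives
\[
\pr\big\{f(X^n) > Z_0 + \sqrt{2n\ln(1/\eta)}\big\} < \eta.
\]
Combining these, any $K \geq \sqrt{2n\ln(1/\e)} + \sqrt{2n\ln(1/\eta)}$ satisfies $P^{\otimes n}(B_d(\YY_n, K)) = \pr\{f(X^n) \leq K\} \geq 1 - \eta$, which is the claim.

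There is no real obstacle here beyond being careful about two small points: (i) verifying the bounded-difference property $|Z_{i+1} - Z_i| \leq 1$ rigorously from the $1$-Lipschitz constant and the product structure of $P^{\otimes n}$ (one writes $Z_{i+1} - Z_i$ as an average over the conditional distribution of $X_{i+2}, \ldots, X_n$ of differences $f(\ldots, X_{i+1}, \ldots) - \E_{X'_{i+1}} f(\ldots, X'_{i+1}, \ldots)$, each of which is bounded by $1$); and (ii) correctly translating a lower bound on $\pr\{f(X^n) = 0\}$ into an upper bound on $\E[f(X^n)]$, which is the non-routine half of the argument. Both are standard and together deliver the lemma cleanly.
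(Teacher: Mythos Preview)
Your proposal is correct and follows essentially the same approach as the paper: the same Doob martingale on the Hamming distance to $\YY_n$, the same bounded-difference verification, and the same two-sided application of Azuma's inequality. The only cosmetic difference is that the paper bounds $Z_0$ by taking the limit $\lambda \to (Z_0/\sqrt{n})^-$ rather than by your contradiction argument, but the content is identical.
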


\begin{proof}
The proof is very similar in spirit to that of~\cite[Theorem~7.5.3]{ALON-SPENCER}. We repeat the argument here in order to have a self-contained treatment.

For an arbitrary $x^n\in \XX^n$, set
\bb
\Delta(x^n) \coloneqq \min_{y^n\in \YY_n} d(x^n, y^n)\, .
\ee
Draw a random string $X^n\in \XX^n$ according to the i.i.d.\ probability distribution $P^{\otimes n}$. For $i=0,\ldots,n$, consider the non-negative random variables
\bb
Z_i \coloneqq F_i\big(X^i\big) \coloneqq \E_{X'_{i+1}\ldots X'_n\sim P^{\otimes (n-i)}}\, \Delta\big(X_1, \ldots, X_i, X'_{i+1},\ldots, X'_n\big)\, ,
\label{Alon_Spencer_lemma_proof_eq2}
\ee
which are obtained by exposing the first $i$ coordinates of $X^n$, grouped in the string $X^i \coloneqq (X_1,\ldots, X_i)$, and considering the others as random and drawn in an i.i.d.\ fashion from $P$. Note that each $Z_i$ can take on only finitely many (non-negative) values, $Z_0=\mu$ is a constant equal to the average distance of an i.i.d.\ string drawn from $P$ to $\YY_n$, and $Z_n = \Delta(X^n)$ is the actual distance of our initial (random) string from $\YY_n$. Furthermore, for all $i=0,\ldots, n-1$ and all collections $z^i \coloneqq (z_0, \ldots, z_i)$ of possible values of the variables $Z^i \coloneqq (Z_0, \ldots, Z_i)$ (so that necessarily $z_0=\mu$), a little thought reveals that
\bb
\E\left[ Z_{i+1}\big|\,Z^i=z^i\right] = z_i\, ,
\label{Alon_Spencer_lemma_proof_eq3}
\ee
entailing that $Z_0,\ldots, Z_n$ is a martingale. To verify~\eqref{Alon_Spencer_lemma_proof_eq3} rigorously, the simplest way is to consider the random variable $X^n|Z^i=z^i$, with probability distribution
\bb
P_{X^n|Z^i=z^i}(x^n) = P_{X^i|Z^i=z^i}(x^i) \prod_{j=i+1}^n P(x_j)\, .
\ee
Here, we observed that the last $n-i$ symbols of $X^n$ are independent of $Z^i$. We can now write
\begin{align}
\E\left[ Z_{i+1}\big|\,Z^i=z^i\right] &= \sum_{x^n} P_{X^n|Z^i=z^i}(x^n)\, F_{i+1}\big(x^{i+1}\big) \nonumber \\
&= \sum_{x^i} P_{X^i|Z^i=z^i}\big(x^i\big) \sum_{x_{i+1},\ldots, x_n} \left(\prodno_{j=i+1}^n P(x_j)\right) F_{i+1}\big(x^{i+1}\big) \nonumber \\
&= \sum_{x^i} P_{X^i|Z^i=z^i}\big(x^i\big) \sum_{x_{i+1}} P(x_{i+1})\, F_{i+1}\big(x^{i+1}\big) \\
&= \sum_{x^i} P_{X^i|Z^i=z^i}\big(x^i\big)\, F_i\big(x^i\big) \nonumber \\
&= z_i\, , \nonumber
\end{align}
where the equality on the second-to-last line holds because $\sum_{x_{i+1}} P(x_{i+1})\, F_{i+1}\big(x^{i+1}\big) = F_i\big(x^i\big)$ by construction (see~\eqref{Alon_Spencer_lemma_proof_eq2}), and that on the last line is a consequence of the fact that the only strings $x^i$ contributing to the sum are those for which $F_j\big(x^i\big) =z_j$ for all $j=0,\ldots,i$, and in particular they must satisfy $F_i\big(x^i\big) =z_i$. This establishes~\eqref{Alon_Spencer_lemma_proof_eq3}, proving that $Z_0,\ldots, Z_n$ is indeed a martingale.

Now, for all $i=0,\ldots,n-1$,
\bb
\left| Z_{i+1} - Z_i\right| &= \left| \E_{X'_{i+1}\ldots X'_n\sim P^{\otimes (n-i)}} \left( \Delta\big(X_1\ldots X_{i+1} X'_{i+2}\ldots X'_n\big) - \Delta(X_1\ldots X_i X'_{i+1}\ldots X'_n)\right) \right| \\
&\leq \E_{X'_{i+1}\ldots X'_n\sim P^{\otimes (n-i)}} \left| \Delta\big(X_1\ldots X_{i+1} X'_{i+2}\ldots X'_n\big) - \Delta(X_1\ldots X_i X'_{i+1}\ldots X'_n) \right| \\
&\leq 1\, ,
\ee 
simply because, by the triangle inequality, changing one symbol in a string can increase its Hamming distance from $\YY_n$ by at most $1$.

By Azuma's inequality (Lemma~\ref{Azuma_inequality_lemma}) applied to the martingales $Z_0,\ldots, Z_n$ and $-Z_0,\ldots,-Z_n$, for all $\lambda>0$ we have
\bb
\pr\left\{ Z_n < \mu - \lambda\sqrt{n} \right\} &< e^{-\lambda^2/2}\, , \\
\pr\left\{ Z_n > \mu + \lambda\sqrt{n} \right\} &< e^{-\lambda^2/2}\, .
\label{Alon_Spencer_lemma_proof_eq5}
\ee
For all $\lambda < \mu/\sqrt{n}$, the first inequality yields
\bb
\e \leq P^{\otimes n} (\YY_n) = \pr \left\{ Z_n = 0 \right\} = \pr \left\{ Z_n \leq 0 \right\} \leq \pr \left\{ Z_n < \mu - \lambda \sqrt{n} \right\} < e^{-\lambda^2/2}\, ,
\ee
entailing that $\e\leq e^{-\mu^2/(2n)}$, or, equivalently, $\mu \leq \sqrt{2n \ln(1/\e)}$, once one takes the limit $\lambda\to \big(\mu/\sqrt{n}\big)^-$. Then, from the second inequality in~\eqref{Alon_Spencer_lemma_proof_eq5} we obtain that
\bb
P^{\otimes n}\left(B_d\left(\YY_n, K \right)\right) &\geq
P^{\otimes n}\left(B_d\left(\YY_n, \sqrt{2 n \ln(1/\e)} + \sqrt{2 n \ln(1/\eta)} \right)\right) \\
&= \pr\left\{ \Delta(X^n) \leq \sqrt{2 n \ln(1/\e)} + \sqrt{2 n \ln(1/\eta)} \right\} \\
&= \pr\left\{ Z_n \leq \sqrt{2 n \ln(1/\e)} + \sqrt{2 n \ln(1/\eta)} \right\} \\
&\geq \pr\left\{ Z_n \leq \mu + \sqrt{2 n \ln(1/\eta)} \right\} \\
&= 1 - \pr\left\{ Z_n > \mu + \sqrt{2 n \ln(1/\eta)} \right\} \\
&\geq 1 - \eta\, ,
\ee
which concludes the proof.
\end{proof}

\subsection{Symbol-by-symbol blurring lemma}

In this section we will build our fundamental technical tool, Lemma~\ref{sbs_blurring_lemma} below. We start by proving two simple lemmas.

\begin{lemma} \label{transition_prob_lemma}
Let $R\in \PP(\XX)$ a probability distribution on a finite alphabet $\XX$. For some $\YY\subseteq \XX$, let $c\geq 0$ be such that $\min_{y \in \YY} R(y) \geq c$. Given two strings $x^n\in \XX^n$ and $y^n\in \YY^n$ at Hamming distance 
\bb
d(x^n,y^n) \leq ns\, ,
\label{transition_prob_assumption}
\ee
where $s\in \R$, and some $\delta\in \big(0,\tfrac{1}{c+1}\big]$, the probability that the map $\mm$ defined by~\eqref{depolarising} applied to every symbol turns $x^n$ into $y^n$ satisfies
\bb
\pr\big\{\mm^{\otimes n}: x^n\to y^n\big\} \geq (1-\delta)^n \left( \frac{c\delta}{1-\delta}\right)^{ns} .
\label{transition_prob}
\ee
\end{lemma}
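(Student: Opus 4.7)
The plan is to exploit the tensor-product structure of $\mm^{\otimes n}$ and reduce everything to a per-symbol lower bound on the transition probability, then handle the constraint~\eqref{transition_prob_assumption} by monotonicity in the Hamming distance.

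First I would compute the single-symbol transition probability under $\mm$. For any input $x\in\XX$ and target $y\in\YY$, the definition~\eqref{depolarising} gives
\bb
\pr\{\mm:x\to y\} = (1-\delta)\,\delta_{xy} + \delta R(y).
\ee
Using $R(y)\geq c$ for $y\in\YY$, this yields the two bounds $\pr\{\mm:x\to y\}\geq 1-\delta$ when $x=y$ and $\pr\{\mm:x\to y\}\geq c\delta$ when $x\neq y$. Writing $k\coloneqq d(x^n,y^n)$ and using that the copies of $\mm$ act independently, I then get
\bb
\pr\{\mm^{\otimes n}:x^n\to y^n\}\geq (1-\delta)^{n-k}(c\delta)^{k}=(1-\delta)^n\left(\frac{c\delta}{1-\delta}\right)^{k}.
\ee

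The second step is to pass from $k$ to the upper bound $ns$ from~\eqref{transition_prob_assumption}. The key observation is that the hypothesis $\delta\leq \tfrac{1}{c+1}$ is exactly equivalent to $c\delta\leq 1-\delta$, i.e.\ $\frac{c\delta}{1-\delta}\leq 1$. Therefore the map $k\mapsto\left(\frac{c\delta}{1-\delta}\right)^{k}$ is non-increasing, and since $k\leq ns$ we conclude
\bb
(1-\delta)^n\left(\frac{c\delta}{1-\delta}\right)^{k}\geq (1-\delta)^n\left(\frac{c\delta}{1-\delta}\right)^{ns},
\ee
which is precisely~\eqref{transition_prob}.

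There is no real obstacle here: the role of the assumption $\delta\leq \tfrac{1}{c+1}$ is purely to guarantee that the ratio $c\delta/(1-\delta)$ is at most $1$, which in turn is what allows us to replace the actual Hamming distance $k$ by the possibly larger quantity $ns$ without losing the direction of the inequality. The only minor point worth flagging is that the bound $R(y_i)\geq c$ must be applied only at the positions where $x_i\neq y_i$ (where the noise term is the sole contribution), whereas at the $n-k$ matching positions we simply drop the non-negative term $\delta R(y_i)$ and retain $1-\delta$; since $y^n\in\YY^n$, the bound $R(y_i)\geq c$ is available at every position that we actually need it.
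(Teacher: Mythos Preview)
Your proof is correct and follows essentially the same approach as the paper: factor the transition probability over symbols, lower-bound each factor by $1-\delta$ or $c\delta$ according to whether $x_i=y_i$ or not, rewrite as $(1-\delta)^n\left(\tfrac{c\delta}{1-\delta}\right)^{k}$, and then use $\tfrac{c\delta}{1-\delta}\leq 1$ (equivalent to $\delta\leq\tfrac{1}{c+1}$) together with $k\leq ns$ to replace $k$ by $ns$. Your explicit remark on the role of the hypothesis $\delta\leq\tfrac{1}{c+1}$ is a nice addition.
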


\begin{proof}
Let $I \coloneqq \big\{i\in \{1,\ldots,n\}: x_i \neq y_i \big\}$, so that $|I| = d(x^n,y^n)$ by definition of Hamming distance, and $x_i=y_i$ for all $i\in I^c$. With the action of $\mm$, each symbol $x_i$ ($i= 1,\ldots, n$) has a probability $1-\delta$ of being left untouched, and a probability $\delta$ of being replaced with a symbol drawn according to $R$. Since $R(y)\geq c$ for all $y\in \YY$, such symbol coincides with $y_i$ with probability at least $c$. The events are independent, so the total probability can be estimated as
\bb
\pr\big\{\mm^{\otimes n}: x^n\to y^n\big\} &= \prod_{i=1}^n \pr\{\mm: x_i\to y_i\} \\
&= \left(\prodno_{i\in I} \pr\{\mm: x_i\to y_i\} \right) \left( \prodno_{i\in I^c} \pr\{\mm: x_i\to x_i\} \right) \\
&\geq (c\delta)^{|I|} (1-\delta)^{|I^c|} \\
&= (c\delta)^{d(x^n,y^n)} (1-\delta)^{n - d(x^n,y^n)} \\
&= (1-\delta)^n \left( \frac{c\delta}{1-\delta}\right)^{d(x^n,y^n)} \\
&\geq (1-\delta)^n \left( \frac{c\delta}{1-\delta}\right)^{ns} ,
\ee
where in the last line we used~\eqref{transition_prob_assumption} and observed that $\frac{c\delta}{1-\delta}\leq 1$. This concludes the proof.
\end{proof}

\begin{lemma} \label{nasty_estimate_lemma}
Let $x^n,y^n\in \XX^n$ be two strings of symbols taken from a finite alphabet $\XX$, assumed to be at a Hamming distance of at most $d(x^n,y^n) \leq ns$, for some $s\in \R$. Denote by $V_{x^n}, V_{y^n}\in \TT_n$ the types of $x^n, y^n$, respectively, and let $P\in \PP(\XX)$ be a probability distribution on $\XX$. Then
\bb
P^{\otimes n}(x^n) \leq \frac{(n+1)^{|\XX|} \exp\left[ n F_{1/|\XX|}(s)\right]}{\scaleobj{1.3}{|}T_{n,V_{y^n}}\scaleobj{1.3}{|}}\, ,
\label{nasty_estimate}
\ee
where $F_{1/|\XX|}$ is defined by~\eqref{F_c}.
\end{lemma}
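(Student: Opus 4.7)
The plan is to reduce the claim to a chain of standard type-class estimates, using the fact that $P^{\otimes n}(x^n)$ depends only on the type of $x^n$ and that Hamming closeness of two strings forces their types to be close in total variation. No truly hard step appears, but care is needed when combining the two directions of~\eqref{cardinality_type_class_estimate} and the asymptotic continuity bound of Lemma~\ref{continuity_entropy_lemma} to produce the stated pre-factor $(n+1)^{|\XX|} \exp[nF_{1/|\XX|}(s)]$.

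First I would note that, since $P^{\otimes n}$ assigns the same mass to every string in a given type class, and since $P^{\otimes n}(T_{n,V_{x^n}}) \leq 1$,
\bb
P^{\otimes n}(x^n) = \frac{P^{\otimes n}(T_{n,V_{x^n}})}{|T_{n,V_{x^n}}|} \leq \frac{1}{|T_{n,V_{x^n}}|}\, .
\label{plan_step1}
\ee
Applying the lower bound in~\eqref{cardinality_type_class_estimate} to $|T_{n,V_{x^n}}|$ then yields
\bb
P^{\otimes n}(x^n) \leq (n+1)^{|\XX|} \exp[-nH(V_{x^n})]\, .
\label{plan_step2}
\ee

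The second step is to relate the Hamming distance to the total-variation distance of types. Flipping a single symbol shifts the type vector in $\ell_1$ by at most $2/n$, so the standard estimate $\tfrac{1}{2}\|V_{x^n} - V_{y^n}\|_1 \leq d(x^n,y^n)/n \leq s$ holds. Invoking the asymptotic continuity of the entropy (Lemma~\ref{continuity_entropy_lemma}) and the monotonicity of $F_{1/|\XX|}$ in $s$, this gives
\bb
H(V_{y^n}) - H(V_{x^n}) \leq F_{1/|\XX|}\!\left(\tfrac{1}{2}\|V_{x^n} - V_{y^n}\|_1\right) \leq F_{1/|\XX|}(s)\, ,
\label{plan_step3}
\ee
where I should be slightly careful about the case $s > 1$, which is trivial since $P^{\otimes n}(x^n) \leq 1 \leq \tfrac{(n+1)^{|\XX|}}{|T_{n,V_{y^n}}|}\exp[nF_{1/|\XX|}(s)]$ as soon as $F_{1/|\XX|}$ has plateaued.

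Finally, combining~\eqref{plan_step2}, the inequality $H(V_{x^n}) \geq H(V_{y^n}) - F_{1/|\XX|}(s)$ from~\eqref{plan_step3}, and the upper bound $|T_{n,V_{y^n}}| \leq \exp[nH(V_{y^n})]$ from~\eqref{cardinality_type_class_estimate} produces
\bb
P^{\otimes n}(x^n) &\leq (n+1)^{|\XX|} \exp\!\left[-nH(V_{y^n}) + nF_{1/|\XX|}(s)\right] \\
&\leq \frac{(n+1)^{|\XX|} \exp[nF_{1/|\XX|}(s)]}{|T_{n,V_{y^n}}|}\, ,
\ee
which is~\eqref{nasty_estimate}. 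The only thing to double-check is that the total-variation bound $\tfrac{1}{2}\|V_{x^n}-V_{y^n}\|_1 \leq s$ is the right input for Lemma~\ref{continuity_entropy_lemma} even when $s>1$; in that regime $F_{1/|\XX|}$ is constant and equal to its supremum, so the continuity bound still applies after replacing $s$ by $\min(s,1)$, and the final inequality remains valid.
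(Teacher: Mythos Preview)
Your proof is correct and follows essentially the same route as the paper's: bound $P^{\otimes n}(x^n)\leq 1/|T_{n,V_{x^n}}|$ by permutational symmetry, convert type-class sizes to entropies via~\eqref{cardinality_type_class_estimate}, bound the total-variation distance of the types by $d(x^n,y^n)/n\leq s$, and apply the entropy continuity bound of Lemma~\ref{continuity_entropy_lemma}. You are in fact slightly more careful than the paper in explicitly handling the $s>1$ regime via the monotonicity of $F_{1/|\XX|}$.
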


\begin{proof}
We start by estimating the total variation distance between the types $V_{x^n}$ and $V_{y^n}$. A little thought reveals that
\bb
\frac12\,\big\|V_{x^n} - V_{y^n}\big\|_1 \leq s\, .
\label{nasty_estimate_proof_eq1}
\ee
To prove this formally, note that, for each $i=1,\ldots, n$,
\bb
\frac12 \sum_{x\in \XX} \left|\delta_{x,x_i} - \delta_{x,y_i}\right| = \left\{ \begin{array}{ll} 0 & \text{ if $x_i=y_i$,} \\[1ex] 1 & \text{ if $x_i\neq y_i$.} \end{array}\right.
\label{nasty_estimate_proof_eq2}
\ee
Here, $x_i$ is the $i^\text{th}$ symbol of $x^n$, and analogously for $y_i$; also, $\delta_{x,x'}$ is equal to $1$ if $x=x'$, and equal to $0$ otherwise. Summing~\eqref{nasty_estimate_proof_eq2} over all $i=1,\ldots, n$ yields
\bb
\frac12 \sum_{i=1}^n \sum_{x\in \XX} \left|\delta_{x,x_i} - \delta_{x,y_i}\right| = d(x^n,y^n)\, .
\ee
By the triangle inequality, the left-hand side can be lower bounded as
\bb
d(x^n,y^n) &\geq \frac12 \sum_{x\in \XX} \left|\sumno_{i=1}^n \left(\delta_{x,x_i} - \delta_{x,y_i}\right)\right| \\
&= \frac12 \sum_{x\in \XX} \left|N(x|x^n) - N(x|y^n) \right| \\
&= \frac{n}{2}\, \sum_{x\in \XX} \left|V_{x^n}(x) - V_{y^n}(x) \right| \\
&= \frac{n}{2}\,\big\|V_{x^n} - V_{y^n}\big\|_1\, ,
\ee
which proves~\eqref{nasty_estimate_proof_eq1} once one remembers that $d(x^n,y^n)\leq ns$ by assumption. We are now ready to write
\bb
\frac1n \log \left(P^{\otimes n}(x^n)\, \big|T_{n,V_{y^n}}\big|\right) &\leqt{(i)} \frac1n \log \frac{\big|T_{n,V_{y^n}}\big|}{\big|T_{n,V_{x^n}}\big|} \\
&\leqt{(ii)} H\big(V_{y^n}\big) - H\big(V_{x^n}\big) + \frac{|\XX| \log(n+1)}{n} \\
&\leqt{(iii)} F_{1/|\XX|}(s) + \frac{|\XX| \log(n+1)}{n}\, .
\ee
Here, in~(i) we observed that, due to permutational symmetry, $P^{\otimes n}(z^n)$ must be the same for all strings $z^n$ with the same type as $x^n$; since the total probability of the type class $T_{n,V_{x^n}}$ cannot exceed $1$, it follows that every single string can have probability at most equal to $1\big/\scaleobj{1.1}{|}T_{n,V_{x^n}}\scaleobj{1.1}{|}$. Continuing, the inequality~(ii) is deduced by applying~\eqref{cardinality_type_class_estimate} twice, while in~(iii) we employed Lemma~\ref{continuity_entropy_lemma} and the above estimate~\eqref{nasty_estimate_proof_eq1}. The claimed inequality~\eqref{nasty_estimate} is obtained via elementary algebraic manipulations.
\end{proof}

We are now ready to establish the following key technical result:

\begin{lemma}[(Symbol-by-symbol blurring lemma)] \label{sbs_blurring_lemma}
Let $P\in \PP(\XX)$ be a probability distribution on the finite alphabet $\XX$, and, for a positive integer $n\in \N^+\!$, let $Q_n\in \PP(\XX^n)$ be a (not necessarily permutationally symmetric) probability distribution on $n$ copies of $\XX$. For some $\lambda,\mu\geq 0$ and $\xi\in (0,1/3)$, assume that there exists a type $V\in \TT_n$ such that $\frac12 \|V-P\|_1\leq \xi$ and
\bb
\left|\left\{y^n\in T_{n,V}:\ Q_n(y^n) \geq \frac{\exp[-n\lambda]}{|T_{n,V}|} \right\}\right| \geq \exp[-n\mu]\, |T_{n,V}|\, ,
\ee
where $T_{n,V}$ is the type class with type $V$ (see~\eqref{type_class}). Then, picking some $R\in \PP(\XX)$ such that 
\bb
\min_{x\in \supp(P)} R(x) \geq c > 0\, ,
\label{sbs_blurring_0}
\ee
some $\eta\in(0,1)$, we have
\bb
\inf_{\delta\,\in\, \scaleobj{1.2}{(}0,\frac{1}{c+1}\scaleobj{1.2}{]}} \frac1n\, D_{\max}^\eta\big( P^{\otimes n} \,\big\|\, \mm^{\otimes n}(Q_n) \big) \leq \lambda + 2\,F_{\min\{c,\,1/|\XX|\}}\left(\sqrt{\tfrac{2\mu}{\log e}} + \theta_{|\XX|,\,\eta}(\xi,n)\right) + \widetilde{o}_{|\XX|,\,\eta}\big(\tfrac{1}{n}\big) \, ,
\label{sbs_blurring_1}
\ee
where we employed the auxiliary function given by~\eqref{F_c} and defined
\begin{align}
\theta_{|\XX|,\,\eta}(\xi,n) &\coloneqq \sqrt{4\xi \ln |\XX| + \tfrac{2}{\log e}\left( 3\xi \log\tfrac{|\XX|}{\xi} + h_2(3\xi) \right) + \tfrac{2|\XX| \ln(n+1)}{n}} + \sqrt{\tfrac2n \ln \tfrac1\eta} + 2\xi\, , \label{sbs_blurring_2} \\
\widetilde{o}_{|\XX|,\,\eta}\big(\tfrac{1}{n}\big) &\coloneqq \frac1n \left( |\XX| \log(n+1) + \log \tfrac{1}{1-\eta} \right) . \label{sbs_blurring_3}
\end{align}
\end{lemma}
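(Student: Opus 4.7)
The strategy is to find, for each $x^n$ in a set $B \subseteq \XX^n$ of $P^{\otimes n}$-mass at least $1-\eta$, both a lower bound on $\mm^{\otimes n}(Q_n)(x^n)$ and an upper bound on $P^{\otimes n}(x^n)$, and then bound $D_{\max}^\eta$ by a truncation-and-renormalise argument. I would begin by distilling the hypothesis into the ``$Q_n$-rich'' subset $\YY_n \coloneqq \{y^n \in T_{n,V} : Q_n(y^n) \geq \exp[-n\lambda]/|T_{n,V}|\}$, which satisfies $|\YY_n| \geq \exp[-n\mu]\,|T_{n,V}|$, and aim for $B = B_d(\YY_n, K) \cap \supp(P)^n$ for a Hamming radius $K$ to be chosen.

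For $x^n \in B$ with witness $y^n \in \YY_n$ at Hamming distance $\leq K$, Lemma~\ref{transition_prob_lemma} (with $\YY = \supp(P)$, on which $R \geq c$) gives
$\mm^{\otimes n}(Q_n)(x^n) \geq (1-\delta)^n \bigl(\tfrac{c\delta}{1-\delta}\bigr)^{K}\, \exp[-n\lambda]/|T_{n,V}|$, while Lemma~\ref{nasty_estimate_lemma} applied to the same pair $(x^n,y^n)$ with $s = K/n$ yields $P^{\otimes n}(x^n) \leq (n+1)^{|\XX|}\exp[n F_{1/|\XX|}(K/n)]/|T_{n,V}|$. In the ratio the $|T_{n,V}|$ factors cancel, and the optimal choice $\delta = K/n$ (legitimate once $K/n \leq 1/(c+1)$) converts the blurring prefactor into $\exp[-n F_c(K/n)]$, after which the elementary monotonicity $F_c + F_{1/|\XX|} \leq 2 F_{\min\{c,1/|\XX|\}}$ collapses the two per-symbol costs into the single $2F_{\min\{c,1/|\XX|\}}(K/n)$ appearing in the statement.

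It remains to realise $K/n \leq \sqrt{2\mu/\log e} + \theta_{|\XX|,\eta}(\xi,n)$ while ensuring $P^{\otimes n}(B) \geq 1-\eta$; this is where Lemma~\ref{Alon_Spencer_lemma} enters. The obstacle is that $V$ may place mass outside $\supp(P)$, in which case $P^{\otimes n}(\YY_n) = 0$ and Azuma is not directly applicable. I would sidestep this by constructing a proxy $n$-type $V^\sharp$ with $\supp(V^\sharp) \subseteq \supp(P)$, obtained by dumping the ``outside'' mass $\nu = \sum_{x \notin \supp(P)} V(x) \leq 2\xi$ onto a fixed $x_0 \in \supp(P)$, and defining $\YY_n^\sharp \subseteq T_{n,V^\sharp}$ as the image of $\YY_n$ under the corresponding symbol-replacement map. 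A standard type-counting argument bounds the multiplicity of this map by $\exp[n F_{1/|\XX|}(\nu)]$, while each $y^\sharp \in \YY_n^\sharp$ sits at Hamming distance at most $n\nu \leq 2n\xi$ from any of its preimages; this is the origin of the explicit $2\xi$ summand in $\theta$, because $B_d(\YY_n^\sharp, K^\sharp) \subseteq B_d(\YY_n, K^\sharp + 2n\xi)$. Applying Lemma~\ref{Alon_Spencer_lemma} to $\YY_n^\sharp$ with measure $P^{\otimes n}$ (which now charges it), together with $\sqrt{a+b} \leq \sqrt a + \sqrt b$, separates the $\sqrt{2\mu/\log e}$ contribution from the residual terms sitting inside the square root in $\theta$.

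The main technical hurdle is estimating $P^{\otimes n}(T_{n,V^\sharp}) \approx \exp[-n D(V^\sharp\|P)]$, which enters the lower bound on $P^{\otimes n}(\YY_n^\sharp)$: $D(V^\sharp\|P)$ is \emph{not} controlled by $\|V^\sharp - P\|_1$ alone without invoking $\min_{x\in \supp(P)} P(x)$. The specific form $3\xi \log(|\XX|/\xi) + h_2(3\xi)$ inside $\theta$ strongly suggests deploying a Fannes-style asymptotic-continuity bound for KL divergence that exploits both $\supp(V^\sharp) \subseteq \supp(P)$ and $\frac12 \|V^\sharp - P\|_1 \leq 3\xi$ (triangle inequality: $\xi + 2\xi$) to yield a $P_{\min}$-free estimate. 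Once this is in hand, the closing step is routine: let $\widetilde{P}$ be $P^{\otimes n}$ restricted to $B$ and renormalised, so that $\frac12 \|\widetilde{P} - P^{\otimes n}\|_1 \leq \eta$ and $D_{\max}^\eta(P^{\otimes n}\|\mm^{\otimes n}(Q_n)) \leq D_{\max}(\widetilde{P}\|\mm^{\otimes n}(Q_n))$; the per-symbol ratio bound established above then finishes the proof, with the $\log(1/(1-\eta))$ renormalisation cost and the $|\XX|\log(n+1)$ type-counting factors all absorbed into $\widetilde{o}_{|\XX|,\eta}(1/n)$.
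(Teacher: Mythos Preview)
Your overall plan is correct and closely mirrors the paper's proof: the same choice of $\YY_n$, the same pointwise ratio bound via Lemmas~\ref{transition_prob_lemma} and~\ref{nasty_estimate_lemma}, the same truncate-and-renormalise argument for $D_{\max}^\eta$, the same invocation of Lemma~\ref{Alon_Spencer_lemma} after passing to a proxy set on which $P^{\otimes n}$ has mass, the same triangle-inequality passage back to $B_d(\YY_n,\cdot)$ (whence the $+2\xi$ in $\theta$), and the same reduction of the $\delta$-optimisation to the variational form of $F_c$ followed by Lemma~\ref{variational_F_lemma}(b).

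There is, however, a genuine gap precisely at the point you flag as ``the main technical hurdle''. Your construction dumps only the mass $V$ places \emph{outside} $\supp(P)$; the resulting $V^\sharp$ satisfies $\supp(V^\sharp)\subseteq\supp(P)$ and $\frac12\|V^\sharp-P\|_1\lesssim 3\xi$, but these two properties alone do \emph{not} yield a $P_{\min}$-free bound on $D(V^\sharp\|P)$. Concretely, on a binary alphabet with $P=(1-\epsilon,\epsilon)$ and $V=(1-\epsilon-\xi,\epsilon+\xi)$ one has $\supp(P)=\XX$, so $V^\sharp=V$, $\frac12\|V^\sharp-P\|_1=\xi$, yet $D(V^\sharp\|P)\gtrsim\xi\log(\xi/\epsilon)\to\infty$ as $\epsilon\to 0^+$. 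The ``Fannes-style bound'' you are hoping to invoke therefore does not exist in that form.

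The paper's fix is to dump mass not from $\XX\setminus\supp(P)$ but from the slightly larger set $\XX_\nu\coloneqq\{x:P(x)\le\nu\}$, with the specific choice $\nu=\xi/|\XX|$. The proxy type $\bar V$ is then supported on $\{x:P(x)>\nu\}$, which automatically gives $D_{\max}(\bar V\|P)\le\log(1/\nu)=\log(|\XX|/\xi)$. Combining this with $\frac12\|\bar V-P\|_1\le\nu|\XX|+2\xi=3\xi$ and the continuity estimate $D(\bar V\|P)\le \tfrac12\|\bar V-P\|_1\cdot D_{\max}(\bar V\|P)+h_2\!\big(\tfrac12\|\bar V-P\|_1\big)$ from \cite[Eq.~(13)]{continuity-via-integral} yields exactly the $3\xi\log(|\XX|/\xi)+h_2(3\xi)$ you were trying to reverse-engineer. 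The extra mass removed (from $\{x:0<P(x)\le\nu\}$) is at most $\nu|\XX|=\xi$, so the Hamming shift becomes $\le 2n\xi$ and the multiplicity $\le |\XX|^{2n\xi}$, producing the $2\xi$ and $4\xi\ln|\XX|$ terms in $\theta$. With this single modification to your construction, the rest of your plan goes through verbatim.
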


\begin{rem} \label{hearling_fears_rem}
The explicit expressions of the functions in~\eqref{sbs_blurring_2}--\eqref{sbs_blurring_3} do not play a role in what follows, and are reported only for completeness. What \emph{will} play a role, instead, is the fact that
\bb
\lim_{\xi \to 0^+} \lim_{n \to\infty} \theta_{|\XX|,\,\eta}(\xi,n) = 0\, ,\qquad \lim_{n\to\infty} \widetilde{o}_{|\XX|,\,\eta}\big(\tfrac{1}{n}\big) = 0
\label{healing_fears}
\ee
for all fixed $|\XX|<\infty$ and all $\eta\in (0,1)$. Together with the continuity of $F_{c'}$ for any fixed $c'\in (0,1]$, this will immediately imply that the right-hand side of~\eqref{sbs_blurring_1} can be made arbitrarily close to $\lambda$ by taking $n$ large enough and $\xi$ and $\mu$ small enough.
\end{rem}

\begin{proof}
Define the set of strings
\bb
\YY_n \coloneqq \left\{y^n\in T_{n,V}:\ Q_n(y^n) \geq \frac{\exp[-n\lambda]}{|T_{n,V}|} \right\} ,
\label{sbs_blurring_proof_eq1}
\ee
so that 
\bb
|\YY_n|\geq \exp[-n\mu]\, |T_{n,V}|
\label{sbs_blurring_proof_eq2}
\ee
by assumption. We would like to apply Lemma~\ref{Alon_Spencer_lemma}. To this end, we need to obtain a lower bound on $P^{\otimes n}(\YY_n)$. Intuitively, this ought to be possible, because $P$ and $V$ are close in total variation distance, and $\YY_n$ is a subset of $T_{n,V}$ whose cardinality we just bounded from below. The problem with this line of reasoning, however, is that the type $V$ might assign some non-zero weight to symbols in $\XX$ outside of the support of $P$. The weight distributed in this way will be small, because $P$ and $V$ are close in total variation distance, but it can be non-zero. If this happens, then necessarily $P^{\otimes n}(T_{n,V}) = 0$, thwarting our attack on the problem right at the start.

To remedy this, we begin with a preliminary step that is designed to modify the set $\YY_n$ so as to eliminate, in every string, the symbols that are not in the support of $P$. More specifically, for some $\nu\in \big(0,\,\tfrac{1}{|\XX|}\big)$, to be fixed later, we can define
\bb
\XX_\nu \coloneqq \left\{ x\in \XX:\ P(x) \leq \nu \right\} .
\label{sbs_blurring_proof_preliminary_eq1}
\ee
Note that $\XX_\nu\neq \XX$, because $P$ must be normalised to $1$. Given any string $y^n = y_1\ldots y_n \in \YY_n$, we can replace every symbol $y_i\in \XX_\nu$, if any, with some fixed symbol $x_0\in \XX_\nu^c \coloneqq \XX\setminus \XX_\nu$. The symbols $y_j\in \XX_\nu^c$, instead, are left untouched. We denote the resulting string as $z^n(y^n)$.

How many symbols have been replaced in any given string $y^n\in \YY_n$? Since the type of $y^n$ is fixed and equal to $V$, it is not difficult to realise that this number does not in fact depend on $y^n$. To calculate it, it suffices to count how many symbols in a string with type $V$ belong to $\XX_\nu$: clearly, $nV(\XX_\nu)$. This number is small if $\nu$ and $\xi$ are small, because
\bb
\xi &\geq \frac12 \left\|V-P\right\|_1 \\
&= \max_{A\subseteq \XX} \left(V(A) - P(A)\right) \\
&\geq V(\XX_\nu) - P(\XX_\nu) \\
&\geq V(\XX_\nu) - \nu |\XX_\nu| \\
&\geq V(\XX_\nu) - \nu |\XX|\, .
\ee
Therefore, for all $y^n\in \YY_n$, we have
\bb
d\big(y^n\!, z^n(y^n)\big) = nV(\XX_\nu) \leq n\left(\xi + \nu |\XX|\right) ,
\label{sbs_blurring_proof_preliminary_eq3}
\ee
where $d$ is the Hamming distance (see~\eqref{Hamming_distance}). Let us now call $\ZZ_n$ the set obtained from $\YY_n$ by effecting the transformation $y^n\mapsto z^n(y^n)$ on every string $y^n\in \YY_n$; formally,
\bb
\ZZ_n \coloneqq \left\{ z^n(y^n):\ y^n\in \YY_n\right\} .
\ee
A little thought reveals that all strings in $\ZZ_n$ also have the same type: we can write
\bb
\ZZ_n \subseteq T_{n,\widebar{V}}\, , \quad \widebar{V} \coloneqq V\big|_{\XX_\nu^c} + V(\XX_\nu)\, E_{x_0}\, ,
\label{sbs_blurring_proof_preliminary_eq5}
\ee
where
\bb
V\big|_{\XX_\nu^c}(x) \coloneqq \left\{ \begin{array}{ll} V(x) & \quad \text{if $x\notin\XX_\nu$,} \\[.5ex] 0 & \quad \text{otherwise,} \end{array} \right.
\ee
and $E_{x_0}$ is the deterministic probability distribution concentrated on $x_0$, i.e.\ $E_{x_0}(x) = \delta_{x,x_0}$ for all $x\in \XX$.

We now have
\begin{align}
P^{\otimes n}(\ZZ_n) &= \sum_{z^n\in \ZZ_n} P^{\otimes n}(z^n) \nonumber \\
&\geqt{(i)} |\XX|^{-nV(\XX_\nu)} \sum_{y^n\in \YY_n} P^{\otimes n}\big(z^n(y^n)\big) \nonumber \\
&\geqt{(ii)} |\XX|^{-n\left(\xi + \nu |\XX|\right)}\, \frac{P^{\otimes n}\big(T_{n,\widebar{V}}\big)}{\big|T_{n,\widebar{V}}\big|}\, |\YY_n| \label{sbs_blurring_proof_preliminary_eq7} \\
&\geqt{(iii)} |\XX|^{-n\left(\xi + \nu |\XX|\right)} \exp[-n\mu]\, P^{\otimes n}\big(T_{n,\widebar{V}}\big) \nonumber \\
&\geqt{(iv)} (n+1)^{-|\XX|} |\XX|^{-n\left(\xi + \nu |\XX|\right)} \exp\left[-n\left(\mu + \rel{D}{\widebar{V}}{P}\right)\right] \nonumber \\
&\geqt{(v)} (n+1)^{-|\XX|} |\XX|^{-n\left(\xi + \nu |\XX|\right)} \exp\left[-n\left(\mu + (\nu |\XX| + 2\xi) \log\tfrac{1}{\nu} + h_2(\nu |\XX| + 2\xi) \right)\right] . \nonumber
\end{align}
We now present a detailed justification of the above derivation. 
\begin{enumerate}[(i), leftmargin=6ex, labelwidth=2ex, labelsep=1ex]
\item While the map $y^n\mapsto z^n(y^n)$ need not be injective in general, for all $z^n\in \ZZ_n$ we have
\bb
\left|\left\{ y^n:\ z^n(y^n)=z^n\right\}\right| \leq |\XX|^{n V(\XX_\nu)} ;
\label{sbs_blurring_proof_preliminary_eq8}
\ee
to see why, we ask ourselves: when do two strings $y^n,{y'}^n\in \YY_n$ satisfy $z^n(y^n) = z^n({y'}^n)$? Clearly, this happens if and only if, for all $i=1,\ldots,n$ such that $y_i\in \XX_\nu^c$, we have $y_i = y'_i$ --- indeed, these symbols will be left untouched by the transformation $y^n\mapsto z^n(y^n)$. There are exactly $nV(\XX_\nu)$ values of $i$ such that this condition is \emph{not} met, i.e.\ such that $y_i\in \XX_\nu$. Given $y^n$, a matching ${y'}^n$ can only differ by the symbols in these sites. Since there are at most $|\XX|^{nV(\XX_\nu)}$ ways to choose the symbols in $nV(\XX_\nu)$ sites, Eq.~\eqref{sbs_blurring_proof_preliminary_eq8} follows. Due to that identity, we see that the sum $\sum_{y^n\in \YY_n} P^{\otimes n}\big(z^n(y^n)\big)$ can contain every term $P^{\otimes n}\big(z^n\big)$, where $z^n\in \ZZ_n$, at most $|\XX|^{nV(\XX_\nu)}$ times. The inequality~(i) follows.

\item On the one hand we employed~\eqref{sbs_blurring_proof_preliminary_eq3}; on the other, we observed that, due to~\eqref{sbs_blurring_proof_preliminary_eq5}, all strings of the form $z^n(y^n)$ ($y^n\in \YY_n$) have the same type; hence, the value of $P^{\otimes n}\big(z^n(y^n)\big)$ does not depend on $y^n$. It thus holds that
\bb
P^{\otimes n}\big(z^n(y^n)\big) = \frac{P^{\otimes n}\big(T_{n,\widebar{V}}\big)}{\big|T_{n,\widebar{V}}\big|}\qquad \forall\ y^n\in \YY_n
\ee

\item Remembering~\eqref{sbs_blurring_proof_eq2}, here we are simply claiming that $\big|T_{n,\widebar{V}}\big| \leq |T_{n,V}|$; this is in fact quite obvious, and follows from the fact that the function $y^n\mapsto z^n(y^n)$, when extended to the whole domain $T_{n,V}$, is surjective on $T_{n,\widebar{V}}$. The same conclusion can be reached by calculating the cardinalities of both type classes with the help of the multinomial formula~\eqref{cardinality_type_class}.

\item This is an application of Sanov's theorem~\cite[Exercise~2.12, p.~29]{CSISZAR-KOERNER}.

\item Note that
\bb
\rel{D_{\max}}{\widebar{V}}{P} \leq \log\tfrac{1}{\nu}\, ,
\label{sbs_blurring_proof_preliminary_eq10}
\ee
simply because the support of $\widebar{V}$ is entirely contained in $\XX_\nu^c$, and $P(x)\geq \nu$ for all $x\in \XX_\nu^c$ by construction. Moreover,
\bb
\big\|\widebar{V} - P\big\|_1 &\leq \big\|\widebar{V} - V \big\|_1 + \|V-P\|_1 \\
&= \sum_{x\in \XX_\nu} V(x) + \big|\widebar{V}(x_0) - V(x_0)\big| + \|V-P\|_1 \\
&= 2V(\XX_\nu) + \|V-P\|_1 \\
&\leq 2\nu |\XX| + 4\xi\, ,
\label{sbs_blurring_proof_preliminary_eq11}
\ee
where the equalities follow from~\eqref{sbs_blurring_proof_preliminary_eq5}, while the last inequality is a consequence of~\eqref{sbs_blurring_proof_preliminary_eq3} together with the assumption that $\frac12 \|V-P\|_1\leq \xi$. 
As long as
\bb
\nu |\XX| + 2\xi \leq 1\, ,
\label{sbs_blurring_proof_preliminary_eq12}
\ee
Eq.~\eqref{sbs_blurring_proof_preliminary_eq10}--\eqref{sbs_blurring_proof_preliminary_eq11} allow us to employ the continuity estimate in~\cite[Eq.~(13)]{continuity-via-integral} to write
\bb
\rel{D}{\widebar{V}}{P} &\leq D(P\|P) + (\nu |\XX| + 2\xi) \log\tfrac1\nu + h_2(\nu |\XX| + 2\xi) \\
&= (\nu |\XX| + 2\xi) \log\tfrac1\nu + h_2(\nu |\XX| + 2\xi)\, ,
\ee
which is what we did in step~(v). This completes the justification of~\eqref{sbs_blurring_proof_preliminary_eq7}.
\end{enumerate}

Before proceeding, it is wise to simplify a bit the bound in~\eqref{sbs_blurring_proof_preliminary_eq7}. To this end, we can now fix
\bb
\nu \coloneqq \frac{\xi}{|\XX|}\, ,
\label{sbs_blurring_proof_preliminary_eq13}
\ee
which satisfies~\eqref{sbs_blurring_proof_preliminary_eq12}, due to fact that $\xi<1/3$, and lets us obtain
\bb
P^{\otimes n}(\ZZ_n) &\geq (n+1)^{-|\XX|} |\XX|^{-2n\xi} \exp\left[-n\left(\mu + 3\xi \log\tfrac{|\XX|}{\xi} + h_2(3\xi) \right)\right] \eqqcolon \e_n\, .
\label{sbs_blurring_proof_eq3}
\ee
Note that, using the definition in~\eqref{sbs_blurring_2}, we have
\bb
\frac1n\left(\sqrt{2 n \ln\tfrac{1}{\e_n}} + \sqrt{2 n \ln\tfrac1\eta}\right) &= \sqrt{\tfrac{2\mu}{\log e} + \left(\theta_{|\XX|,\,\eta}(\xi,n) - \sqrt{\tfrac2n \ln \tfrac1\eta} - 2\xi\right)^2} + \sqrt{2 n \ln\tfrac1\eta} \\
&\leq \sqrt{\tfrac{2\mu}{\log e}} + \theta_{|\XX|,\,\eta}(\xi,n) - 2\xi \\
&= s_n - 2\xi\, ,
\ee
where in the second line we observed that $\sqrt{A+B} \leq \sqrt{A} + \sqrt{B}$ for all $A,B\geq 0$, and in the last we defined
\bb
s_n \coloneqq \sqrt{\tfrac{2\mu}{\log e}} + \theta_{|\XX|,\,\eta}(\xi,n)\, .
\label{sbs_blurring_proof_eq4}
\ee

Due to Lemma~\ref{Alon_Spencer_lemma} applied with $\YY_n\mapsto \ZZ_n$, $\e\mapsto \e_n$, $K\mapsto n(s_n - 2\xi)$, Eq.~\eqref{sbs_blurring_proof_eq3} entails that
\bb
P^{\otimes n}\! \left(\widetilde{\ZZ}_n\right) \geq&\ 1-\eta\, , \\
\widetilde{\ZZ}_n \coloneqq&\ B_d\big(\ZZ_n,\, n(s_n - 2\xi)\big)\, .
\label{sbs_blurring_proof_eq5}
\ee
Moreover, because of the fact that the Hamming distance obeys the triangle inequality, Eq.~\eqref{sbs_blurring_proof_preliminary_eq3}, with the choice in~\eqref{sbs_blurring_proof_preliminary_eq13}, implies that
\bb
\widetilde{\ZZ}_n \subseteq \widetilde{\YY}_n \coloneqq B_d\left(\YY_n, ns_n \right) ,
\label{sbs_blurring_proof_eq6}
\ee
so that a fortiori
\bb
1-\eta' &\coloneqq P^{\otimes n}\! \left(\widetilde{\YY}_n\right) \geq 1-\eta > 0\, .
\label{sbs_blurring_proof_eq7}
\ee

Now, set
\bb
P'_n(x^n) \coloneqq \left\{ \begin{array}{ll} \frac{P^{\otimes n}(x^n)}{1-\eta'} & \quad \text{if $x^n\in \widetilde{\YY}_n$,} \\[1ex] 0 & \quad \text{otherwise.} \end{array} \right.
\label{sbs_blurring_proof_eq8}
\ee
Note that $P'_n$, unlike $P^{\otimes n}$, is not necessarily permutationally symmetric, because $\widetilde{\YY}_n$ is not necessarily closed under permutations. Nevertheless, a simple calculation reveals that
\bb
\frac12 \left\| P'_n - P^{\otimes n} \right\|_1 = \eta' \leq \eta\, ,
\label{sbs_blurring_proof_eq9}
\ee

We now consider an arbitrary string $x^n\in \widetilde{\YY}_n\cap \supp(P)^n$; in particular, by~\eqref{sbs_blurring_proof_eq6} there exists $y^n\in \YY_n$ satisfying 
\bb
d(x^n,y^n) \leq ns_n \, .
\label{sbs_blurring_proof_eq10}
\ee
For any $\delta\in \big(0,\tfrac{1}{c+1}\big]$, we then have
\bb
\left(\mm^{\otimes n}(Q_n) \right)(x^n)\ &\geq\ Q_n(y^n) \pr\big\{\mm^{\otimes n}: y^n\to x^n\big\} \\
&\geqt{(vi)}\ Q_n(y^n)\, (1-\delta)^n \left(\frac{c\delta}{1-\delta}\right)^{ns_n} \\
&\geqt{(vii)}\ \frac{\exp[-n\lambda]}{|T_{n,V}|}\, (1-\delta)^n \left(\frac{c\delta}{1-\delta}\right)^{ns_n} \\
&\geqt{(viii)}\ \frac{1-\eta}{(n+1)^{|\XX|}}\, \exp\left[-n\left( \lambda + F_{1/|\XX|}(s_n) \right)\right] (1-\delta)^n \left(\frac{c\delta}{1-\delta}\right)^{ns_n} P'_n(x^n)\, .
\label{sbs_blurring_proof_eq11}
\ee
The inequalities in the above derivation are justified as follows: 
\begin{enumerate}[(i), leftmargin=6ex, labelwidth=2ex, labelsep=1ex] \setcounter{enumi}{5}
\item We applied Lemma~\ref{transition_prob_lemma} with $x^n$ and $y^n$ exchanged, $\YY\mapsto \supp(P)$, and $s\mapsto s_n$. See~\eqref{sbs_blurring_proof_eq10} for the definition of $s_n$. We also remembered~\eqref{sbs_blurring_0} and used the fact that $x^n\in \supp(P)^n$.

\item Holds by definition of the set $\YY_n$ (see~\eqref{sbs_blurring_proof_eq1}).

\item Follows by observing that 
\bb
P'_n(x^n) = \frac{P^{\otimes n}(x^n)}{1-\eta'} \leq \frac{P^{\otimes n}(x^n)}{1-\eta} \leq \frac{(n+1)^{|\XX|} \exp\left[n\, F_{1/|\XX|}(s_n)\right]}{(1-\eta) |T_{n,V}|}\, ,
\ee
where the first inequality holds due to~\eqref{sbs_blurring_proof_eq5}, and in the second we applied Lemma~\ref{nasty_estimate_lemma} with $V_{y^n} \mapsto V$ and $s\mapsto s_n$.
\end{enumerate}

We have just established~\eqref{sbs_blurring_proof_eq11} in the case where $x^n\in \widetilde{\YY}_n\cap \supp (P)^n$. Yet, even if $x^n\notin \widetilde{\YY}_n \cap \supp(P)^n$, the inequality between the leftmost and the rightmost side of~\eqref{sbs_blurring_proof_eq11} still holds, simply because the latter vanishes (see~\eqref{sbs_blurring_proof_eq8}). We thus conclude that said inequality actually holds for all $x^n\in \XX^n$, implying that
\bb
&\inf_{\delta\,\in\, \scaleobj{1.2}{(}0,\frac{1}{c+1}\scaleobj{1.2}{]}} \frac1n\, D_{\max}^\eta\big( P^{\otimes n} \,\big\|\, \mm^{\otimes n}(Q_n) \big) \\
&\qquad \leqt{(ix)}\ \inf_{\delta\,\in\, \scaleobj{1.2}{(}0,\frac{1}{c+1}\scaleobj{1.2}{]}} \frac1n\, D_{\max}\big(P'_n\, \big\|\, \mm^{\otimes n}(Q_n)\big) \\
&\qquad \leqt{(x)}\ \inf_{\delta\,\in\, \scaleobj{1.2}{(}0,\frac{1}{c+1}\scaleobj{1.2}{]}} \frac1n \log \left[\frac{(n+1)^{|\XX|} \exp\left[n\left( \lambda + F_{1/|\XX|}(s_n) \right)\right]}{(1-\eta)(1-\delta)^n}\, \left(\frac{1-\delta}{c\delta}\right)^{ns_n} \right] \\
&\qquad \eqt{(xi)}\ \lambda + F_{1/|\XX|}(s_n) + \widetilde{o}_{|\XX|,\,\eta}\big(\tfrac{1}{n}\big) + \inf_{\delta\,\in\, \scaleobj{1.2}{(}0,\frac{1}{c+1}\scaleobj{1.2}{]}} \left\{\log\frac{1}{1-\delta} + s_n \log\left(\frac{1-\delta}{c\delta}\right) \right\} \\
&\qquad \eqt{(xii)}\ \lambda + F_{1/|\XX|}(s_n) + \widetilde{o}_{|\XX|,\,\eta}\big(\tfrac{1}{n}\big) + F_c(s_n) \\
&\qquad \leqt{(xiii)}\ \lambda + 2\, F_{\min\{c,\,1/|\XX|\}}(s_n) + \widetilde{o}_{|\XX|,\,\eta}\big(\tfrac{1}{n}\big)\, .
\label{sbs_blurring_proof_eq13}
\ee
To justify the above derivation, we can argue as follows:
(ix)~holds because of~\eqref{sbs_blurring_proof_eq9}, while in~(x) we used~\eqref{sbs_blurring_proof_eq11}. From now on, all that remains are elementary algebraic manipulations: in~(xi) we expanded the logarithm, using the notation in~\eqref{sbs_blurring_3}; the identity in~(xii) follows from the variational representation of the auxiliary function $F_{c'}$ provided in Lemma~\ref{variational_F_lemma}(c), and the inequality~(xiii) is an application of another elementary property of the same function, stated in Lemma~\ref{variational_F_lemma}(b).

Substituting~\eqref{sbs_blurring_proof_eq4} into~\eqref{sbs_blurring_proof_eq13} yields~\eqref{sbs_blurring_1}, thereby concluding the proof.
\end{proof}


\subsection{A meta-lemma}

The above Lemma~\ref{sbs_blurring_lemma} is a fairly technical statement that is best used sparingly. In fact, we will use it only \emph{once}, to prove the meta-lemma (Lemma~\ref{meta_lemma}), reported below for convenience:

\begin{manuallemma}{\ref{meta_lemma}}[(Meta-lemma)]
For a finite alphabet $\XX$, let $(\FF_n)_n$ be a sequence of sets $\FF_n\subseteq \PP(\XX^n)$ that obeys Axiom~\ref{new_ax_depolarising} with respect to a probability distribution $R\in \PP(\XX)$ and a constant $c$ such that $\min_{x\in \supp(R)} R(x) \geq c > 0$. Take two real-valued functions $o_L(n)$ and $o_R(n)$ with the property that $\lim_{n\to\infty} \frac{o_L(n)}{n} = \lim_{n\to\infty} \frac{o_R(n)}{n} = 0$. For any $\Delta>0$, we can find $N = N(\Delta,c,o_L,o_R,|\XX|) \in \N^+$ such that, for all integers $n\geq N$, the following holds: given some $Q_n\in \FF_n$, an $n$-type $V\in \TT_n$, $P\in \PP(\XX)$ with $\supp(P)\subseteq \supp(R)$ and $\frac12 \|V-P\|_1\leq \xi\in (0,1/3)$, and some $\lambda\geq 0$, if
\begin{equation}
\left|\left\{x^n\in T_{n,V}:\ Q_n(x^n) \geq \frac{\exp[-n\lambda - o_L(n)]}{|T_{n,V}|} \right\}\right| \geq \exp[- o_R(n)]\, |T_{n,V}|\, ,
\tag{\ref{meta_lemma_0}}
\end{equation}
then
\begin{equation}
\frac1n\, D\big(P^{\otimes n}\, \big\|\, \FF_n\big) \leq \lambda + \phi(\xi) + \Delta\, ,
\tag{\ref{meta_lemma_1}}
\end{equation}
where $\phi$ is a continuous function that depends only on $c$ and $|\XX|$ and vanishes at $0$.
\end{manuallemma}

\begin{rem}
In the proof below we will see that an explicit choice of $\phi$, for example, could be
\bb
\phi(\xi) &= 2\, F_{\min\{c,\,1/|\XX|\}}\!\left(\lim_{n\to\infty} \theta_{|\XX|,\,\eta}(\xi,n) \right) \\
&= 2\, F_{\min\{c,\,1/|\XX|\}}\!\left( \sqrt{4\xi \ln |\XX| + \tfrac{2}{\log e}\left( 3\xi \log\tfrac{|\XX|}{\xi} + h_2(3\xi) \right)} + 2\xi \right) ,
\label{explicit_choice_phi}
\ee
where $F_{c'}$ is defined in~\eqref{F_c} and $\theta_{|\XX|,\,\eta}(\xi,n)$ in~\eqref{sbs_blurring_2}. Note that, by continuity, one can set $\phi(0) \coloneqq \lim_{\xi\to 0^+} \phi(\xi) = 0$.
\end{rem}

To wrap our head around the above result, it is best to consider the simple case where $P=V$, so that $\xi=0$. The meta-lemma then encapsulates the somewhat intuitive fact that, if $\FF$ represents a `physically meaningful hypothesis', in that it obeys Axiom~\ref{new_ax_depolarising}, and some $Q_n\in \FF_n$ satisfies that $Q_n(x^n) \gtrsim \frac{\exp[-n\lambda]}{|T_{n,P}|}$ for a significant fraction of the strings $x^n$ with type $P$, then $\lambda \gtrsim \frac1n\, D(P^{\otimes n}\|\FF_n)$. Since, typically, whenever $P\notin \FF_1$ we have that $D(P^{\otimes n}\|\FF_n) \gtrsim \kappa n$ for some $\kappa > 0$ (this can be proved, for example, under Axiom~\ref{new_ax_type_stability}), we conclude that $\lambda > 0$ must hold whenever $P\notin \FF_1$: in other words, $Q_n(x^n)\, |T_{n,P}|$ must decay to zero exponentially fast. For an even more intuitive explanation, we refer the reader to the discussion after Lemma~\ref{meta_lemma_perm_symm}.

\begin{proof}[Proof of Lemma~\ref{meta_lemma}]
For any fixed $n$, if $o_L(n)$ and $o_R(n)$ are negative, we can always re-defined them to be zero, and the inequality~\eqref{meta_lemma_0} will be a fortiori obeyed. Therefore, from now on we will tacitly assume that $o_L(n),o_R(n)\geq 0$ for all $n$. Now, taking some $\eta>0$ to be specified later, we start by observing that
\bb
\frac1n\, D_{\max}^\eta\big( P^{\otimes n} \,\big\|\, \FF_n \big) &\leqt{(i)} \inf_{\delta\,\in\, \scaleobj{1.2}{(}0,\frac{1}{c+1}\scaleobj{1.2}{]}} \frac1n\, D_{\max}^\eta\big( P^{\otimes n} \,\big\|\, \mm^{\otimes n}(Q_n) \big) \\
&\leqt{(ii)} \lambda + \tfrac{o_L(n)}{n} + 2\,F_{\min\{c,\,1/|\XX|\}}\left(\sqrt{\tfrac{2o_R(n)}{n \log e}} + \theta_{|\XX|,\,\eta}(\xi,n)\right) + \widetilde{o}_{|\XX|,\,\eta}\big(\tfrac{1}{n}\big)\, ,
\label{meta_lemma_proof_eq1}
\ee
where~(i) holds because $\mm^{\otimes n}(Q_n)\in \FF_n$ due to Axiom~\ref{new_ax_depolarising}, and in~(ii) we employed the symbol-by-symbol blurring lemma (Lemma~\ref{sbs_blurring_lemma}) with the substitutions
\bb
\lambda \mapsto \lambda + \tfrac{o_L(n)}{n}\, ,\quad \mu \mapsto \tfrac{o_R(n)}{n}\, ,
\ee
and the notation is from~\eqref{sbs_blurring_2}--\eqref{sbs_blurring_3}. Note that by assumption
\bb
\min_{x\in \supp(P)} R(x) \geq \min_{x\in \supp(R)} R(x) \geq c > 0\, ,
\ee
meaning that condition~\eqref{sbs_blurring_0} is met.

We now fix $\eta>0$ small enough (depending on $\Delta$ and $c$) such that
\bb
\eta \log\tfrac1c + g(\eta) \leq \frac{\Delta}{3}\, ,
\label{meta_lemma_proof_eq4}
\ee
where $g$ is the function defined by~\eqref{g}. That this is possible, naturally, follows from the fact that $\lim_{\eta\to 0^+} \left(\eta \log\tfrac1c + g(\eta)\right) = 0$. 

Since the function $F_{c'}$ is uniformly continuous, from~\eqref{sbs_blurring_2}--\eqref{sbs_blurring_3} it is not difficult to see that we have
\bb
2\,F_{\min\{c,\,1/|\XX|\}}\left(\sqrt{\tfrac{2o_R(n)}{n \log e}} + \theta_{|\XX|,\,\eta}(\xi,n)\right) + \widetilde{o}_{|\XX|,\,\eta}\big(\tfrac{1}{n}\big) \tendsn{u} \phi(\xi)\, ,
\label{meta_lemma_proof_uniform_continuity_0}
\ee
uniformly for all $\xi \in (0,1/3)$. Here, $\phi$ is the function defined by~\eqref{explicit_choice_phi}.

The justification of~\eqref{meta_lemma_proof_uniform_continuity_0} requires some elaboration. First, due to the second identity in~\eqref{healing_fears}, for any $\e_0>0$ we have that $\left|\widetilde{o}_{|\XX|,\,\eta}\big(\tfrac{1}{n}\big)\right| \leq \frac{\e_0}{3}$ for all sufficiently large $n$ (depending only on $|\XX|$ and on $\eta$, which has been fixed as a function of $\Delta$ and $c$ alone). Secondly, since $F_{\min\{c,\,1/|\XX|\}}$ is uniformly continuous, we will also have 
\bb
\left| F_{\min\{c,\,1/|\XX|\}}(t) - F_{\min\{c,\,1/|\XX|\}}(t') \right|\leq \frac{\e_0}{6}
\label{meta_lemma_proof_uniform_continuity_1}
\ee
if we can guarantee that $|t-t'|\leq \e_1$, for some sufficiently small $\e_1$ (depending only on $\e_0$, $c$, and $|\XX|$). Thirdly, up to taking $n$ sufficiently large (depending only on $o_L$ and $o_R$), we can also make sure that $\left|\frac{o_L(n)}{n}\right| \leq \frac{\e_0}{3}$ and $\sqrt{\tfrac{2o_R(n)}{n \log e}} \leq \frac{\e_1}{2}$. Fourthly, inspect the explicit expression of $\theta_{|\XX|,\,\eta}(\xi,n)$ in~\eqref{sbs_blurring_2}, recalling: (a)~the aforementioned fact that $\eta$ is fixed, and (b)~the uniform continuity of the square root over the whole half-line $[0,\infty)$. Using~(a) and~(b), it is elementary to see that, for all sufficiently large $n$ (depending on $\Delta$, $c$, and $|\XX|$, but not on $\xi$), we have
\bb
\left| \theta_{|\XX|,\,\eta}(\xi,n) - \theta_{|\XX|,\,\eta}(\xi,\infty) \right| \leq \frac{\e_1}{2}\, ,
\ee
where $\theta_{|\XX|,\,\eta}(\xi,\infty) \coloneqq \lim_{m\to\infty} \theta_{|\XX|,\,\eta}(\xi,m)$. Hence,
\bb
\left| \sqrt{\tfrac{2o_R(n)}{n \log e}} + \theta_{|\XX|,\,\eta}(\xi,n) - \theta_{|\XX|,\,\eta}(\xi,\infty) \right| \leq \frac{\e_1}{2} + \frac{\e_1}{2} = \e_1\, ,
\ee
implying, via~\eqref{meta_lemma_proof_uniform_continuity_1}, that
\bb
&\left| 2\,F_{\min\{c,\,1/|\XX|\}}\left( \sqrt{\tfrac{2o_R(n)}{n \log e}} + \theta_{|\XX|,\,\eta}(\xi,n) \right) - \phi(\xi) \right| \\
&\qquad = 2 \left| F_{\min\{c,\,1/|\XX|\}}\left( \sqrt{\tfrac{2o_R(n)}{n \log e}} + \theta_{|\XX|,\,\eta}(\xi,n) \right) - F_{\min\{c,\,1/|\XX|\}}\left( \theta_{|\XX|,\,\eta}(\xi,\infty)\right) \right|\\
&\qquad \leq \frac{\e_0}{3}\, ;
\ee
putting all together, we have
\bb
&\left| \tfrac{o_L(n)}{n} + 2\,F_{\min\{c,\,1/|\XX|\}}\left(\sqrt{\tfrac{2o_R(n)}{n \log e}} + \theta_{|\XX|,\,\eta}(\xi,n)\right) + \widetilde{o}_{|\XX|,\,\eta}\big(\tfrac{1}{n}\big) - \phi(\xi) \right| \\
&\qquad \leq \left| \tfrac{o_L(n)}{n} \right| + \left|2\,F_{\min\{c,\,1/|\XX|\}}\left(\sqrt{\tfrac{2o_R(n)}{n \log e}} + \theta_{|\XX|,\,\eta}(\xi,n)\right) - \phi(\xi) \right| + \left|\widetilde{o}_{|\XX|,\,\eta}\big(\tfrac{1}{n}\big) \right| \\
&\qquad \leq \frac{\e_0}{3} + \frac{\e_0}{3} + \frac{\e_0}{3} \\
&\qquad = \e_0\, .
\ee

This completes the justification of~\eqref{meta_lemma_proof_uniform_continuity_0}, which in turn entails the existence of some $N = N(\Delta,c,o_L,o_R,|\XX|)$ such that
\bb
\frac1n\, D_{\max}^\eta\big( P^{\otimes n} \,\big\|\, \FF_n \big) \leq \lambda + \phi(\xi) + \frac{\Delta}{3}
\label{meta_lemma_proof_eq5}
\ee
for all $n\geq N$. For future use, up to increasing $N$ we can also make sure that
\bb
N \geq \frac{3}{\Delta}\, .
\label{meta_lemma_proof_eq6}
\ee

We now use the above bound on the smooth max-relative entropy distance from $\FF_n$ to constrain the \emph{standard} relative entropy distance from $\FF_n$. For all $n\geq N$ and all $P'_n\in \PP(\XX^n)$ with $\frac12 \left\|P'_n - P^{\otimes n}\right\|_1\leq \eta$, we have
\bb
D\big(P^{\otimes n}\, \big\|\, \FF_n\big) &\leqt{(iii)} D\big(P'_n\, \big\|\, \FF_n\big) + n \left(\eta \log\tfrac1c + g(\eta)\right) + h_2(\eta) \\
&\leqt{(iv)} D_{\max}\big(P'_n\, \big\|\, \FF_n\big) + n \left(\eta \log\tfrac1c + g(\eta)\right) + h_2(\eta) \\
&\leqt{(v)} D_{\max}\big(P'_n\, \big\|\, \FF_n\big) + \frac{2n}{3}\,\Delta\, .
\label{meta_lemma_proof_eq7}
\ee
Here, in~(iii) we used Lemma~\ref{ac_relent_resource_lemma}, which is applicable because Axiom~\ref{new_ax_depolarising} holds, with $P_n\mapsto P^{\otimes n}$ and $\e\mapsto \eta$; the inequality in~(iv), instead, follows from~\eqref{D_vs_D_max}, while in~(v) we used~\eqref{meta_lemma_proof_eq4} and observed that $h_2(\eta)\leq 1 \leq \frac{N\Delta}{3} \leq \frac{n\Delta}{3}$ due to~\eqref{meta_lemma_proof_eq6}. Minimising the rightmost side of~\eqref{meta_lemma_proof_eq7} over $P'_n$ shows that
\bb
D\big(P^{\otimes n}\, \big\|\, \FF_n\big) \leq D_{\max}^\eta \big(P^{\otimes n}\, \big\|\, \FF_n\big) + \frac{2n}{3}\,\Delta\, .
\label{meta_lemma_proof_eq8}
\ee
Combining~\eqref{meta_lemma_proof_eq5} and~\eqref{meta_lemma_proof_eq8} shows that
\bb
\frac1n\, D\big(P^{\otimes n}\, \big\|\, \FF_n\big) &\leq \lambda + \phi(\xi) + \Delta
\ee
holds for all $n\geq N$, thereby concluding the proof.
\end{proof}

Considering the special case of Lemma~\ref{meta_lemma} where $Q_n$ is permutationally symmetric and also $\lambda = 0$, we obtain the following simplified statement.

\begin{lemma}[(Meta-lemma, simplified form)] \label{meta_lemma_perm_symm}
For a finite alphabet $\XX$, let $(\FF_n)_n$ be a sequence of convex sets $\FF_n\subseteq \PP(\XX^n)$ that obeys Axioms~\ref{new_ax_depolarising} and~\ref{new_ax_closed_permutations}, the former with respect to a probability distribution $R\in \PP(\XX)$ and a constant $c$ such that $\min_{x\in \supp(R)} R(x) \geq c > 0$. For any $\Delta>0$, we can find $N = N(\Delta,c,|\XX|) \in \N^+$ such that, for all $n\geq N$, $Q_n\in \FF_n$, $V\in \TT_n$, and $P\in \PP(\XX)$ such that $\supp(P)\subseteq \supp(R)$ and $\frac12\|V-P\|_1\leq \xi\in (0,1/3)$, we have
\bb
Q_n(T_{n,V}) \leq \exp\left[ - D(P^{\otimes n} \| \FF_n) + n (\phi(\xi) + \Delta) \right] ,
\label{meta_lemma_perm_symm_bound}
\ee
where $\phi$ is a continuous function that depends only on $c$ and $|\XX|$ and vanishes at $0$. If $\FF$ obeys also Axiom~\ref{new_ax_tensor_products_n_and_m}, then we can furthermore write, again for $n\geq N$,
\bb
Q_n(T_{n,V}) \leq \exp\left[ - n \left(D^\infty(P \| \FF) - \phi(\xi) - \Delta\right) \right] .
\label{meta_lemma_perm_symm_bound_infty}
\ee
\end{lemma}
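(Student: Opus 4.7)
The plan is to reduce the simplified statement directly to the main meta-lemma (Lemma~\ref{meta_lemma}) via a symmetrization argument. First, given $Q_n\in\FF_n$, I would define the symmetrized distribution $\widetilde{Q}_n \coloneqq \frac{1}{n!}\sum_{\pi\in S_n} Q_n\circ\pi$. Because each $\FF_n$ is convex and closed under permutations (Axioms~\ref{new_ax_closed_permutations}, together with convexity), we have $\widetilde{Q}_n\in\FF_n$. Moreover, since the type class $T_{n,V}$ is itself permutation-invariant, $\widetilde{Q}_n(T_{n,V}) = Q_n(T_{n,V}) \eqqcolon q_n$, and $\widetilde{Q}_n$ is constant on each type class, so that $\widetilde{Q}_n(x^n) = q_n/|T_{n,V}|$ for every $x^n\in T_{n,V}$.

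Next, I would invoke Lemma~\ref{meta_lemma} with $Q_n \mapsto \widetilde{Q}_n$, $o_L\equiv o_R\equiv 0$, and the choice $\lambda \coloneqq -\tfrac{1}{n}\log q_n \geq 0$ (which is non-negative because $q_n\in[0,1]$). The hypothesis~\eqref{meta_lemma_0} is then automatically satisfied with equality: \emph{every} $x^n\in T_{n,V}$ achieves $\widetilde{Q}_n(x^n) = \exp[-n\lambda]/|T_{n,V}|$, so the cardinality on the left is exactly $|T_{n,V}|$. The conclusion~\eqref{meta_lemma_1} then reads $\tfrac{1}{n}D(P^{\otimes n}\|\FF_n)\leq -\tfrac{1}{n}\log q_n + \phi(\xi)+\Delta$, and rearranging gives the first bound~\eqref{meta_lemma_perm_symm_bound}. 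The threshold $N(\Delta,c,|\XX|)$ depends only on $\Delta$, $c$ and $|\XX|$ since $o_L,o_R$ are zero.

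For the second inequality~\eqref{meta_lemma_perm_symm_bound_infty}, I would appeal to the additional Axiom~\ref{new_ax_tensor_products_n_and_m} to argue that the sequence $a_n \coloneqq D(P^{\otimes n}\|\FF_n)$ is subadditive: given near-optimizers $Q_k\in\FF_k$ and $Q_m\in\FF_m$, tensor-product closure yields $Q_k\otimes Q_m\in\FF_{k+m}$, whence $a_{k+m}\leq D(P^{\otimes(k+m)}\|Q_k\otimes Q_m) = a_k+a_m$ after minimizing. By Fekete's lemma, $\lim_{n\to\infty}\tfrac{1}{n}a_n = \inf_n \tfrac{1}{n}a_n = D^\infty(P\|\FF)$, hence $\tfrac{1}{n}a_n \geq D^\infty(P\|\FF)$ for every $n$. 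Substituting this into~\eqref{meta_lemma_perm_symm_bound} yields~\eqref{meta_lemma_perm_symm_bound_infty}.

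There is no real obstacle here, since the heavy lifting has already been done by Lemma~\ref{meta_lemma}. The only subtle point worth emphasising is that the symmetrization step genuinely requires both convexity of $\FF_n$ and closure under permutations (Axiom~\ref{new_ax_closed_permutations}); neither alone suffices. The rest is bookkeeping: setting $\lambda=-\tfrac{1}{n}\log q_n$, checking that the flatness hypothesis~\eqref{meta_lemma_0} is saturated, and for the regularised bound, invoking the standard Fekete argument enabled by Axiom~\ref{new_ax_tensor_products_n_and_m}.
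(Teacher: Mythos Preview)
Your proposal is correct and follows essentially the same approach as the paper: symmetrize $Q_n$ using convexity and Axiom~\ref{new_ax_closed_permutations}, then apply Lemma~\ref{meta_lemma} with $o_L=o_R=0$ and $\lambda=-\tfrac1n\log Q_n(T_{n,V})$, and finally use Fekete's lemma under Axiom~\ref{new_ax_tensor_products_n_and_m} for the regularised bound. The paper's proof is identical in structure, only phrasing the symmetrization as ``without loss of generality assume $Q_n$ is permutationally invariant'' rather than explicitly constructing $\widetilde{Q}_n$.
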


Before we delve into the proof, let us pause for a moment to appreciate the intuitive nature of the above result. To this end, we set as usual $P=V$, so that $\xi=0$. In short, Lemma~\ref{meta_lemma_perm_symm} states that any sequence of hypotheses $\FF = (\FF_n)_n$ that obeys some minimal assumptions, such as Axioms~\ref{new_ax_depolarising} and~\ref{new_ax_tensor_products_n_and_m}, must have the property that any $Q_n\in \FF_n$ assigns an exponentially suppressed weight to all type classes $T_{n,P}$ with $D^\infty(P\|\FF) > 0$. This will typically hold for all $P\notin \FF_1$, at least whenever Axiom~\ref{BP_ax_6} is obeyed. When that is the case, any $Q_n\in \FF_n$ will output strings that have, with high probability, approximately free type. Another more compact way of expressing the same concept is that $\FF$ should be approximately closed under the operation of taking types.

\begin{proof}[Proof of Lemma~\ref{meta_lemma_perm_symm}]
Start by observing that $Q_n(T_{n,V})$ is invariant under permutations of the random variables $X^n\sim Q_n$. Therefore, without affecting the value of $Q_n(T_{n,V})$, thanks to Axiom~\ref{new_ax_closed_permutations} and to the convexity of $\FF_n$, we can assume that $Q_n\in \FF_n$ is permutationally invariant. With this in mind, note that
\bb
Q_n(x^n) = \frac{Q_n(T_{n,V})}{|T_{n,V}|} \qquad \forall\ x^n\in T_{n,V}\, .
\ee
We can therefore apply Lemma~\ref{meta_lemma} with the substitutions
\bb
o_L,o_R\mapsto 0\, ,\qquad \lambda \mapsto -\frac1n \log Q_n(T_{n,V})\, ,
\ee
which lets us obtain the bound
\bb
\frac1n\, D(P^{\otimes n}\|\FF_n) \leq \lambda + \phi(\xi) + \Delta = -\frac1n \log Q_n(T_{n,V}) + \phi(\xi) + \Delta\, .
\ee
Massaging the above inequality yields~\eqref{meta_lemma_perm_symm_bound}. Finally, if Axiom~\ref{new_ax_tensor_products_n_and_m} then the sequence $n\mapsto \rel{D}{P^{\otimes n}}{\FF_n}$ is easily seen to be sub-additive, implying, via Fekete's lemma~\cite{Fekete1923}, that $D^\infty(P\|\FF) \leq \frac1n \rel{D}{P^{\otimes n}}{\FF_n}$ for all $n$. Plugging this inequality into~\eqref{meta_lemma_perm_symm_bound} gives~\eqref{meta_lemma_perm_symm_bound_infty}.
\end{proof}

\subsection{Verifying type stability (Axiom~\ref{new_ax_type_stability})}

As discussed, Axiom~\ref{new_ax_type_stability} might be rather impractical to verify directly. To facilitate this step, we have proposed Axiom~\ref{new_ax_filtering}, and mentioned that it can be used to check Axiom~\ref{new_ax_type_stability}. We now set out to explain why. The following key lemma is a slight rephrasing of a result due to Piani~\cite[Theorem~1]{Piani2009}.

\begin{lemma} \label{superadd_filtered_lemma}
For a finite alphabet $\XX$, let $(\FF_n)_n$ be a sequence of sets $\FF_n \subseteq \PP(\XX^n)$ that obeys Axioms~\ref{new_ax_filtering} and that is closed under the operation of discarding all but the last symbol, in the sense that for all $n\in \N^+$ and all $Q_n = Q_{X_1\ldots X_n}\in \FF_n$, we have $Q_{X_n}\in \FF_1$. Then, for all $n\in \N^+$ and all $P_1,\ldots, P_n\in \PP(\XX)$,
\bb
\rel{D}{P_1 \otimes \ldots \otimes P_n}{\FF_n} \geq \rel{D}{P_1 \otimes \ldots \otimes P_{n-1}}{\FF_{n-1}} + D^W(P_n\| \FF_1)\, ,
\label{superadd_filtered_1}
\ee
where $W:\XX\to \YY$ is the channel from Axiom~\ref{new_ax_filtering}. In particular, for any $P\in \PP(\XX)$, using the notation in~\eqref{general_filtered_relent} we have
\bb
D^W(P\|\FF_1) \leq \frac1n\, \rel{D}{P^{\otimes n}}{\FF_n}\, .
\label{superadd_filtered_2}
\ee
\end{lemma}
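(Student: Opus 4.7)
The plan is to prove the superadditivity bound~\eqref{superadd_filtered_1} by combining the data-processing inequality with a careful, order-sensitive chain rule for the relative entropy; the bound~\eqref{superadd_filtered_2} will then follow by iteration.

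Concretely, I would fix an arbitrary $Q_n\in \FF_n$ and apply data processing under the channel $\id^{\otimes(n-1)}\otimes W$, which filters only the $n$-th symbol and produces the joint $Q_{X_1\ldots X_{n-1} Y_n}$, with $Y_n\coloneqq W(X_n)$. This step preserves the null distribution as $P_1\otimes\ldots\otimes P_{n-1}\otimes W(P_n)$ and reduces the task to lower bounding
\[
D\big(P_1\otimes \ldots\otimes P_{n-1}\otimes W(P_n)\,\big\|\,Q_{X_1\ldots X_{n-1} Y_n}\big).
\]
I would then expand this divergence via the chain rule in the order that conditions on $Y_n$ \emph{first}, obtaining the sum of a marginal term $D(W(P_n)\|Q_{Y_n})$ and a conditional expectation $\E_{y_n\sim W(P_n)}\!\big[D(P_1\otimes\ldots\otimes P_{n-1}\,\big\|\,Q_{X_1\ldots X_{n-1}|Y_n=y_n})\big]$; the independence of the first $n-1$ symbols under the null makes the conditional of the null coincide with its marginal.

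At this point the two input hypotheses fit in exactly. By Axiom~\ref{new_ax_filtering}.B, each posterior $Q_{X_1\ldots X_{n-1}|Y_n=y_n}$ lies in $\FF_{n-1}$, so every summand in the expectation is at least $D(P_1\otimes\ldots\otimes P_{n-1}\|\FF_{n-1})$, and the expectation is no smaller. The hypothesis of closure under discarding all but the last symbol gives $Q_{X_n}\in \FF_1$, hence $Q_{Y_n}=W(Q_{X_n})$ with $Q_{X_n}\in \FF_1$, so $D(W(P_n)\|Q_{Y_n})\geq D^W(P_n\|\FF_1)$ by definition of $D^W(\cdot\|\FF_1)$. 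Taking the infimum over $Q_n\in \FF_n$ of the resulting lower bound on $D(P_1\otimes\ldots\otimes P_n\|Q_n)$ yields~\eqref{superadd_filtered_1}. Finally,~\eqref{superadd_filtered_2} is obtained by specialising to $P_1=\ldots=P_n=P$ and iterating~\eqref{superadd_filtered_1} $n-1$ times (starting from the trivial $0$-symbol case), which gives $D(P^{\otimes n}\|\FF_n)\geq n\, D^W(P\|\FF_1)$.

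The main subtlety is the chain-rule ordering. The plain marginal $Q_{X_1\ldots X_{n-1}}$ need not belong to $\FF_{n-1}$, so conditioning on $X_1\ldots X_{n-1}$ first would fail to close the recursion against the sequence $\FF$; it is exactly Axiom~\ref{new_ax_filtering}.B that guarantees the conditional distributions after filtering and conditioning on $Y_n$ lie in $\FF_{n-1}$, which is the hinge on which the whole argument turns. Informational completeness of $W$ (Axiom~\ref{new_ax_filtering}.A) is not needed for the inequality itself, but it is what ensures that the single-letter quantity $D^W(P\|\FF_1)$ is non-degenerate and hence useful downstream.
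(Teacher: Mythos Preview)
Your proposal is correct and follows essentially the same route as the paper's proof: apply data processing under $\id^{\otimes(n-1)}\otimes W$, split the resulting divergence via the chain rule with $Y_n$ as the outer variable, then invoke Axiom~\ref{new_ax_filtering}.B on the conditional term and the closure-under-marginalisation hypothesis on the $Y_n$-marginal term. Your explicit remark that informational completeness (Axiom~\ref{new_ax_filtering}.A) is not used in the inequality itself, and your discussion of why the chain-rule ordering matters, are useful observations that the paper leaves implicit.
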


\begin{proof}
For any pair of random variables $X,Y$ and associated probability distributions $P_X$, $P_Y$ or $Q_{XY}$, an explicit calculation reveals that
\bb
\rel{D}{P_X\otimes P_Y}{Q_{XY}} 
&= D(P_Y\|Q_Y) + \sum_y P_Y(y)\, \rel{D}{P_X}{Q_{X|Y=y}}\, .
\label{superadd_filtered_proof_eq1}
\ee
Now, consider $n$ random variables $X_1,\ldots, X_n$ on $\XX$, for whose distribution we have the two hypotheses $P_{X_1} \otimes \ldots \otimes P_{X_n} = P_1 \otimes \ldots \otimes P_n$ or $Q_{X_1\ldots X_n} = Q_n \in \FF_n$. We can apply the channel $W$ to $X_n$, thus obtaining the random variable $Y_n$; the two joint probability distributions of $X_1,\ldots, X_{n-1}$ and $Y_n$ will be denoted by $P_{X_1} \otimes \ldots \otimes P_{X_{n-1}}\otimes P_{Y_n}$ and $Q_{X_1\ldots X_{n-1}Y_n}$, respectively. We can then write
\bb
\rel{D}{P_1 \otimes \ldots \otimes P_n}{Q_n} &= \rel{D}{P_{X_1} \otimes \ldots \otimes P_{X_n}}{Q_{X_1\ldots X_n}} \\
&\geqt{(i)} \rel{D}{P_{X_1} \otimes \ldots \otimes P_{X_{n-1}}\otimes P_{Y_n}}{Q_{X_1\ldots X_{n-1} Y_n}} \\[2pt]
&\eqt{(ii)} \rel{D}{P_{Y_n}}{Q_{Y_n}} + \sum_{y_n} P_{Y_n}(y_n)\, \rel{D}{P_{X_1} \otimes \ldots \otimes P_{X_{n-1}}}{Q_{X_1\ldots X_{n-1}\,|\,Y_n=y_n}} \\[-7pt]
&\geqt{(iii)} D^W(P_n\|\FF_1) + \rel{D}{P_1 \otimes \ldots \otimes P_{n-1}}{\FF_{n-1}}\, .
\ee
Here, (i)~follows from data processing, (ii)~comes from~\eqref{superadd_filtered_proof_eq1}, and in~(iii) we observed that on the one hand $Q_{Y_n} = W(Q_{X_n})$, and $Q_{X_n} \in \FF_1$ because $(\FF_n)_n$ is closed under the operation of discarding all symbols except the last one, while on the other $Q_{X_1\ldots X_{n-1}\,|\,Y_n=y_n}\in \FF_{n-1}$ for all $y_n\in \YY$ by Axiom~\ref{new_ax_filtering}, so that $\rel{D}{P_{X_1} \otimes \ldots \otimes P_{X_{n-1}}}{Q_{X_1\ldots X_{n-1}\,|\,Y_n=y_n}}\geq  \rel{D}{P_1 \otimes \ldots \otimes P_{n-1}}{\FF_{n-1}}$. This proves~\eqref{superadd_filtered_1}.

To derive also~\eqref{superadd_filtered_2}, we simply apply~\eqref{superadd_filtered_1} iteratively $n$ times, isolating all variables one by one, from the last to the first.
\end{proof}

\begin{prop} \label{verifying_type_stability_prop}
For a finite alphabet $\XX$, let $\FF = (\FF_n)_n$ be a sequence of convex sets $\FF_n \subseteq \PP(\XX^n)$ that obeys Axioms~\ref{new_ax_depolarising},~\ref{new_ax_closed_permutations}, and~\ref{new_ax_filtering}, and such that $\FF_1$ is topologically closed. Then $\FF$ also obeys Axiom~\ref{new_ax_type_stability}.
\end{prop}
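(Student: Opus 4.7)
The plan is to combine the simplified meta-lemma (Lemma~\ref{meta_lemma_perm_symm}) with the filtered-relative-entropy superadditivity of Lemma~\ref{superadd_filtered_lemma}. Assume that $P\in\PP(\XX)$ satisfies the hypothesis of Axiom~\ref{new_ax_type_stability}. First I would observe that necessarily $\supp(P)\subseteq\supp(R)$: if $P(x_0)=\eta>0$ for some $x_0\notin\supp(R)$, then every $Q_n\in\FF_n$ is supported on $\supp(R)^n$ by Axiom~\ref{new_ax_depolarising}, so $\tfrac12\|P_{X^n}-P\|_1\geq\eta/2$ with $Q_n$-probability $1$, contradicting the hypothesis for $\delta<\eta/2$. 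Exploiting the permutation invariance of the type-ball event, I may also symmetrise any witness via Axiom~\ref{new_ax_closed_permutations} and convexity, and thereby assume without loss of generality that the suprema defining the hypothesis are attained on permutationally symmetric distributions $Q_n$.

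Next, fix $\delta\in(0,1/3)$ and $\Delta>0$; along some infinite subsequence of $n$'s (depending on $\delta$), pick a permutationally symmetric $Q_n\in\FF_n$ with $\pr_{X^n\sim Q_n}\{\tfrac12\|P_{X^n}-P\|_1\leq\delta\}\geq n^{-K}$. Since the number of $n$-types in the ball is at most $(n+1)^{|\XX|}$, pigeonholing produces an $n$-type $V_n$ with $\tfrac12\|V_n-P\|_1\leq\delta$ and $Q_n(T_{n,V_n})\geq n^{-K}(n+1)^{-|\XX|}$. All hypotheses of Lemma~\ref{meta_lemma_perm_symm} now hold, and plugging this lower bound into~\eqref{meta_lemma_perm_symm_bound} yields, for every sufficiently large $n$ in the subsequence,
\begin{equation*}
\frac{1}{n}\,D\bigl(P^{\otimes n}\bigm\|\FF_n\bigr)\;\leq\;\phi(\delta)+\Delta+\frac{K\log n+|\XX|\log(n+1)}{n}\, .
\end{equation*}

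The final step is to lower-bound the left-hand side by $D^W(P\|\FF_1)$ via Lemma~\ref{superadd_filtered_lemma}. Its hypothesis --- that every single-coordinate marginal $Q_{X_i}$ of any $Q_n\in\FF_n$ lies in $\FF_1$ --- follows by combining Axioms~\ref{new_ax_closed_permutations} and~\ref{new_ax_filtering} with convexity: averaging the conditionals $Q_{X_1\ldots X_{n-1}\,|\,Y_n=y_n}\in\FF_{n-1}$ provided by Axiom~\ref{new_ax_filtering} over $y_n$ yields closure under discarding the last symbol, and precomposing with a permutation from Axiom~\ref{new_ax_closed_permutations} that moves any chosen coordinate into the last slot, then iterating, reduces $Q_n$ to $Q_{X_i}\in\FF_1$. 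Hence $D^W(P\|\FF_1)\leq\phi(\delta)+\Delta$ for all admissible $\delta,\Delta$; sending $\Delta\to 0^+$ and $\delta\to 0^+$ and using $\phi(0^+)=0$ gives $D^W(P\|\FF_1)=0$. Since $\FF_1$ is compact (closed in the finite-dimensional simplex $\PP(\XX)$), so is $W(\FF_1)$, and by lower semicontinuity of the relative entropy the infimum is attained at some $Q^*\in\FF_1$ with $W(P)=W(Q^*)$; the informational completeness of $W$ then forces $P=Q^*\in\FF_1$. The one genuinely delicate point is the verification of the marginalisation closure needed to invoke Lemma~\ref{superadd_filtered_lemma}, which requires combining Axioms~\ref{new_ax_closed_permutations} and~\ref{new_ax_filtering} with convexity in the correct order; everything else is a standard pigeonhole-and-limits argument driven by the meta-lemma.
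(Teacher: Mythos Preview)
Your proposal is correct and follows essentially the same route as the paper: verify $\supp(P)\subseteq\supp(R)$, symmetrise a near-optimal $Q_n$, pigeonhole onto a single type class, feed this into the simplified meta-lemma (Lemma~\ref{meta_lemma_perm_symm}), lower-bound $\tfrac1n D(P^{\otimes n}\|\FF_n)$ by $D^W(P\|\FF_1)$ via Lemma~\ref{superadd_filtered_lemma}, and conclude by compactness and informational completeness. The only imprecision is your phrase ``the suprema \ldots\ are attained'': since $\FF_n$ need not be closed, you should instead pick $Q_n$ within a constant factor (say $\tfrac12 n^{-K}$) of the supremum, exactly as the paper does --- this changes nothing downstream.
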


\begin{proof}
We claim that convexity of $\FF_n$, closedness under permutations (Axiom~\ref{new_ax_closed_permutations}), and Axiom~\ref{new_ax_filtering} together imply that $\FF$ is closed under the operation of discarding all but the last symbol, in the sense of the statement of Lemma~\ref{superadd_filtered_lemma}. Indeed, if $Q_n = Q_{X_1\ldots X_n} \in \FF_n$, denoting with $Y_i = W(X_i)$ the variables induced by acting with the channel $W$ from Axiom~\ref{new_ax_filtering}, we have
\bb
Q_{X_n}(x) &= \sum_{y_1,\ldots,y_{n-1}} Q_{Y_1\ldots Y_{n-1}X_n}(y_1,\ldots,y_{n-1},x) \\
&= \sum_{y_1,\ldots,y_{n-1}} Q_{Y_1\ldots Y_{n-1}}(y_1,\ldots,y_{n-1})\, Q_{X_n \,|\, Y_1=y_1,\ldots,Y_{n-1}=y_{n-1}}(x)\, .
\label{verifying_type_stability_proof_eq1}
\ee
Each one of the probability distributions $Q_{X_n \,|\, Y_1=y_1,\ldots,Y_{n-1}=y_{n-1}}$ belongs to $\FF_1$, because they are obtained by conditioning on the values of the variables $Y_1,\ldots, Y_{n-1}$; by Axiom~\ref{new_ax_filtering}, conditioning on these variables, one by one, sends elements of $\FF_m$ to elements of $\FF_{m-1}$; note that the fact that we are conditioning on the \emph{first} $n-1$ variables rather than on the \emph{last} is immaterial, thanks to Axiom~\ref{new_ax_closed_permutations}. Now, due to convexity, Eq.~\eqref{verifying_type_stability_proof_eq1} entails that $Q_{X_n}\in \FF_1$, as claimed. We can now immediately apply Lemma~\ref{superadd_filtered_lemma}, which guarantees that 
\bb
D^W(P\|\FF_1) \leq \frac1n\, \rel{D}{P^{\otimes n}}{\FF_n}\qquad \forall\ P\in \PP(\XX)\, .
\label{verifying_type_stability_proof_eq2}
\ee

We now set out to verify Axiom~\ref{new_ax_type_stability}. Let $P\in \PP(\XX)$ and $K>0$ be such that, for all $\delta>0$,
\bb
\sup_{Q_n\in \FF_n} \pr_{X^n\sim Q_n}\!\big\{ \tfrac12 \|P_{X^n} - P \|_1 \leq \delta \big\} \geq \frac{1}{n^K} \qquad \forall\ n\in I\, ,
\label{verifying_type_stability_proof_eq3}
\ee
where $I\subseteq \N^+$ is infinite. Note that we can assume without loss of generality that $\delta$ is sufficiently small, e.g.\ that $\delta<1/3$. The first property of $P$ we record is that 
\bb
\supp(P)\subseteq \supp(R)\, ,
\label{verifying_type_stability_proof_eq4}
\ee
where $R\in \PP(\XX)$ is the probability distribution given by Axiom~\ref{new_ax_depolarising}. In fact, if this were not the case we could take some $x_0\notin \supp(R)$ and some $0 <\delta < P(x_0)$, and observe that any $x^n\in \XX^n$ produced by any $Q'_n\in \FF_n$ with non-zero probability would satisfy
\bb
\frac12 \|P_{x^n} - P \|_1 \geq P(x_0) - P_{x^n}(x_0) = P(x_0) > \delta\, ,
\label{verifying_type_stability_proof_eq5}
\ee
where the last equality holds because $x^n\in \supp(Q'_n) \subseteq \supp(R)^n$ by Axiom~\ref{new_ax_depolarising}, implying that $\supp(P_{x^n})\subseteq \supp(R)$, and so $P_{x^n}(x_0) = 0$, as $x_0\notin \supp(R)$ by construction. Due to~\eqref{verifying_type_stability_proof_eq5}, we would have
\bb
\pr_{X^n\sim Q'_n}\!\big\{ \tfrac12 \|P_{X^n} - P \|_1 \leq \delta \big\} = 0
\ee
for all $n\in \N^+$ and all $Q'_n\in \FF_n$, in contradiction with~\eqref{verifying_type_stability_proof_eq3}.

Now, for every $n\in I$ (see~\eqref{verifying_type_stability_proof_eq3}), pick some $Q_n\in \FF_n$ satisfying
\bb
\pr_{X^n\sim Q_n}\!\big\{ \tfrac12 \|P_{X^n} - P \|_1 \leq \delta \big\} \geq \frac{1}{2n^K}\, .
\ee
Note that the left-hand side is invariant under permutations of the variables. Since $\FF_n$ is convex and closed under permutations, we can assume without loss of generality that $Q_n$ is permutation invariant. Re-writing then yields
\bb
\frac{1}{2n^K} &\leq \pr_{X^n\sim Q_n}\!\big\{ \tfrac12 \|P_{X^n} - P \|_1 \leq \delta \big\} = \sumno_{V\in \TT_n:\ \frac12\|V-P\|_1\leq \delta} Q_n(T_{n,V})\, ,
\ee
implying that there exists some type $V_n\in \TT_n$ obeying $\frac12\|V_n - P\|_1\leq \delta$ and
\bb
Q_n(T_{n,V_n}) \geq \frac{1}{2n^K |\TT_n|} \geq \frac{1}{2n^K(n+1)^{|\XX|}}\, ,
\label{verifying_type_stability_proof_eq5}
\ee
where the last estimate comes from~\eqref{counting_types}. 

Due to~\eqref{verifying_type_stability_proof_eq4}, we are in position to apply Lemma~\ref{meta_lemma_perm_symm} with the substitutions $V \mapsto V_n$ and $\xi \mapsto \delta\in (0,1/3)$. We obtain that for all $\Delta>0$ the inequality
\bb
Q_n(T_{n,V_n}) \leq \exp\left[ - D(P^{\otimes n}\|\FF_n) + n(\Delta + \phi(\delta)) \right]
\ee
holds for all sufficiently large $n\in I$. Using also~\eqref{verifying_type_stability_proof_eq2} and~\eqref{verifying_type_stability_proof_eq5}, this yields
\bb
\frac{1}{2n^K(n+1)^{|\XX|}} \leq Q_n(T_{n,V_n}) \leq \exp\left[ - n \left( D^W(P\|\FF_1) - \Delta - \phi(\delta) \right) \right] ,
\ee
which again must hold for all sufficiently large $n\in I$. Since on the left-hand side we have an inverse polynomial and on the right-hand side an exponential function, taking the limit $n\to\infty$ along $n\in I$ gives us the inequality
\bb
D^W(P\|\FF_1) \leq \Delta + \phi(\delta)\, .
\ee
Now, remembering that $\Delta$ and $\delta$ can be taken to be arbitrarily small, we see that this is only possible if in fact $D^W(P\|\FF_1) \leq 0$. Together with the trivial inequality $D^W(P\|\FF_1)\geq 0$, this shows that in fact $D^W(P\|\FF_1) = 0$. Owing to the lower semi-continuity of the (filtered) relative entropy with respect to the second argument and to the fact that $\FF_1$ is closed (and hence compact) by assumption, this implies that $D^W(P\|P') = 0$ for some $P'\in \FF_1$. Due to the information completeness of $W$ guaranteed by Axiom~\ref{new_ax_filtering}, this can only hold if $P=P'$. This completes the proof.
\end{proof}

\subsection{Proof of the doubly composite Chernoff--Stein's lemma (Theorem~\ref{double_Stein_thm})} \label{subsec_proof_double_Stein}

Here we present the proof of our main result, restated below for convenience.

\begin{manualthm}{\ref{double_Stein_thm}}[(Doubly composite Chernoff--Stein lemma)]
Let $\XX$ be a finite alphabet, and let $\RR = (\RR_n)_n$ and $\SS = (\SS_n)_n$ be two families of sets of probability distributions $\RR_n,\SS_n\subseteq \PP(\XX^n)$, representing the null and the alternative hypotheses, respectively. Assume that:
\begin{enumerate}[(a)]
\item $\RR$ satisfies Axioms~\ref{new_ax_tensor_powers} and~\ref{new_ax_type_stability}; also, $\RR_1$ is topologically closed;
\item $\SS$ satisfies Axiom~\ref{new_ax_depolarising};
\item either $\RR$ satisfies Axiom~\ref{new_ax_permutational_symmetry}, or $\SS$ satisfies Axiom~\ref{new_ax_closed_permutations}.
\end{enumerate}
Then the Stein exponent, defined by~\eqref{Stein}, is given by
\begin{equation}
\stein(\RR\|\SS) = \inf_{P\in \RR_1} D^\infty(P\|\co(\SS)) = \inf_{P\in \RR_1} \liminf_{n\to\infty} \frac1n\, \rel{D}{P^{\otimes n}}{\co(\SS_n)}\, .
\tag{\ref{double_Stein}}
\end{equation}
In particular, Eq.~\eqref{double_Stein} holds under assumption~(b), if in addition
\begin{enumerate}[(a')]
\item $\RR$ satisfies Axioms~\ref{new_ax_depolarising},~\ref{new_ax_tensor_powers}, and~\ref{new_ax_filtering}, all sets $\RR_n$ are convex, and $\RR_1$ is topologically closed; and 
\setcounter{enumi}{2}
\item either $\RR$ satisfies Axiom~\ref{new_ax_permutational_symmetry}, or both $\RR$ and $\SS$ satisfy Axiom~\ref{new_ax_closed_permutations}.
\end{enumerate}
\end{manualthm}

Before we delve into the proof, it is instructive to examine a simple class of examples showing that the formula in~\eqref{double_Stein}, in general, does not single-letterise in an obvious way. The following construction is designed to mimic a famous quantum example, that of Werner states~\cite{Werner}, where we take as $\FF$ the classical representation of the set of `positive partial transpose' Werner states~\cite{Audenaert2001}.

\begin{ex} \label{additivity_violation_ex}
Let $\XX = \{0,1\}$, and consider the lexicographic ordering on $\{0,1\}^n$. For some $\gamma \geq 1$ and all $n\in \N^+$, set
\bb
\FF_{\gamma,n} \coloneqq \left\{ P_n \in \PP\big(\{0,1\}^n\big):\ H_\gamma^{\otimes n} P_n \geq 0 \right\} ,\qquad H_\gamma \coloneqq \begin{pmatrix} \gamma & 1 \\ -1 & 1 \end{pmatrix} .
\ee
Here, we thought of $P_n$ as a (column) vector in $\R^{2^n}$, and the above inequality between vectors is to be understood entry-wise. It is a simple exercise to verify that $\FF_\gamma = \big(\FF_{\gamma,n}\big)_n$ satisfies all Axioms~\ref{BP_ax_convex_closed}--\ref{BP_ax_permutations} (and hence also Axioms~\ref{new_ax_depolarising}--\ref{new_ax_closed_permutations}, by the forthcoming Lemma~\ref{BP_ax_1_5_imply_new_ax_1_3_lemma}) for all $\gamma\geq 1$, and even Axiom~\ref{new_ax_filtering} (and so also Axiom~\ref{new_ax_type_stability}, by Proposition~\ref{verifying_type_stability_prop}) as long as $\gamma > 1$.\footnote{For instance, to verify Axiom~\ref{BP_ax_partial_traces}, note that $(1,1) = V_\gamma H_\gamma$, where $V_\gamma \coloneqq \frac{1}{\gamma+1}\left(2,\gamma\!-\!1\right) \geq 0$. This means that to discard any single symbol out of the initial $n$, which corresponds to multiplying by the row vector $(1,1)$ from the right at the corresponding location in the tensor product, we can first apply $H_\gamma$ and then multiply by $V_\gamma\geq 0$. Applying $H_\gamma^{\otimes (n-1)}$ then necessarily results in a non-negative vector. To verify Axiom~\ref{new_ax_filtering}, one defines the channel $\{0,1\}\to \{0,1\}$ given by the stochastic matrix $W_\gamma \coloneqq \big(W_\gamma(x|y)\big)_{x,y} = \lsmatrix 1 & 1/\gamma \\ 0 & 1-1/\gamma \rsmatrix$. Clearly, this is an informationally complete channel if $\gamma > 1$. Now, the key observation is that $W_\gamma = T_\gamma H_\gamma$, where $T_\gamma \coloneqq \frac{1}{\gamma(\gamma+1)} \lsmatrix \gamma+1 & 0 \\ \gamma-1 & \gamma(\gamma-1) \rsmatrix$ is entry-wise positive.}

However, for $P = (1,0)^\intercal$ and $\gamma<3$ one sees that
\bb
D(P\|\FF_1) = \log 2 > \frac12 \log(\gamma+1) \geq \frac12\, \rel{D}{P^{\otimes 2}}{\FF_2} \geq D^\infty(P\|\FF)\, .
\ee
The first equality follows by observing that $\FF_1 \coloneqq \big\{(p,1-p)^\intercal:\ p\in [0,1/2]\big\}$, the second inequality can be derived by writing $D(P^{\otimes 2}\|\FF_2) \leq D(P^{\otimes 2}\|Q_2) = \log(\gamma+1)$, with the ansatz $Q_2 \coloneqq \frac{1}{\gamma+1} \left(1,0,0,\gamma\right)^\intercal$, and the last inequality holds as usual by Fekete's lemma~\cite{Fekete1923}. Hence, in general the Stein exponent in~\eqref{double_Stein} cannot be written as $D(\RR_1\|\SS_1)$, even for a simple i.i.d.\ null hypothesis and under a much stronger set of axioms.
\end{ex}

\begin{proof}[Proof of Theorem~\ref{double_Stein_thm}]
We start by showing that~(a') and~(c') together imply~(a) and~(c). In fact, (c')~implies~(c) directly. Also, due to the fact that Axiom~\ref{new_ax_permutational_symmetry} is strictly stronger than Axiom~\ref{new_ax_closed_permutations}, if~(c') holds then necessarily $\RR$ satisfies Axiom~\ref{new_ax_closed_permutations}. With~(a'), we then have that $\RR$ satisfies  Axioms~\ref{new_ax_depolarising},~\ref{new_ax_tensor_powers},~\ref{new_ax_closed_permutations}, and~\ref{new_ax_filtering}, all sets $\RR_n$ are convex, and $\RR_1$ is also topologically closed. All assumptions of Proposition~\ref{verifying_type_stability_prop} are therefore met, implying that $\RR$ also obeys the type stability axiom (Axiom~\ref{new_ax_type_stability}). This completes the requirements needed for~(a). In what follows, we can therefore assume without loss of generality that $\RR$ and $\SS$ satisfy~(a),~(b), and~(c).

The converse statement in the main claim~\eqref{double_Stein} follows from the general bound in Lemma~\ref{converse_double_Stein_lemma}, once one observes that 
\bb
D^\infty(\co(\RR)\|\co(\SS)) \leq \inf_{P\in \RR_1} D^\infty(P\|\co(\SS))\, , 
\ee
as $P^{\otimes n}\in \RR_n \subseteq \co(\RR_n)$ for all $P\in \RR_1$ due to Axiom~\ref{new_ax_tensor_powers}. 

We now move on to achievability. In what follows, we will denote as $R\in \SS_1$ the probability distribution whose existence is guaranteed by Axiom~\ref{new_ax_depolarising} for $\SS$. The same axiom guarantees also that
\bb
\supp(Q_n) \subseteq \supp(R)^n,\qquad \forall\ n\in \N^+,\quad \forall\ Q_n\in \SS_n\, .
\label{double_Stein_proof_eq0}
\ee
We will also call $c$ the constant from Axiom~\ref{new_ax_depolarising}, so that $\min_{x\in \supp(R)} R(x) \geq c >0$. 

We start from the expression of the Stein exponent in terms of the regularised smooth max-relative entropy presented in Lemma~\ref{Stein_expression_with_D_max_lemma}, and precisely in~\eqref{Stein_expression_with_D_max_2}, proceeding by contradiction. Assume that there exists some $\e\in (0,1)$ and some real $\lambda > 0$ such that 
\bb
\liminf_{n\to\infty} \frac1n\, \rel{D_{\max}^\e}{\co(\RR_n)}{\co(\SS_n)} < \lambda < \inf_{P'\in \RR_1} D^\infty(P'\|\co(\SS))\, .
\label{double_Stein_proof_eq1}
\ee
This entails that there exists an infinite subset $I\subseteq \N$ such that for all $n\in I$ we can find 
\bb
P_n\in \co(\RR_n)\, ,\quad P'_n\in \PP(\XX^n)\,,\quad Q_n\in \co(\SS_n)\, ,
\label{double_Stein_proof_eq1_5}
\ee
such that  
\bb
\frac12 \left\|P_n - P'_n\right\|_1\leq \e\, ,\qquad P'_n \leq \exp[n\lambda]\, Q_n\, .
\label{double_Stein_proof_eq2}
\ee

We are now presented with two cases, according to which alternative holds in condition~(c) of the statement. We start by assuming that $\RR$ obeys Axiom~\ref{new_ax_permutational_symmetry}. Then, from the first inequality in~\eqref{double_Stein_proof_eq2} we see that
\bb
1-\e &\leq 1 - \frac12 \left\|P_n - P'_n\right\|_1 \\
&= \sum_{x^n} \min\left\{ P_n(x^n),\, P'_n(x^n)\right\} \\
&= \sum_{V\in \TT_n} \sum_{x^n\in T_{n,V}} \min\left\{ P_n(x^n),\, P'_n(x^n)\right\} \\
&\eqt{(i)} \sum_{V\in \TT_n} \sum_{x^n\in T_{n,V}} \min\left\{ \frac{P_n(T_{n,V})}{|T_{n,V}|},\, P'_n(x^n)\right\} \\
&\eqt{(ii)} \sum_{\substack{\\[1pt] V\in \TT_n:\\[1pt] \supp(V) \,\subseteq\, \supp(R)}} \sum_{x^n\in T_{n,V}} \min\left\{ \frac{P_n(T_{n,V})}{|T_{n,V}|},\, P'_n(x^n)\right\} \, ,
\label{double_Stein_proof_eq3}
\ee
where in~(i) we leveraged the fact that $P_n$ is necessarily permutationally symmetric (Eq.~\eqref{double_Stein_proof_eq1_5} together with Axiom~\ref{new_ax_permutational_symmetry} for $\RR$), while in~(ii) we noticed that only types $V$ such that $\supp(V) \subseteq \supp(R)$ contribute to the sum. In fact, if $\supp(V)\not\subseteq \supp(R)$, then $T_{n,V}\cap \supp(R)^n = \emptyset$, entailing, via~\eqref{double_Stein_proof_eq0}, that $Q_n(x^n) = 0$ for all $x^n\in T_{n,V}$; due to~\eqref{double_Stein_proof_eq2}, we thus have $P'_n(x^n) = 0$ for all $x^n\in T_{n,V}$, implying that the term of the outer sum corresponding to $V$ vanishes.

From~\eqref{double_Stein_proof_eq3} we infer that for all $n\in I$ there must exist a type $V_n\in \TT_n$ such that
\bb
\supp(V_n)\subseteq \supp(R)
\label{double_Stein_proof_eq4}
\ee
and
\bb
\sum_{x^n\in T_{n,V_n}} \min\left\{ \frac{P_n(T_{n,V_n})}{|T_{n,V_n}|},\, P'_n(x^n)\right\} \geq \frac{1-\e}{|\TT_n|}\, \geqt{(iii)}\, \frac{1-\e}{(n+1)^{|\XX|}}\, ,
\label{double_Stein_proof_eq5}
\ee
where~(iii) is from~\eqref{counting_types}. Neglecting the second terms in the above minimisation, we also obtain that
\bb
P_n(T_{n,V_n}) \geq \frac{1-\e}{(n+1)^{|\XX|}}\, .
\label{double_Stein_proof_eq6}
\ee

Since $\PP(\XX)$ is a compact set due to the finiteness of $\XX$, from the sequence $(V_n)_{n\in I}$ we can extract a subsequence $(V_n)_{n\in J}$, with $J\subseteq I$ infinite, such that
\bb
V_n \tends{}{\ n\in J\ } P \in \PP(\XX)\, ,\qquad \supp(P)\subseteq \supp(R)\, ,
\label{double_Stein_proof_eq7}
\ee
where the support inclusion relation is a consequence of~\eqref{double_Stein_proof_eq4}. For any $\delta>0$ and for all sufficiently large $n\in J$ (depending on $\delta$) we thus have
\bb
\sup_{\widetilde{P}_n\,\in\, \RR_n} \pr_{X^n\sim \widetilde{P}_n} \left\{ \tfrac12 \|P_{X^n} - P\|_1 \leq \delta \right\} &\eqt{(iv)} \sup_{\widetilde{P}_n\,\in\, \co(\RR_n)} \pr_{X^n\sim \widetilde{P}_n} \left\{ \tfrac12 \|P_{X^n} - P\|_1 \leq \delta \right\} \\
&\geqt{(v)} \pr_{X^n\sim P_n} \left\{ \tfrac12 \|P_{X^n} - P\|_1 \leq \delta \right\} \\
&\geqt{(vi)} P_n(T_{n,V_n}) \\
&\geqt{(vii)} \frac{1-\e}{(n+1)^{|\XX|}}\, .
\label{double_Stein_proof_eq8}
\ee
Here, (iv)~holds by linearity and (v)~due to~\eqref{double_Stein_proof_eq1_5}; in~(vi) we assumed that $n\in J$ is large enough so that $\frac12 \|V_n - P\|_1 \leq \delta$, while in~(vii) we employed~\eqref{double_Stein_proof_eq6}. We are now in a position to apply Axiom~\ref{new_ax_type_stability} for $\RR$, which guarantees that~\eqref{double_Stein_proof_eq7} can hold for infinitely many values of $n$ for each $\delta>0$ only if 
\bb
P\in \RR_1\, .
\label{double_Stein_proof_eq9}
\ee

So far we have analysed only the $\RR$ side of things. It is now time to bring in $\SS$, i.e.\ the alternative hypothesis. We start by going back to~\eqref{double_Stein_proof_eq5}, this time without simplifying away the term containing $P'_n(x^n)$. Setting
\bb
\YY_n \coloneqq \left\{ x^n\in T_{n,V_n}:\ P'_n(x^n) \geq \frac{1-\e}{2(n+1)^{|\XX|} |T_{n,V_n}|} \right\} ,
\label{double_Stein_proof_eq10}
\ee
Eq.~\eqref{double_Stein_proof_eq5} immediately implies that
\bb
\frac{1-\e}{(n+1)^{|\XX|}} &\leq \sum_{x^n\in T_{n,V_n}} \min\left\{ \frac{P_n(T_{n,V_n})}{|T_{n,V_n}|},\, P'_n(x^n)\right\} \\
&\leqt{(viii)}\ |\YY_n|\cdot \frac{1}{|T_{n,V_n}|} + \left( |T_{n,V_n}| - |\YY_n| \right) \cdot \frac{1-\e}{2(n+1)^{|\XX|} |T_{n,V_n}|} \\
&\leq\ \frac{|\YY_n|}{|T_{n,V_n}|} + \frac{1-\e}{2(n+1)^{|\XX|}}\, ,
\ee
where in~(viii) we partitioned the sum into two partial sums, comprising the terms where $x^n\in \YY_n$ and $x^n\notin \YY_n$, respectively. Therefore,
\bb
|\YY_n| \geq \frac{1-\e}{2(n+1)^{|\XX|}}\, |T_{n,V_n}|\, .
\label{double_Stein_proof_eq12}
\ee
Now, pick some small $\xi\in (0,1/3)$; from~\eqref{double_Stein_proof_eq7}, we infer that
\bb
\frac12 \left\| V_n - P \right\|_1 \leq \xi
\label{double_Stein_proof_eq13}
\ee
for all large enough $n\in J$. Remembering~\eqref{double_Stein_proof_eq2} and~\eqref{double_Stein_proof_eq10}, we see that
\bb
Q_n(y^n) \geq \exp[-n\lambda]\, P'_n(y^n) \geq \frac{(1-\e) \exp[-n\lambda]}{2(n+1)^{|\XX|} |T_{n,V_n}|} \qquad \forall\ y^n\in \YY_n\, .
\label{double_Stein_proof_eq14}
\ee
We can now apply our meta-lemma. To this end, we effect the following substitutions in the statement of Lemma~\ref{meta_lemma}:
\bb
\FF_n \mapsto \co(\SS_n)\, ,\quad V\mapsto V_n\, ,\quad o_L(n),\, o_R(n)\mapsto \log\tfrac{2(n+1)^{|\XX|}}{1-\e}\, ;
\ee
note that $\co(\SS) = \big(\co(\SS_n)\big)_n$ satisfies Axiom~\ref{new_ax_depolarising} because $\SS$ does. Also,
\bb
\left|\left\{x^n\in T_{n,V_n}:\ Q_n(x^n) \geq \frac{\exp[-n\lambda - o_L(n)]}{|T_{n,V_n}|} \right\}\right|\ \, &\geqt{(ix)}\ \, \left|\left\{x^n\in T_{n,V_n}:\ P'_n(x^n) \geq \frac{\exp[-o_L(n)]}{|T_{n,V_n}|} \right\}\right| \\ 
&\eqt{(x)}\ \, |\YY_n| \\
&\geqt{(xi)}\ \, \exp[- o_R(n)]\, |T_{n,V_n}|\, ,
\label{double_Stein_proof_eq16}
\ee
where~(ix) holds because the set on the right-hand side is included in that on the left-hand side, due to~\eqref{double_Stein_proof_eq2}, in~(x) we remembered~\eqref{double_Stein_proof_eq10}, and in~(xi) we employed~\eqref{double_Stein_proof_eq12}. We are thus truly in a position to apply Lemma~\ref{meta_lemma}: for all $\Delta>0$, we obtain that
\bb
\frac1n\, D\big(P^{\otimes n}\,\big\|\, \co(\SS_n)\big) \leq \lambda + \phi(\xi) + \Delta
\label{double_Stein_proof_eq17}
\ee
for all sufficiently large $n\in J$ (depending on $\Delta$, $\e$, $c$, and $|\XX|$), i.e.
\bb
\limsup_{n\in J} \frac1n\, D\big(P^{\otimes n}\,\big\|\, \co(\SS_n)\big) \leq \lambda + \phi(\xi)\, .
\label{double_Stein_proof_eq18}
\ee
In~\eqref{double_Stein_proof_eq17}--\eqref{double_Stein_proof_eq18}, $\phi$ is the function whose existence is predicted by Lemma~\ref{meta_lemma}. (An explicit choice is available in~\eqref{explicit_choice_phi}.) Since $\xi\in (0,1/3)$ was arbitrary (and $J$ is independent of $\xi$), we can now take the limit $\xi\to 0^+$, obtaining that
\bb
\limsup_{n\in J} \frac1n\, D\big(P^{\otimes n}\,\big\|\, \co(\SS_n)\big)\, &\leqt{(xii)}\, \lambda + \lim_{\xi\to 0^+} \phi(\xi) = \lambda\, ,
\label{double_Stein_proof_eq19}
\ee
where (xii)~holds because $\phi$ is continuous, with $\phi(0)=0$.

Therefore,
\bb
\inf_{P' \in \RR_1} D^\infty(P'\|\co(\SS))\ &=\, \inf_{P' \in \RR_1} \liminf_{n\to\infty} \frac1n\, \rel{D}{{P'}^{\hspace{.7pt}\otimes n}}{\co(\SS_n)} \\
&\leqt{(xiii)}\, \liminf_{n\in J} \frac1n\, \rel{D}{P^{\otimes n}}{\co(\SS_n)} \\
&\leqt{(xiv)}\, \lambda\, .
\label{double_Stein_proof_eq20}
\ee
Here, in~(xiii) we used the ansatz $P'=P$ and restricted $n$ to the subsequence $J$, while~(xiv) holds because of~\eqref{double_Stein_proof_eq19}. Eq.~\eqref{double_Stein_proof_eq20} is in contradiction with~\eqref{double_Stein_proof_eq1}, and this concludes the proof in the case where $\RR$ obeys Axiom~\ref{new_ax_permutational_symmetry} in condition~(c).

If, instead, in~(c) we only assume that $\SS$ obeys Axiom~\ref{new_ax_closed_permutations}, we can run more or less the same argument, with relatively minor modifications. Most importantly, in~\eqref{double_Stein_proof_eq1_5} and~\eqref{double_Stein_proof_eq2} we can symmetrise $P'_n$ and $Q_n$, obtaining new distributions $\widebar{P}'_n \coloneqq \E_\pi \left(P'_n\circ \pi\right)$ and $\widebar{Q}_n \coloneqq \E_\pi \left( Q_n\circ \pi\right)$, where $\pi$ is a uniformly random permutation of a string of $n$ symbols; we again have $\widebar{P}'_n \leq \exp[n\lambda]\, \widebar{Q}_n$ and moreover $\widebar{Q}_n\in \co(\SS_n)$, due to Axiom~\ref{new_ax_closed_permutations} for $\SS$; defining also $\widebar{P}_n \coloneqq \E_\pi \left(P_n\circ \pi\right)$, the convexity of the total variation distance yields
\bb
\frac12\, \big\| \widebar{P}_n - \widebar{P}'_n\big\|_1 \leq \e\, .
\ee
Naturally, in general we will have $\widebar{P}_n \notin \co(\RR_n)$; however, this will turn out not to matter.

We can repeat the calculation in~\eqref{double_Stein_proof_eq3} with $\widebar{P}_n$ and $\widebar{P}'_n$ instead of $P_n$ and $P'_n$. This means, in particular, that~\eqref{double_Stein_proof_eq5} still holds. Leveraging the permutational symmetry of $\widebar{P}'_n$ to write 
\bb
\widebar{P}'_n(x^n) = \frac{\widebar{P}'_n(T_{n,V_n})}{|T_{n,V_n}|}\qquad \forall\ x^n\in T_{n,V_n}
\label{double_Stein_proof_eq20}
\ee
in~\eqref{double_Stein_proof_eq5}, we are led to the inequality
\bb
\min\left\{ P_n(T_{n,V_n}),\, P'_n(T_{n,V_n})\right\} = \min\left\{ \widebar{P}_n(T_{n,V_n}),\, \widebar{P}'_n(T_{n,V_n})\right\} \geq \frac{1-\e}{(n+1)^{|\XX|}}\, ,
\label{double_Stein_proof_eq21}
\ee
where we also observed that permutational symmetrisation does not change the total weight on a given type class. This means, in particular, that Eq.~\eqref{double_Stein_proof_eq6} still holds. Then, also Eq.~\eqref{double_Stein_proof_eq7}--\eqref{double_Stein_proof_eq9} go through without any change. 

We can now re-write~\eqref{double_Stein_proof_eq10} with $\widebar{P}'_n$ instead of $P'_n$. Due to~\eqref{double_Stein_proof_eq20}--\eqref{double_Stein_proof_eq21}, we see that the new set $\YY_n$ produced by~\eqref{double_Stein_proof_eq10} actually coincides with $T_{n,V_n}$. Eq.~\eqref{double_Stein_proof_eq12} \emph{a fortiori} holds, so that~\eqref{double_Stein_proof_eq13}--\eqref{double_Stein_proof_eq16}, again with $P'_n \mapsto \widebar{P}'_n$ and $Q_n \mapsto \widebar{Q}_n$, follow. The rest of the proof can be run unchanged, leading to the contradiction~\eqref{double_Stein_proof_eq20}.
\end{proof}

\begin{rem}
In the case where, in Theorem~\ref{double_Stein_thm}(c), $\SS$ satisfies Axiom~\ref{new_ax_closed_permutations}, it is possible to devise a more direct proof of the claim. Defining
\bb
A_n(x^n) = \left\{ \begin{array}{ll} 1 & \text{ if $\min_{P\in \RR_1} \tfrac12 \|P_{x^n} - P\|_1\leq \delta$,} \\[4pt] 0 & \text{ otherwise,} \end{array} \right.
\ee
it is possible to show, using Axiom~\ref{new_ax_type_stability}, that the tests $A_n$ achieve a vanishing type I error probability. 
Using Lemma~\ref{meta_lemma}, one can then prove that these tests also achieve a type II error exponent that is arbitrarily close to $\inf_{P\in \RR_1} D^\infty(P\|\co(\SS))$.
\end{rem}

\subsection{Proof of Theorem~\ref{stronger_generalised_Sanov_thm}} \label{subsec_proof_stronger_Sanov}

In the forthcoming Section~\ref{sec_classical_applications} we will show how several of the prior result listed in Section~\ref{subsec_prior_results} can be subsumed, and in many cases refined, by our Theorem~\ref{double_Stein_thm}. To make this process smoother, we will first use Theorem~\ref{double_Stein_thm} to establish the slightly simplified Theorem~\ref{stronger_generalised_Sanov_thm},  already reported in Section~\ref{subsec_main_result_applications}. This latter result covers a more specialised class of alternative hypotheses than Theorem~\ref{double_Stein_thm}, but has the decisive advantage of leading to single-letter formulas. We start with two preliminary lemmas that on the one hand will put us in position to wield Theorem~\ref{double_Stein_thm} more easily, and on the other will allow us to efficiently derive useful corollaries from Theorem~\ref{stronger_generalised_Sanov_thm} itself.

\begin{lemma} \label{closed_F_1_iid_lemma}
Let $\FF_1\subseteq \PP(\XX)$ be a topologically closed set of probability distributions on the finite alphabet $\XX$, and let $\FF_1^{\mathrm{iid}} \coloneqq \big(\FF_1^{\otimes n,\, \mathrm{iid}}\big)_n$ be the associated sequence of composite i.i.d.\ hypotheses, defined as in~\eqref{F_n_iid}. Then $\FF_1^{\mathrm{iid}}$ satisfies the type stability axiom (Axiom~\ref{new_ax_type_stability}). Furthermore,
\bb
\rel{D^\infty}{P}{\co\big(\FF_1^{\mathrm{iid}}\big)} = \lim_{n\to\infty} \frac1n\, \rel{D}{P^{\otimes n}}{\co\big(\FF_1^{\otimes n,\,\mathrm{iid}}\big)} = D(P\|\FF_1) = \min_{Q\in \FF_1} D(P\|Q)\, ,
\label{closed_F_1_iid_simplifying_D_infty}
\ee
and the limit exists.
\end{lemma}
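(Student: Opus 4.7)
The plan is to reduce both claims---the identity~\eqref{closed_F_1_iid_simplifying_D_infty} and the type-stability Axiom~\ref{new_ax_type_stability}---to a single Sanov-type concentration bound for type classes near $P$, made uniform over $\FF_1$ by a compactness plus lower-semicontinuity argument. I would begin by setting $d \coloneqq D(P\|\FF_1)$ and observing that $\FF_1$ is compact (closed in the compact simplex $\PP(\XX)$) while $D(P\|\cdot)$ is lower semicontinuous, so the infimum is attained, yielding the rightmost equality in~\eqref{closed_F_1_iid_simplifying_D_infty}. The upper bound is then immediate: for a minimiser $Q^*\in \FF_1$, the ansatz $(Q^*)^{\otimes n}\in \FF_1^{\otimes n,\mathrm{iid}}\subseteq \co(\FF_1^{\otimes n,\mathrm{iid}})$ yields $\tfrac1n D\big(P^{\otimes n}\,\big\|\,\co(\FF_1^{\otimes n,\mathrm{iid}})\big)\leq D(P\|Q^*)=d$ for every $n$. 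This handles $\limsup_n$; once the matching $\liminf_n\geq d$ is proved, the limit will automatically exist and equal $d$.

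The heart of the argument is the uniform continuity statement
\[
\inf_{Q\in \FF_1,\, V\in B_\delta} D(V\|Q) \xrightarrow[\delta\to 0^+]{} d\, ,\qquad B_\delta \coloneqq \big\{V\in \PP(\XX):\ \tfrac12\|V-P\|_1\leq \delta\big\}\, .
\]
I would prove it by contradiction: any violating sequence of pairs $(V_k,Q_k)$ with $V_k\in B_{1/k}$, $Q_k\in\FF_1$ and $D(V_k\|Q_k)\leq d-\eta$ would, by compactness of $\FF_1$, admit a subsequential limit $(P,Q^*)$ with $Q^*\in \FF_1$, and joint lower semicontinuity of the relative entropy would then force $D(P\|Q^*)\leq d-\eta<d$, contradicting the definition of $d$. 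Feeding this into Sanov's theorem~\cite[Exercise~2.12, p.~29]{CSISZAR-KOERNER} and using linearity to extend from pure tensor powers $Q^{\otimes n}$ to convex combinations, I obtain, for every $\epsilon>0$, every sufficiently small $\delta>0$, and all $n$,
\[
\sup_{Q_n\in \co(\FF_1^{\otimes n,\mathrm{iid}})} Q_n(B_{n,\delta}) \leq (n+1)^{|\XX|}\exp\big(-n(d-\epsilon)\big)\, ,
\]
where $B_{n,\delta}\coloneqq \{x^n\in \XX^n: P_{x^n}\in B_\delta\}$.

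Both remaining claims then fall out in a few lines from this bound. For the identity, combining the estimate above with $P^{\otimes n}(B_{n,\delta})\to 1$ (law of large numbers) and the data-processing bound $D(P^{\otimes n}\|Q_n)\geq D_2(P^{\otimes n}(B_{n,\delta})\,\|\,Q_n(B_{n,\delta}))$, or equivalently~\eqref{lower_bound_D_with_D_H} applied to the test $A_n\coloneqq \mathds{1}_{B_{n,\delta}}$, gives $\tfrac1n D(P^{\otimes n}\|Q_n)\geq d-2\epsilon$ uniformly over $Q_n\in \co(\FF_1^{\otimes n,\mathrm{iid}})$ for all large $n$; sending $\epsilon\to 0$ closes the argument. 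For Axiom~\ref{new_ax_type_stability}, assuming $P\notin \FF_1$ together with the closedness of $\FF_1$ gives $d>0$, and the same Sanov bound with any $\epsilon<d$ shows that $\sup_{Q_n\in \FF_1^{\otimes n,\mathrm{iid}}} \pr_{X^n\sim Q_n}\big\{X^n\in B_{n,\delta}\big\}$ decays exponentially in $n$, contradicting the polynomial lower bound $n^{-K}$ postulated for infinitely many $n$. The only genuine obstacle I foresee is the uniformity of the Sanov-type continuity estimate over the set $\FF_1$, which at first sight might seem to demand a quantitative modulus of continuity uniform in the second argument; the compactness-plus-lsc argument sketched above sidesteps this cleanly, and everything else is routine.
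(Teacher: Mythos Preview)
Your proposal is correct and follows essentially the same approach as the paper: both arguments combine Sanov's theorem applied to the type-class neighbourhood $B_{n,\delta}$, the data-processing reduction to the binary relative entropy $D_2\big(P^{\otimes n}(B_{n,\delta})\,\big\|\,Q_n(B_{n,\delta})\big)$, the law of large numbers for $P^{\otimes n}(B_{n,\delta})\to 1$, and a compactness\,+\,lower-semicontinuity argument to pass to the limit $\delta\to 0^+$. The only cosmetic differences are that the paper invokes the sharp (polynomial-free) Sanov bound for the convex set $B_\delta(P)$ while you carry the $(n+1)^{|\XX|}$ factor, and the paper defers the lower-semicontinuity step to the very end rather than isolating it as a preliminary lemma; neither affects the substance.
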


\begin{proof}
We start from the first claim. For some $P\in \PP(\XX)$, define
\bb
B_\delta(P) &\coloneqq \left\{ P'\in \PP(\XX):\ \tfrac12\|P-P'\|_1\leq \delta \right\} , \\
T_{n,\, B_\delta(P)} &\coloneqq \left\{ x^n\in \XX^n:\ P_{x^n} \in B_\delta(P) \right\} .
\label{closed_F_1_iid_proof_eq1}
\ee
Then
\bb
\sup_{Q_n\in \FF_1^{\otimes n,\, \mathrm{iid}}} \pr_{X^n\sim Q_n}\!\big\{ \tfrac12 \|P_{X^n} - P \|_1 \leq \delta \big\} 
&=\, \max_{Q\in \FF_1} Q^{\otimes n}\big(T_{n,\, B_\delta(P)}\big) \\[-2ex]
&\leqt{(i)}\, \max_{Q\in \FF_1} \exp\left[ - n\, \rel{D}{B_\delta(P)}{Q} \right] \\
&=\, \exp\left[ - n\, \rel{D}{B_\delta(P)}{\FF_1} \right]
\label{closed_F_1_iid_proof_eq2}
\ee
where in~(i) we used Sanov's theorem in the form~\cite[Exercise~2.12(c), p.~29]{CSISZAR-KOERNER} without polynomial fudge terms, due to the fact that $B_\delta(P)$ is convex. Since $\FF_1$ is closed, if $P\notin \FF_1$ we will also have $B_\delta(P) \cap \FF_1 = \emptyset$ for a small enough $\delta>0$, in turn entailing that the rightmost side of~\eqref{closed_F_1_iid_proof_eq2} vanishes exponentially fast as $n\to\infty$. Thus, if we require that the leftmost side vanish at most polynomially (even if on a single subsequence) for all $\delta>0$, the only possibility is that $P\in \FF_1$. This shows that $\FF_1^{\mathrm{iid}}$ does indeed satisfy Axiom~\ref{new_ax_type_stability}.

We now move on to the proof of~\eqref{closed_F_1_iid_simplifying_D_infty}. The case where $\FF_1$ is also convex follows immediately from more general, quantum results~\cite[Lemma~3.11]{Sutter2016}, but we do not need these prior findings here. Indeed, the general case where $\FF_1$ is only closed can be tackled rather directly. We write
\bb
\rel{D}{P^{\otimes n}}{\co\big(\FF_1^{\otimes n,\,\mathrm{iid}}\big)} &= \inf_{Q_n\in \co\big(\FF_1^{\otimes n,\,\mathrm{iid}}\big)} \rel{D}{P^{\otimes n}}{Q_n} \\
&\geqt{(ii)} \inf_{Q_n\in \co\big(\FF_1^{\otimes n,\,\mathrm{iid}}\big)} \rel{D_2}{P^{\otimes n}(T_{n,B_\delta(P)})}{Q_n(T_{n,B_\delta(P)})} \\
&\geqt{(iii)} -1 + P^{\otimes n}(T_{n,B_\delta(P)}) \log \frac{1}{\sup_{Q_n\in \co\big(\FF_1^{\otimes n,\,\mathrm{iid}}\big)} Q_n(T_{n,B_\delta(P)})} \\
&\eqt{(iv)} -1 + P^{\otimes n}(T_{n,B_\delta(P)}) \log \frac{1}{\max_{Q\in \FF_1} Q^{\otimes n}(T_{n,B_\delta(P)})} \\
&\geqt{(v)} -1 + n P^{\otimes n}(T_{n,B_\delta(P)})\, \rel{D}{B_\delta(P)}{\FF_1}\, .
\label{closed_F_1_iid_proof_eq3}
\ee
Here, in~(ii) we used the data processing inequality and introduced the binary relative entropy given by~\eqref{binary_relative_entropy}; in~(iii) we used~\eqref{lower_bound_D_with_D_H}; in~(iv) we eliminated the convex hull due to the linearity of the function $Q_n\mapsto Q_n(T_{n,B_\delta(P)})$; finally, in~(v) we employed our previous calculation~\eqref{closed_F_1_iid_proof_eq2}. Dividing by $n$, taking the limit infimum as $n\to\infty$, and remembering that $\lim_{n\to\infty} P^{\otimes n}(T_{n,B_\delta(P)}) = 1$ by the law of large numbers gives the inequality
\bb
\rel{D^\infty}{P}{\co\big(\FF_1^{\mathrm{iid}}\big)} = \liminf_{n\to\infty} \frac1n\, \rel{D}{P^{\otimes n}}{\co\big(\FF_1^{\otimes n,\,\mathrm{iid}}\big)} \geq D(B_\delta(P)\|\FF_1)\, .
\ee
Using the lower semi-continuity of the relative entropy together with the fact that $\FF_1$ is closed, we see that the limit $\delta\to 0^+$ yields\footnote{In fact, the second inequality in~\eqref{closed_F_1_iid_proof_eq7} is tight: it actually holds that $\lim_{\delta \to 0^+} D(B_\delta(P)\|\FF_1) = D(P\|\FF_1)$.}
\bb
D^\infty(P\|\FF) \geq \liminf_{\delta \to 0^+} D(B_\delta(P)\|\FF_1) \geq D(P\|\FF_1)\, ,
\label{closed_F_1_iid_proof_eq7}
\ee
which, together with the much more straightforward inequality
\bb
\limsup_{n\to\infty} \frac1n\, \rel{D}{P^{\otimes n}}{\co\big(\FF_1^{\otimes n,\,\mathrm{iid}}\big)} \leq \min_{Q\in \FF_1} \limsup_{n\to\infty} \frac1n\, \rel{D}{P^{\otimes n}}{Q^{\otimes n}} = D(P\|\FF_1)\, ,
\label{closed_F_1_iid_proof_eq8}
\ee
concludes the proof.
\end{proof}

The following result is entirely analogous to the one above, but it deals with the case of an arbitrarily varying instead of a composite i.i.d.\ alternative hypothesis. Its proof, however, is significantly different from that of Lemma~\ref{closed_F_1_iid_lemma}.

\begin{lemma} \label{closed_F_1_av_lemma}
Let $\FF_1\subseteq \PP(\XX)$ be a topologically closed set of probability distributions on the finite alphabet $\XX$, and let $\co\big(\FF_1^{\mathrm{av}}\big) \coloneqq \big(\co\big(\FF_1^{\otimes n,\, \mathrm{av}}\big)\big)_n$, where $\FF_1^{\otimes n,\, \mathrm{av}}$ is defined as in~\eqref{F_n_av}. Then $\co\big(\FF_1^{\mathrm{av}}\big)$ satisfies the type stability axiom (Axiom~\ref{new_ax_type_stability}). Furthermore,
\bb
\rel{D^\infty}{P}{\co\big(\FF_1^{\mathrm{av}}\big)} = \lim_{n\to\infty} \frac1n\, \rel{D}{P^{\otimes n}}{\co\big(\FF_1^{\otimes n,\,\mathrm{av}}\big)} = D(P\|\co(\FF_1)) = \min_{Q\in \co(\FF_1)} D(P\|Q)\, ,
\label{closed_F_1_av_simplifying_D_infty}
\ee
and the limit exists.
\end{lemma}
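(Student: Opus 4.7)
The approach is to mirror the proof of Lemma~\ref{closed_F_1_iid_lemma}, with the crucial difference that the sets $\FF_1^{\otimes n,\mathrm{av}}$ are no longer generated by i.i.d.\ distributions, so the classical Sanov bound from~\cite[Exercise~2.12(c)]{CSISZAR-KOERNER} does not apply directly. The key preliminary step---and the main technical hurdle---is to establish the arbitrarily-varying analogue: for any convex set $A\subseteq \PP(\XX)$ and any choice $P_1,\ldots,P_n\in \FF_1$,
\begin{equation}
(P_1\otimes \cdots\otimes P_n)\!\left(T_{n,A}\right)\leq \exp\!\left[-n\, D(A\,\|\,\co(\FF_1))\right] ,
\end{equation}
where $T_{n,A}\coloneqq \{x^n:\,P_{x^n}\in A\}$. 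To derive this bound, I would apply a Chernoff-type estimate $\pr\{\hat P_n\in A\}\leq \exp[-n\inf_{V\in A}\langle f,V\rangle]\prod_i\langle e^f,P_i\rangle$ and then bound $\prod_i \langle e^f, P_i\rangle \leq \langle e^f, \bar P\rangle^n$ via AM--GM, where $\bar P\coloneqq \tfrac1n\sum_i P_i\in \co(\FF_1)$. Sion's minimax theorem (applicable because $A$ is compact and convex and the exponent is linear in $V$ and concave in $f$) combined with the Donsker--Varadhan variational formula $D(V\|\mu)=\sup_f[\langle f,V\rangle-\log\langle e^f,\mu\rangle]$ then identifies the exponent as $D(A\,\|\,\bar P)\geq D(A\,\|\,\co(\FF_1))$.

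Once this is in hand, linearity of $Q_n\mapsto Q_n(T_{n,A})$ upgrades the bound to all $Q_n\in \co(\FF_1^{\otimes n,\mathrm{av}})$. From this point the argument proceeds almost verbatim as in the proof of Lemma~\ref{closed_F_1_iid_lemma}. Type stability (Axiom~\ref{new_ax_type_stability}) follows by contradiction: if $P\notin \co(\FF_1)$, the closedness and convexity of $\FF_1$ (hence compactness of $\co(\FF_1)$ in finite dimension) imply $B_\delta(P)\cap\co(\FF_1)=\emptyset$ for some small $\delta>0$, so the above bound gives exponential decay, incompatible with the assumed polynomial lower bound. The lower bound $D^\infty(P\,\|\,\co(\FF_1^{\mathrm{av}}))\geq D(P\,\|\,\co(\FF_1))$ is then obtained exactly as in~\eqref{closed_F_1_iid_proof_eq3} via data processing under the test $\mathbf{1}[T_{n,B_\delta(P)}]$, followed by $\delta\to 0^+$ combined with lower semi-continuity of the relative entropy and compactness of $\co(\FF_1)$.

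The matching upper bound uses the ansatz $Q_n=Q^{\otimes n}$ for arbitrary $Q\in\co(\FF_1)$: writing $Q=\sum_i p_i Q_i$ with $Q_i\in \FF_1$ and expanding the tensor product shows that $Q^{\otimes n}$ is a convex combination of elements of $\FF_1^{\otimes n,\mathrm{av}}$, so $\tfrac1n D(P^{\otimes n}\|Q^{\otimes n})=D(P\|Q)$ delivers the required bound upon optimising over $Q\in\co(\FF_1)$. Finally, existence of the limit follows from Fekete's lemma~\cite{Fekete1923}: given $Q_n\in\co(\FF_1^{\otimes n,\mathrm{av}})$ and $Q'_m\in\co(\FF_1^{\otimes m,\mathrm{av}})$, the product $Q_n\otimes Q'_m$ belongs to $\co(\FF_1^{\otimes(n+m),\mathrm{av}})$, which together with the additivity of the relative entropy under tensor products renders $n\mapsto D(P^{\otimes n}\|\co(\FF_1^{\otimes n,\mathrm{av}}))$ subadditive. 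The main obstacle is therefore the AV-Sanov bound in the first paragraph; the remaining steps are a routine adaptation of the i.i.d.\ argument.
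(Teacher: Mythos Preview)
Your proposal is correct but takes a genuinely different route from the paper. The paper proves type stability by invoking the general machinery developed earlier: it checks that $\co\big(\FF_1^{\mathrm{av}}\big)$ satisfies Axioms~\ref{new_ax_depolarising}, \ref{new_ax_closed_permutations}, and~\ref{new_ax_filtering} (the latter with $W=\mathrm{id}$), and then applies Proposition~\ref{verifying_type_stability_prop}, which in turn rests on the meta-lemma and Piani's superadditivity (Lemma~\ref{superadd_filtered_lemma}). For the regularised identity~\eqref{closed_F_1_av_simplifying_D_infty}, the paper does not re-run a Sanov-type argument at all but instead appeals to (a mild extension of)~\cite[Lemma~3.11]{Sutter2016}, which gives the lower bound $\frac1n D\big(P^{\otimes n}\big\|\co(\FF_1^{\otimes n,\mathrm{av}})\big)\geq D(P\|\co(\FF_1))$ directly via convexity of the relative entropy in the second argument.

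Your approach is more elementary and self-contained: the AV-Sanov bound you establish via Chernoff $+$ AM--GM $+$ Sion $+$ Donsker--Varadhan is a clean standalone statement that lets you mirror the i.i.d.\ proof of Lemma~\ref{closed_F_1_iid_lemma} essentially line by line, avoiding both the blurring machinery and the external reference to~\cite{Sutter2016}. The trade-off is that the paper's route showcases Proposition~\ref{verifying_type_stability_prop} as a general-purpose tool, whereas yours is tailored to the product structure of $\FF_1^{\otimes n,\mathrm{av}}$. One minor wording slip: you write ``the closedness and convexity of $\FF_1$'', but $\FF_1$ is only assumed closed; what you need (and what holds) is that $\co(\FF_1)$ is compact, which follows from compactness of $\FF_1$ in the finite-dimensional simplex.
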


\begin{proof}
The first claim follows from Proposition~\ref{verifying_type_stability_prop}. Let us see why. First, let us check that $\co\big(\FF_1^{\mathrm{av}}\big)$ satisfies Axiom~\ref{new_ax_depolarising}. Taking an arbitrary $R$ in the relative interior of $\co(\FF_1)$, we have immediately that $\supp(Q)\subseteq \supp(R)$ for all $Q\in \co(\FF_1)$, which also entails that $\supp(Q_n)\subseteq \supp(R)^n$ for all $Q_n\in \co\big(\FF_1^{\otimes n,\,\mathrm{av}}\big)$. Also, since $\mm$ maps $\co(\FF_1)$ into itself, an elementary calculation reveals that $\mm^{\otimes n}$ does the same on $\co\big(\FF_1^{\otimes n,\,\mathrm{av}}\big)$, for all $\delta\in [0,1]$. To see why, take an arbitrary
\bb
Q_n = \sum_j \lambda_j\, Q_{1,j} \otimes \ldots \otimes Q_{n,j} \in \co\big(\FF_1^{\otimes n,\,\mathrm{av}}\big)\, ,\qquad Q_{i,j} \in \FF_1\quad \forall\ i,j\, ,
\ee
and observe that
\bb
\mm^{\otimes n}(Q_n) = \sum_j \lambda_j\, \big((1-\delta) Q_{1,j} + \delta R\big) \otimes \ldots \otimes \big((1-\delta) Q_{n,j} + \delta R\big) \in \co\big(\FF_1^{\otimes n,\,\mathrm{av}}\big)\, ,
\ee
as one sees by expanding the tensor product. This completes the verification of Axiom~\ref{new_ax_depolarising}. Axiom~\ref{new_ax_closed_permutations} is immediate, while Axiom~\ref{new_ax_filtering} holds for $W$ equal to the identity channel. Since $\FF_1$ is closed, and hence compact, the same is true of $\co(\FF_1)$. This shows that we can indeed apply Proposition~\ref{verifying_type_stability_prop} to establish the first claim.

The identity in~\eqref{closed_F_1_av_simplifying_D_infty}, instead, follows from a reasoning essentially identical to that used in the proof of~\cite[Lemma~3.11]{Sutter2016}. The upper bound 
\bb
\limsup_{n\to\infty} \frac1n\, \rel{D}{P^{\otimes n}}{\co\big(\FF_1^{\otimes n,\,\mathrm{av}}\big)} \leq D(P\|\co(\FF_1))
\label{closed_F_1_av_proof_eq3}
\ee
is straightforward, following from the family of ansatzes $Q^{\otimes n} \in \co\big(\FF_1^{\otimes n,\,\mathrm{av}}\big)$ in the second argument of the relative entropy, where $Q\in \co(\FF_1)$ is arbitrary. This is analogous to~\eqref{closed_F_1_iid_proof_eq8} above.

As for the lower bound, it suffices to observe that~\cite[Eq.~(40)--(41)]{Sutter2016} hold in the same way if in the first lines one replaces $\int \mu(\dd x)\, \sigma_x^{\otimes n}$ with an arbitrary $\sigma_n\in \co\{\sigma_{x_1}\otimes \ldots \otimes \sigma_{x_n}:\ x_1,\ldots,x_n \in \mathds{X} \} = \{\sigma_x\!: x\in \mathds{X}\}^{\otimes n,\,\mathrm{av}}$, where we followed the notation of~\cite{Sutter2016}, together with (an obvious quantum extension of) our own in~\eqref{F_n_av}. Then, one can proceed like in the rest of the proof of~\cite[Lemma~3.11]{Sutter2016}, obtaining
\bb
\frac1n \inf_{\sigma_n \in \co\left(\{\sigma_x:\ x\in \mathds{X}\}^{\otimes n,\,\mathrm{av}}\right)} \rel{D_{\mathds{M}}}{\rho^{\otimes n}}{\sigma_n} \geq \min_{\sigma \in \co \{\sigma_x:\, x\in \mathds{X}\}} D_{\mathds{M}}(\rho\|\sigma)\, .
\ee
Specialising this to classical probability distributions, we deduce that
\bb
\frac1n\, \rel{D}{P^{\otimes n}}{\co\big(\FF_1^{\otimes n,\,\mathrm{av}}\big)} \geq D(P\|\co(\FF_1))\, ,
\ee
for all positive integers $n\in \N^+$. Taking the limit inferior as $n\to\infty$ shows that
\bb
\rel{D^\infty}{P}{\co\big(\FF_1^{\mathrm{av}}\big)} = \liminf_{n\to\infty} \frac1n\, \rel{D}{P^{\otimes n}}{\co\big(\FF_1^{\otimes n,\,\mathrm{av}}\big)} \geq D(P\|\co(\FF_1))\, ,
\ee
which, together with~\eqref{closed_F_1_av_proof_eq3}, completes the proof.
\end{proof}

We are now ready to present the proof of Theorem~\ref{stronger_generalised_Sanov_thm}, reported below for the reader's convenience. 

\begin{manualthm}{\ref{stronger_generalised_Sanov_thm}}
Let $\XX$ be a finite alphabet, $\SS_1\subseteq \PP(\XX)$ a set of probability distributions on $\XX$, and $\RR = (\RR_n)_n$ a family of sets $\RR_n\subseteq \PP(\XX^n)$. Assume that either
\begin{enumerate}
\item[(a)] $\RR$ satisfies Axioms~\ref{new_ax_tensor_powers} and~\ref{new_ax_type_stability}; also, $\RR_1$ is topologically closed; or
\item[(a')] $\RR$ satisfies Axioms~\ref{new_ax_depolarising},~\ref{new_ax_tensor_powers},~\ref{new_ax_closed_permutations}, and~\ref{new_ax_filtering}, all sets $\RR_n$ are convex, and $\RR_1$ is topologically closed.
\end{enumerate}
Then, with the notation in~\eqref{F_n_av}, the Stein exponent defined as in~\eqref{Stein} is given by
\begin{equation}
\rel{\stein}{\RR}{\SS_1^{\mathrm{av}}} = D(\RR_1\|\co(\SS_1)) = \inf_{P\in \RR_1,\, Q\in \co(\SS_1)} D(P\|Q)\, .
\tag{\ref{stronger_generalised_Sanov_av}}
\end{equation}
If, moreover, 
\begin{enumerate}[(a)] \setcounter{enumi}{1}
\item $\SS_1$ is star-shaped around some $R\in \SS_1$ such that $\supp(Q)\subseteq \supp(R)$ for all $Q\in \SS_1$,
\end{enumerate}
then it also holds that
\begin{equation}
\rel{\stein}{\RR}{\SS_1^{\mathrm{iid}}} = D(\RR_1\|\SS_1) = \inf_{P\in \RR_1,\, Q\in \SS_1} D(P\|Q)\, , \tag{\ref{stronger_generalised_Sanov_iid}}
\end{equation}
where the notation is defined in~\eqref{F_n_iid} and~\eqref{Stein}.
\end{manualthm}

\begin{proof}
Clearly, (a')~implies~(a), due to Proposition~\ref{verifying_type_stability_prop}. Hence, we can assume that~(a) holds without loss of generality. For~\eqref{stronger_generalised_Sanov_av}, we can then write
\bb
\rel{\stein}{\RR}{\SS_1^{\mathrm{av}}} &\eqt{(i)}\, \rel{\stein}{\RR}{\co\big(\SS_1^{\mathrm{av}}\big)} \\
&\eqt{(ii)}\, \inf_{P\in \RR_1} \rel{D^\infty}{P}{\co\big(\SS_1^{\mathrm{av}}\big)} \\
&\eqt{(iii)}\,\inf_{P\in \RR_1} D(P \| \co(\SS_1)) \\
&=\, D(\RR_1\|\co(\SS_1))\, .
\ee
Here, (i)~holds due to~\eqref{convexify_Stein}, while in~(ii) we applied Theorem~\ref{double_Stein_thm}. To see why this is possible, recall that $\co\big(\SS_1^{\mathrm{av}}\big)$ satisfies Axiom~\ref{new_ax_depolarising} if one takes as $R$ a probability distribution in the relative interior of $\co(\SS_1)$, as we already argued in the first part of the proof of Lemma~\ref{closed_F_1_av_lemma}; note also that condition~(a) is identical in Theorems~\ref{double_Stein_thm} and~\ref{stronger_generalised_Sanov_thm}, and that $\co(\SS_1)^{\mathrm{av}}$ satisfies Axiom~\ref{new_ax_closed_permutations} by construction (see~\eqref{F_n_av}). Finally, in~(iii) we applied Lemma~\ref{closed_F_1_av_lemma} to remove the regularisation.

The proof of~\eqref{stronger_generalised_Sanov_iid} is essentially analogous: one writes
\bb
\rel{\stein}{\RR}{\SS_1^{\mathrm{iid}}} \eqt{(iv)} \inf_{P\in \RR_1} \rel{D^\infty}{P}{\co\big(\SS_1^{\,\mathrm{iid}}\big)} \,\eqt{(v)}\, D(\RR_1\|\SS_1)\, .
\ee
Here, in~(iv) we applied Theorem~\ref{double_Stein_thm}, and (v)~follows from Lemma~\ref{closed_F_1_iid_lemma}. Applying Theorem~\ref{double_Stein_thm} here is possible, because, due to assumption~(b), the sequence $\SS_1^{\mathrm{iid}}$ satisfies Axiom~\ref{new_ax_depolarising}, meeting condition~(b) in Theorem~\ref{double_Stein_thm}; the other conditions can be verified as before.
\end{proof}

\section{Applications} \label{sec_classical_applications}

Throughout this section we explore some applications of our main results (Theorem~\ref{double_Stein_thm} and~\ref{stronger_generalised_Sanov_thm}) to classical information theory. Applications to quantum information theory are detailed in a companion paper~\cite{doubly-comp-quantum}.

\subsection{Composite i.i.d.\ null hypothesis with closed (non-convex) base set}

We start with setting~(A) in Section~\ref{subsec_prior_results}, which features a composite i.i.d.\ null hypothesis and a simple i.i.d.\ alternative hypothesis. The following statement, reported here as~\eqref{Sanov_theorem}, is due to Sanov~\cite{Sanov1957, Hoeffding1965}. Here we show that it is easily implied by our general result, Theorem~\ref{stronger_generalised_Sanov_thm}.

\begin{cor}[{\cite{Sanov1957, Hoeffding1965}}] \label{composite_iid_null_cor}
Let $\RR_1\subseteq \PP(\XX)$ be a closed set of probability distributions on the finite alphabet $\XX$, and let $\RR_1^{\mathrm{iid}} \coloneqq \big(\RR_1^{\otimes n,\, \mathrm{iid}}\big)_n$ be the associated sequence of composite i.i.d.\ hypotheses, defined as in~\eqref{F_n_iid}. Then, for all $Q\in \PP(\XX)$, 
\bb
\rel{\stein}{\RR_1^{\mathrm{iid}}}{Q} = D(\RR_1\|Q) = \min_{P\in \RR_1} D(P\|Q)\, .
\ee
\end{cor}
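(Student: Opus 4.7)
The plan is to deduce the corollary as a direct specialisation of Theorem~\ref{stronger_generalised_Sanov_thm}, since the simple i.i.d.\ alternative hypothesis $Q$ can be recast as the i.i.d.\ extension of the singleton $\SS_1=\{Q\}$. I will take $\RR=\RR_1^{\mathrm{iid}}$ in the role of the null hypothesis and $\SS_1=\{Q\}$ in the role of the alternative base set, and then invoke the second clause~\eqref{stronger_generalised_Sanov_iid} of Theorem~\ref{stronger_generalised_Sanov_thm}.

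The bulk of the work then consists in checking the hypotheses (a) and (b) of that theorem. For~(a), Axiom~\ref{new_ax_tensor_powers} is tautological from the definition~\eqref{F_n_iid}: if $P\in\RR_1$, then $P^{\otimes n}\in\RR_1^{\otimes n,\,\mathrm{iid}}=\RR_n$. Topological closedness of $\RR_1$ is a hypothesis of the corollary. The type stability axiom (Axiom~\ref{new_ax_type_stability}) for $\RR=\RR_1^{\mathrm{iid}}$ is precisely the first conclusion of Lemma~\ref{closed_F_1_iid_lemma}, which has already been established. For~(b), star-shape and support compatibility for a singleton $\SS_1=\{Q\}$ are automatic, taking $R=Q$.

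Once the hypotheses are in place, Theorem~\ref{stronger_generalised_Sanov_thm} yields
\begin{equation*}
\stein\big(\RR_1^{\mathrm{iid}}\,\big\|\,\{Q\}^{\mathrm{iid}}\big) \;=\; D(\RR_1\|\{Q\}) \;=\; \inf_{P\in\RR_1} D(P\|Q).
\end{equation*}
Closedness of $\RR_1$ and lower semi-continuity of $D(\cdot\|Q)$ turn the infimum into a minimum. Finally, the notational convention laid out after~\eqref{KL} identifies $\stein(\RR_1^{\mathrm{iid}}\|Q)$ with $\stein\big(\RR_1^{\mathrm{iid}}\,\big\|\,\{Q^{\otimes n}\}_n\big)=\stein\big(\RR_1^{\mathrm{iid}}\,\big\|\,\{Q\}^{\mathrm{iid}}\big)$, completing the proof.

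There is essentially no technical obstacle here: all the substantive content has already been absorbed into Lemma~\ref{closed_F_1_iid_lemma} (which provides type stability for composite i.i.d.\ sequences with only closedness of $\RR_1$) and Theorem~\ref{stronger_generalised_Sanov_thm}. The only mildly delicate point to state explicitly is why the closedness of $\RR_1$ suffices in place of convexity, and this is exactly what Lemma~\ref{closed_F_1_iid_lemma} was engineered to handle; the latter, in turn, relies on the non-convex version of Sanov's theorem used in its proof.
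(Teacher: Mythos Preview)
Your proof is correct and follows essentially the same approach as the paper: set $\SS_1=\{Q\}$, verify condition~(b) of Theorem~\ref{stronger_generalised_Sanov_thm} trivially for the singleton, verify condition~(a) via Axiom~\ref{new_ax_tensor_powers} (immediate) and Axiom~\ref{new_ax_type_stability} (from Lemma~\ref{closed_F_1_iid_lemma}), and conclude by~\eqref{stronger_generalised_Sanov_iid}. Your version is simply more explicit about the notational identifications and the passage from infimum to minimum.
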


\begin{proof}
Setting $\SS_1 = \{Q\}$, we see immediately that condition~(b) in Theorem~\ref{stronger_generalised_Sanov_thm} is met. The sequence $\RR_1^{\mathrm{iid}}$ clearly satisfies Axiom~\ref{new_ax_tensor_powers}, and it also satisfies Axiom~\ref{new_ax_type_stability} because of Lemma~\ref{closed_F_1_iid_lemma}. Thus, condition~(a) is also met, and the conclusion follows from~\eqref{stronger_generalised_Sanov_iid}.
\end{proof}

\subsection{The case where both hypotheses are either composite i.i.d.\ or arbitrarily varying} \label{subsec_recovering_settings_BCD}

Next, we deal with settings~(C) and~(D) in Section~\ref{subsec_prior_results}. Curiously, we cannot recover the result in~(B), i.e.\ Eq.~\eqref{Stein_iid_without_convexity}, deduced from~\cite[Theorem~III.2]{Mosonyi2022}, which solves the case where both $\RR_1$ and $\SS_1$ are finite, as our approach relies heavily on Axiom~\ref{new_ax_depolarising}, which requires $\SS_1$ to be star-shaped. However, we can state a different result that covers instead settings~(C) and~(D), subsuming both~\eqref{Stein_av}, which is taken from~\cite[Theorem~III.7]{Mosonyi2022}, and~\eqref{Stein_iid_or_av_convex}, due to~\cite{Fangwei1996, Levitan2002, brandao_adversarial, Fang2025}.

\begin{cor} \label{both_composite_iid_or_av_cor}
Let $\RR_1,\SS_1\subseteq \PP(\XX)$ be closed sets of probability distributions on the finite alphabet $\XX$. Then
\begin{align}
\rel{\stein}{\RR_1^{\mathrm{iid}}}{\SS_1^{\mathrm{av}}} &= D(\RR_1\|\co(\SS_1))\, , \label{iid_vs_av} \\
\rel{\stein}{\RR_1^{\mathrm{av}}}{\SS_1^{\mathrm{av}}} &= D(\co(\RR_1)\|\co(\SS_1))\, , \label{both_av}
\end{align}
where the hypotheses $\RR_1^{\mathrm{a}}$ and  $\SS_1^{\mathrm{b}}$, with $\mathrm{a},\mathrm{b}\in \{\mathrm{iid},\mathrm{av}\}$, are defined in~\eqref{F_n_iid}--\eqref{F_n_av}, and we adopted the convention~\eqref{divergence_sets} to define the relative entropy between sets.  
Furthermore, if $\SS_1$ is star-shaped around some $R\in \SS_1$ with the property that $\supp(Q) \subseteq \supp(R)$ for all $Q\in \SS_1$ (for example, this holds if $\SS_1$ is convex), then we also have
\begin{align}
\rel{\stein}{\RR_1^{\mathrm{iid}}}{\SS_1^{\mathrm{iid}}} &= D(\RR_1\|\SS_1)\, , \label{both_composite_iid} \\
\rel{\stein}{\RR_1^{\mathrm{av}}}{\SS_1^{\mathrm{iid}}} &= D(\co(\RR_1)\|\SS_1)\, . \label{av_vs_iid}
\end{align}
Consequently, if both $\RR_1$ and $\SS_1$ are closed and convex, then we recover the result due to~\cite{Fangwei1996, Levitan2002, brandao_adversarial, Fang2025} and reported here in~\eqref{Stein_iid_or_av_convex}:
\bb
\rel{\stein}{\RR_1^\mathrm{a}}{\SS_1^\mathrm{b}} = D(\RR_1\|\SS_1)\qquad \forall\ \mathrm{a},\mathrm{b}\in \{\mathrm{iid},\mathrm{av}\}\, .
\label{both_convex_composite_iid_or_av}
\ee
\end{cor}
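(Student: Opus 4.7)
The plan is to derive all four identities by applying Theorem~\ref{stronger_generalised_Sanov_thm} to suitably chosen null hypotheses, using the convexification identities~\eqref{convexify_Stein} and~\eqref{convex_hull_goes_inside_av} to reduce the arbitrarily varying case to one in which the relevant axioms can be verified. The main technical ingredients beyond Theorem~\ref{stronger_generalised_Sanov_thm} itself will be Lemmas~\ref{closed_F_1_iid_lemma} and~\ref{closed_F_1_av_lemma}, which supply the type stability (Axiom~\ref{new_ax_type_stability}) needed to access the theorem.

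For~\eqref{iid_vs_av}, I would take $\RR = \RR_1^{\mathrm{iid}}$ as the null hypothesis. Axiom~\ref{new_ax_tensor_powers} holds by construction, $\RR_1$ is closed by hypothesis, and Lemma~\ref{closed_F_1_iid_lemma} delivers Axiom~\ref{new_ax_type_stability}. Hence condition~(a) of Theorem~\ref{stronger_generalised_Sanov_thm} is met, and~\eqref{stronger_generalised_Sanov_av} gives the claim. For~\eqref{both_composite_iid} the situation is identical, except that I apply the other half of Theorem~\ref{stronger_generalised_Sanov_thm}, namely~\eqref{stronger_generalised_Sanov_iid}; the hypothesis on $\SS_1$ being star-shaped (with the support condition) is exactly the condition~(b) required there.

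For the arbitrarily varying null hypothesis case~\eqref{both_av} and~\eqref{av_vs_iid}, the sets $\RR_1^{\otimes n,\mathrm{av}}$ are not convex, so I first invoke~\eqref{convexify_Stein} to replace $\RR_1^{\mathrm{av}}$ by $\co(\RR_1^{\mathrm{av}})$ without affecting the Stein exponent. Applying~\eqref{convex_hull_goes_inside_av} if needed, I view the new null hypothesis as the sequence $\co(\RR_1^{\mathrm{av}})$, whose first-level set is $\co(\RR_1)$, closed because $\RR_1$ is closed (and hence compact). Axiom~\ref{new_ax_tensor_powers} holds: for any $P = \sum_j \lambda_j Q_j \in \co(\RR_1)$ with $Q_j \in \RR_1$, the expansion $P^{\otimes n} = \sum_{j_1,\ldots,j_n} \lambda_{j_1}\cdots\lambda_{j_n}\, Q_{j_1}\otimes\cdots\otimes Q_{j_n}$ lies in $\co(\RR_1^{\otimes n,\mathrm{av}})$. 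Axiom~\ref{new_ax_type_stability} is precisely the content of Lemma~\ref{closed_F_1_av_lemma}. Thus condition~(a) of Theorem~\ref{stronger_generalised_Sanov_thm} is satisfied, and~\eqref{stronger_generalised_Sanov_av} yields $\rel{\stein}{\RR_1^{\mathrm{av}}}{\SS_1^{\mathrm{av}}} = D(\co(\RR_1)\|\co(\SS_1))$, proving~\eqref{both_av}; analogously, when $\SS_1$ satisfies the star-shapedness condition,~\eqref{stronger_generalised_Sanov_iid} yields~\eqref{av_vs_iid}. Finally,~\eqref{both_convex_composite_iid_or_av} follows by specialising all four identities to closed convex $\RR_1,\SS_1$, for which $\co(\RR_1)=\RR_1$ and $\co(\SS_1)=\SS_1$.

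No step looks seriously obstructed: the only point requiring care is checking that $\co(\RR_1^{\mathrm{av}})$ (rather than the a priori more natural $\RR_1^{\mathrm{av}}$) is the right object on which to invoke the axioms, and that this substitution is harmless thanks to the convex-hull invariance of the Stein exponent captured in~\eqref{convexify_Stein}. Once this bookkeeping is done, the rest is a direct appeal to Theorem~\ref{stronger_generalised_Sanov_thm} together with Lemmas~\ref{closed_F_1_iid_lemma} and~\ref{closed_F_1_av_lemma}.
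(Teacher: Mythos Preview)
Your proposal is correct and follows essentially the same approach as the paper: convexify the null hypothesis via~\eqref{convexify_Stein} where needed, verify condition~(a) of Theorem~\ref{stronger_generalised_Sanov_thm} using Lemma~\ref{closed_F_1_iid_lemma} (for the i.i.d.\ null) or Lemma~\ref{closed_F_1_av_lemma} (for the arbitrarily varying null), and then read off the answer from~\eqref{stronger_generalised_Sanov_av} or~\eqref{stronger_generalised_Sanov_iid}. Your write-up is in fact slightly more explicit than the paper's, spelling out why $P^{\otimes n}\in\co(\RR_1^{\otimes n,\mathrm{av}})$ for $P\in\co(\RR_1)$, a point the paper leaves implicit.
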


\begin{proof}
To prove~\eqref{iid_vs_av}, simply apply Theorem~\ref{stronger_generalised_Sanov_thm} (specifically,~\eqref{stronger_generalised_Sanov_av}) with $\RR \mapsto \RR_1^\mathrm{iid}$: this sequence satisfies Axiom~\ref{new_ax_type_stability} by Lemma~\ref{closed_F_1_iid_lemma}, and also Axiom~\ref{new_ax_tensor_powers} holds.
The proof of~\eqref{both_av} is similar, but we first need to convexify the null hypothesis:
\bb
\rel{\stein}{\RR_1^\mathrm{av}}{\SS_1^\mathrm{av}} = \rel{\stein}{\co\big(\RR_1^\mathrm{av}\big)}{\SS_1^\mathrm{av}} = D(\co(\RR_1)\|\co(\SS_1))\, ,
\ee
where the first equality holds by~\eqref{convexify_Stein}, and in the second we applied~\eqref{stronger_generalised_Sanov_av} in Theorem~\ref{stronger_generalised_Sanov_thm}, noting that $\co\big(\RR_1^\mathrm{av}\big)$ satisfies Axiom~\ref{new_ax_type_stability} due to Lemma~\ref{closed_F_1_av_lemma}. To establish~\eqref{both_composite_iid} and~\eqref{av_vs_iid} one can argue similarly, but using~\eqref{stronger_generalised_Sanov_iid} instead of~\eqref{stronger_generalised_Sanov_av} in Theorem~\ref{stronger_generalised_Sanov_thm}. Eq.~\eqref{both_convex_composite_iid_or_av} follows trivially.
\end{proof}

\subsection{Generalised classical Stein's lemma: an almost-i.i.d.\ extension} \label{subsec_GSL_almost_iid}

In what follows, we will extend the result reported in point~(E) of Section~\ref{subsec_prior_results}, namely the generalised classical Stein's lemma~\cite[Theorem~4]{GQSL}, to a broader --- and more natural --- class of almost i.i.d.\ sources than was treated in~\cite[Theorem~32]{GQSL}. Indeed, that result, reproduced in~\eqref{GSL_almost_iid}, only covered sources with a constant number of defects. Here, we show how to handle any \emph{sublinear} number of defects. This corresponds to a more satisfactory notion of what it means for a source to be `almost i.i.d.', and removes the obstacles that prevented the extension of the proof in~\cite[Theorem~32]{GQSL}, which were primarily technical.

We denote by $\varphi(n)$ the maximum number of defects in a source outputting strings in $\XX^n$, where $\varphi:\N^+\to\N$ is some integer-valued function. Given such a function $\varphi$ and a distribution $P\in\PP(\XX)$, we define the associated sequence of \deff{almost i.i.d.\ hypotheses} as~\cite{RennerPhD, Renner-talk-Cambridge}
\bb
\RR^{\mathrm{aiid}}_{\varphi,P} \coloneqq \big( \RR^{\mathrm{aiid}}_{n,\varphi,P}\big)_n\, ,\qquad  \RR^{\mathrm{aiid}}_{n,\varphi,P} \coloneqq \left\{ P^{\otimes I^c} \otimes Q^{I}:\ I\subseteq [n],\ \, |I|\leq \varphi(n),\ \, Q\in \PP\scaleobj{1.22}{(}\XX^{|I|}\scaleobj{1.22}{)} \right\} ,
\ee
where superscripts denote the sites to which each probability distribution pertains. Instead of assuming that $\varphi$ is bounded, as done in~\cite[Theorem~32]{GQSL}, here we will consider the general sublinear case, in which we only know that
\bb
\lim_{n\to\infty} \frac{\varphi(n)}{n} = 0\, .
\ee
When this happens, the source is, in some sense, locally indistinguishable from a perfectly i.i.d.\ source in the asymptotic limit, in the sense that any collection of random variables $X_{i_1},\ldots, X_{i_k}$, with $k$ constant, is distributed according to $P^{\otimes k}$ in the limit of large $n$. An indeed, the following result shows that in the context of hypothesis testing such a source behaves precisely like a perfectly i.i.d.\ one.

\begin{cor} \label{almost_iid_GSL_cor}
Let $\XX$ be a finite alphabet, $P\in \PP(\XX)$ a probability distribution, and $\SS = (\SS_n)_n$ a sequence of sets $\SS_n \subseteq \PP(\XX^n)$ that obeys Axioms~\ref{new_ax_depolarising} and~\ref{new_ax_closed_permutations}. Then, for each function $\varphi:\N^+ \to \N$ such that $\lim_{n\to\infty} \frac{\varphi(n)}{n} = 0$, it holds that
\bb
\rel{\stein}{\RR^{\mathrm{aiid}}_{\varphi,P}}{\SS} = D^\infty(P\|\co(\SS))\, .
\label{almost_iid_GSL}
\ee
\end{cor}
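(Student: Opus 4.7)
The plan is to apply Theorem~\ref{double_Stein_thm} after a harmless modification of $\RR^{\mathrm{aiid}}_{\varphi,P}$ at the single-copy level. Specifically, introduce the sequence $\RR' = (\RR'_n)_n$ with $\RR'_1 \coloneqq \{P\}$ and $\RR'_n \coloneqq \RR^{\mathrm{aiid}}_{n,\varphi,P}$ for $n\geq 2$. Because the Stein exponent in~\eqref{Stein} involves only $\liminf_{n\to\infty}$, this modification does not change its value, so $\stein(\RR'\|\SS) = \stein(\RR^{\mathrm{aiid}}_{\varphi,P}\|\SS)$; moreover the right-hand side of~\eqref{double_Stein} evaluated on $\RR'$ collapses to $\inf_{P''\in \{P\}} D^\infty(P''\|\co(\SS)) = D^\infty(P\|\co(\SS))$, which is exactly the target~\eqref{almost_iid_GSL}. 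Conditions~(b) and~(c) of Theorem~\ref{double_Stein_thm} are supplied by the assumed Axioms~\ref{new_ax_depolarising} and~\ref{new_ax_closed_permutations} on $\SS$. Within condition~(a), closedness of $\RR'_1$ is trivial, and Axiom~\ref{new_ax_tensor_powers} holds because $P^{\otimes n} \in \RR^{\mathrm{aiid}}_{n,\varphi,P}$ (take $I=\emptyset$).

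The main step --- and what I expect to be the key obstacle --- is verifying the type-stability Axiom~\ref{new_ax_type_stability} for $\RR'$. Suppose $P' \in \PP(\XX)$ and $K > 0$ are such that, for every $\delta > 0$,
\[
\sup_{Q_n\in \RR'_n}\pr_{X^n\sim Q_n}\!\big\{\tfrac12\|P_{X^n} - P'\|_1 \leq \delta\big\} \geq n^{-K}
\]
for infinitely many $n$; the aim is to show $P' = P$. For any $Q_n = P^{\otimes I^c}\otimes Q^I \in \RR'_n$ with $|I|\leq \varphi(n)$, the decomposition $P_{X^n} = (1-\tfrac{|I|}{n}) P_{X_{I^c}} + \tfrac{|I|}{n} P_{X_I}$ combined with the triangle inequality yields the deterministic implication
\[
\tfrac12\|P_{X^n} - P'\|_1 \leq \delta \quad \Longrightarrow\quad \tfrac12\|P_{X_{I^c}} - P\|_1 \geq \tfrac12\|P - P'\|_1 - \delta - \tfrac{\varphi(n)}{n}.
\]
Assuming for contradiction $P' \neq P$ and setting $\eta \coloneqq \tfrac12\|P - P'\|_1 > 0$, one chooses $\delta < \eta/4$ and takes $n$ large enough that $\varphi(n)/n < \eta/4$, so the right-hand side exceeds $\eta/2$. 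Since the marginal of $Q_n$ on $I^c$ is precisely $P^{\otimes|I^c|}$ with $|I^c|\geq n/2$ eventually, a standard Hoeffding-plus-union-bound estimate gives
\[
\pr_{X^n\sim Q_n}\!\big\{\tfrac12\|P_{X^n} - P'\|_1 \leq \delta\big\} \leq 2^{|\XX|+1}\, e^{-\eta^2 n/4},
\]
uniformly in $I$ and $Q$. The exponential decay eventually violates the $n^{-K}$ lower bound, contradicting the hypothesis. Hence $P' = P \in \RR'_1$.

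With all axioms in place, Theorem~\ref{double_Stein_thm} applied to $\RR'$ delivers $\stein(\RR'\|\SS) = D^\infty(P\|\co(\SS))$, which together with the preliminary identity $\stein(\RR'\|\SS) = \stein(\RR^{\mathrm{aiid}}_{\varphi,P}\|\SS)$ established above proves~\eqref{almost_iid_GSL}.
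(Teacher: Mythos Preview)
Your proof is correct and follows essentially the same route as the paper: redefine the family at $n=1$ so that $\RR'_1=\{P\}$, note that conditions~(b) and~(c) of Theorem~\ref{double_Stein_thm} are supplied by the hypotheses on $\SS$, that Axiom~\ref{new_ax_tensor_powers} is trivial, and then verify the type-stability Axiom~\ref{new_ax_type_stability} as the only substantive step. The sole difference lies in that last verification: the paper bounds $\sup_{Q_n}\pr\{\tfrac12\|P_{X^n}-P'\|_1\le\delta\}$ via a Sanov-type estimate on $P^{\otimes(n-\varphi(n))}$, obtaining a bound of the form $|\XX|^{\varphi(n)}\exp[-(n-\varphi(n))\,D(B_{\delta+\varphi(n)/n}(P')\|P)]$, whereas you use the convex decomposition $P_{X^n}=(1-|I|/n)P_{X_{I^c}}+(|I|/n)P_{X_I}$, a triangle inequality, and Hoeffding plus a union bound over subsets of $\XX$. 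Your argument is more elementary and sidesteps any type-counting; the paper's Sanov route gives a sharper exponent in terms of relative entropy but that precision is not needed here. Either way the conclusion is the same.
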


Before we report the proof of the above result, we take a moment to highlight why exactly it is a strict generalisation of~\cite[Theorem~4]{GQSL}. In essence, this is because the latter theorem requires all the Brand\~{a}o--Plenio axioms, and these together are much stronger than the assumptions of Corollary~\ref{almost_iid_GSL_cor}, as we now show.

\begin{lemma} \label{BP_ax_1_5_imply_new_ax_1_3_lemma}
Axioms~\ref{BP_ax_convex_closed}--\ref{BP_ax_permutations} together imply Axioms~\ref{new_ax_depolarising}--\ref{new_ax_closed_permutations}.
\end{lemma}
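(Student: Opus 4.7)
The plan is to verify each of Axioms~\ref{new_ax_depolarising}--\ref{new_ax_closed_permutations} in turn, in order of increasing difficulty. Axiom~\ref{new_ax_closed_permutations} is literally identical to Axiom~\ref{BP_ax_permutations}, so nothing is to be proved. Axiom~\ref{new_ax_tensor_powers} is a weakening of Axiom~\ref{BP_ax_tensor_products} (only tensor \emph{powers} of a single element of $\FF_1$ are required, rather than tensor products of arbitrary pairs), and it follows by a one-line induction on $n$: if $Q_1^{\otimes n}\in \FF_n$ and $Q_1\in \FF_1$, then $Q_1^{\otimes(n+1)} = Q_1^{\otimes n}\otimes Q_1\in \FF_{n+1}$.

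The substantive content lies in Axiom~\ref{new_ax_depolarising}. I would take the distinguished $R\in \FF_1$ to be the full-support element guaranteed by Axiom~\ref{BP_ax_full_rank}, so that condition~(a) becomes automatic because $\supp(R)^n = \XX^n$. For condition~(b), the starting point is the per-symbol identity $\mm(P) = (1-\delta) P + \delta R$, which, upon expansion of $\mm^{\otimes n}$, yields
\bb
\mm^{\otimes n}(Q_n) = \sum_{S\subseteq \{1,\ldots,n\}} \delta^{|S|}(1-\delta)^{n-|S|}\, \widetilde{Q}_{n,S}\, ,
\ee
where $\widetilde{Q}_{n,S}$ is obtained from $Q_n$ by marginalising the coordinates indexed by $S$ and tensoring in an independent copy of $R$ at each such site.

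The key step is then to argue that each $\widetilde{Q}_{n,S}$ lies in $\FF_n$. First I would use Axiom~\ref{BP_ax_permutations} to move the coordinates in $S$ to the tail of the string; iterated application of Axiom~\ref{BP_ax_partial_traces} reduces the marginalisation over $S$ to $|S|$ successive last-symbol partial traces and produces a distribution in $\FF_{n-|S|}$. Iterating Axiom~\ref{BP_ax_tensor_products} on $R\in \FF_1$ gives $R^{\otimes |S|}\in \FF_{|S|}$, and one more application of the same axiom yields the tensor-product distribution in $\FF_n$. A final permutation (Axiom~\ref{BP_ax_permutations}) restores the $R$-factors to the original sites indexed by $S$, so that $\widetilde{Q}_{n,S}\in \FF_n$. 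Closing over $S$ with the convexity portion of Axiom~\ref{BP_ax_convex_closed} then delivers $\mm^{\otimes n}(Q_n)\in \FF_n$. The only real obstacle is the index bookkeeping of which coordinate lives where, which is handled entirely by repeated use of closure under permutations.
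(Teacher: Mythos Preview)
Your proof is correct. The argument for Axiom~\ref{new_ax_depolarising} is essentially equivalent to the paper's, but packaged differently: you expand $\mm^{\otimes n}(Q_n)$ explicitly as a convex combination over subsets $S\subseteq\{1,\ldots,n\}$ and verify each term $\widetilde{Q}_{n,S}$ lies in $\FF_n$ via permute--trace--tensor--permute. The paper instead realises $\mm^{\otimes n}$ operationally in three steps: (i)~tensor $Q_n$ with $R^{\otimes n}$ to land in $\FF_{2n}$; (ii)~for each $j$, swap $X_j$ and $X_{n+j}$ independently with probability $\delta$ (a convex mixture of permutations, hence $\FF_{2n}\to\FF_{2n}$); (iii)~trace out the last $n$ coordinates. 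When one expands step~(ii) over the $2^n$ swap patterns, one recovers precisely your subset decomposition, so the two arguments are the same at the level of combinatorics. The paper's swap trick is a bit more compact because it avoids the explicit bookkeeping of moving coordinates back and forth; your version makes the convex structure of $\mm^{\otimes n}(Q_n)$ more transparent.
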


\begin{proof}[Proof of Lemma~\ref{BP_ax_1_5_imply_new_ax_1_3_lemma}]
The only non-trivial part of the claim is to show that Axioms~\ref{BP_ax_convex_closed}--\ref{BP_ax_permutations} imply Axiom~\ref{new_ax_depolarising}. Choose as $R$ the probability distribution with full support whose existence is guaranteed by Axiom~\ref{BP_ax_full_rank}, and consider a random string $X^n\sim Q_n\in \FF_n$. The map $\mm^{\otimes n}$ can be implemented on $X^n$ by: (i)~appending $n$ independent variables $X_{n+1},\ldots, X_{2n}$ distributed according to $R$ (which maps $\FF_n$ to $\FF_{2n}$ by Axiom~\ref{BP_ax_tensor_products}); (ii)~for all $j=1,\ldots,n$, swapping $X_j$ and $X_{n+j}$ independently with probability $\delta$ (which maps $\FF_{2n}$ to $\FF_{2n}$ by convexity and Axiom~\ref{BP_ax_permutations}); and (iii)~discarding the last $n$ variables (which maps $\FF_{2n}$ back to $\FF_n$ by Axiom~\ref{BP_ax_partial_traces}). Therefore, $\mm^{\otimes n}(Q_n)\in \FF_n$, as claimed.
\end{proof}

We are now ready to present the proof of Corollary~\ref{almost_iid_GSL_cor}.

\begin{proof}[Proof of Corollary~\ref{almost_iid_GSL_cor}]
A preliminary step is to re-define the value of the function $\varphi$ at $n=1$, so that $\varphi(1) = 0$. Clearly, this can be done without affecting either the Stein exponent or the sublinear behaviour of $\varphi$, since these are purely asymptotic notions. The condition that $\varphi(1) = 0$ simply ensures that $\RR_{1,\varphi,P} = \{P\}$.

Now, requirements~(b) and~(c) in Theorem~\ref{double_Stein_thm} are met by assumption. As for~(a), first note that $\RR^{\mathrm{aiid}}_{\varphi,P}$ clearly satisfies Axiom~\ref{new_ax_tensor_powers}, because $\RR^{\mathrm{aiid}}_{1,\varphi,P} = \{P\}$ and $P^{\otimes n} \in \RR^{\mathrm{aiid}}_{n,\varphi,P}$ for all $n\in \N^+$. The only nontrivial assumption that remains to be checked is that $\RR^{\mathrm{aiid}}_{\varphi,P}$ meets Axiom~\ref{new_ax_type_stability}. To this end, one can modify slightly the argument used in the first part of the proof of Lemma~\ref{closed_F_1_iid_lemma}. For any $V \in \PP(\XX)$, we can replicate~\eqref{closed_F_1_iid_proof_eq2} and write, using the notation of~\eqref{closed_F_1_iid_proof_eq1},
\begin{align}
\sup_{P_n\in \RR^{\mathrm{aiid}}_{n,\varphi,P}} \pr_{X^n\sim P_n}\!\big\{ \tfrac12 \|P_{X^n} \!-\! V \|_1 \leq \delta \big\} &= \max_{0\leq r\leq \varphi(n),\ Q_r\in \PP(\XX^r)} \sum_{x^n\in \XX^n:\ \frac12\|P_{x^n} - V\|_1\leq \delta}\big(P^{\otimes (n-r)} \otimes Q_r\big)(x^n) \nonumber \\
&\eqt{(i)} \max_{Q_{\varphi(n)}\in \PP(\XX^{\varphi(n)})} \sum_{x^n\in \XX^n:\ \frac12\|P_{x^n} - V\|_1\leq \delta}\big(P^{\otimes (n-\varphi(n))} \otimes Q_{\varphi(n)}\big)(x^n) \nonumber \\
&\leqt{(ii)} \sum_{x^n\in \XX^n:\ \frac12\|P_{x^n} - V\|_1\leq \delta} P^{\otimes (n-\varphi(n))}(x^{n-\varphi(n)}) \label{almost_iid_GSL_proof_eq1} \\
&\leqt{(iii)} |\XX|^{\varphi(n)}\, P^{\otimes (n-\varphi(n))}\big(T_{n-\varphi(n),\, B_{\delta + \varphi(n)/n}(V)}\big) \nonumber \\
&\leqt{(iv)} |\XX|^{\varphi(n)} \exp\left[ - (n-\varphi(n))\, \rel{D}{B_{\delta+\varphi(n)/n}(V)}{P} \right] \nonumber
\end{align}
The above derivation can be justified as follows. In~(i) we observed that setting $r=\varphi(n)$ causes no loss of generality, as we can always include in $Q_{\varphi(n)}$ a few copies of $P$ to effectively reduce the number of defects. In~(ii) we denoted by $x^{n-\varphi(n)}$ the string composed of the first $n-\varphi(n)$ symbols of $x^n$, and observed that $Q_{\varphi(n)}\big(y^{\varphi(n)}\big)\leq 1$ for all $y^{\varphi(n)}\in \XX^{\varphi(n)}$. To see why~(iii) holds, start by noting that, for all $r$, the type of $x^{n-r}$ has a total variation distance of at most $r/n$ from that of $x^n$, simply because, for all $A\subseteq \XX$,
\bb
n \sum_{y\in A} \left( P_{x^{n-r}}(y) - P_{x^n}(y) \right) &= \sum_{y\in A} \left( (n-r) P_{x^{n-r}}(y) - n P_{x^n}(y) \right) + r \sum_{y\in A} P_{x^{n-r}}(y)\leq r \sum_{y\in A} P_{x^{n-r}}(y) \leq r\, ,
\ee
where the first inequality holds because, adopting the notation of~\eqref{type_of_sequence}, $(n-r) P_{x^{n-r}}(y) = N(y|x^{n-r}) \leq N(y|x^n) = nP_{x^n}(y)$. Dividing by $n$ and taking the maximum over all sets $A\subseteq \XX$ yields precisely $\frac12 \|P_{x^{n-r}} - P_{x^n}\|_1\leq \frac{r}{n}$. What this shows, in particular, is that any string $x^{n-\varphi(n)}$ that appears on the right-hand side of~(iii) satisfies $\frac12 \|P_{x^{n-\varphi(n)}} - V\|_1\leq \delta + \frac{\varphi(n)}{n}$, and it thus belongs to $T_{n-\varphi(n),\, B_{\delta + \varphi(n)/n}(V)}$. Now we should ask ourselves: how many different strings $x^n$ can be mapped to the same string $x^{n-\varphi(n)}$? The answer, rather obviously, is: precisely $|\XX|^{\varphi(n)}$.  This explains also the coefficient on the right-hand side of~(iii), and completes the justification of this step. Finally, in~(iv) we used once again Sanov's theorem, in the stronger form of~\cite[Exercise~2.12(c), p.~29]{CSISZAR-KOERNER}, which is applicable because $B_{\delta + \varphi(n)/n}(V)$ is convex.

Now that we have proved~\eqref{almost_iid_GSL_proof_eq1}, we can proceed as in the proof of Lemma~\ref{closed_F_1_iid_lemma}. If the leftmost side of~\eqref{almost_iid_GSL_proof_eq1} vanishes no faster than polynomially (in $n$), at least on a subsequence, then the only possibility is that $P\in B_{\delta'}(V)$ for all $\delta'>\delta>0$. Since $\delta'$ and $\delta$ are otherwise arbitrary, it must be the case that $P=V$, which completes the verification of Axiom~\ref{new_ax_type_stability}. In turn, this allows us to apply Theorem~\ref{double_Stein_thm}, which yields immediately~\eqref{almost_iid_GSL} and completes the proof.
\end{proof}

\subsection{Relation with the generalised classical Sanov theorem}

Finally, we comment briefly on why Theorem~\ref{stronger_generalised_Sanov_thm} constitutes a strict extension of~\cite[Theorem~8, Eq.~(C4)]{generalised-Sanov}. In the setting of this latter result, the alternative hypothesis $\SS_1$ is simple and i.i.d., and, as such, it obviously obeys assumption~(b) of Theorem~\ref{stronger_generalised_Sanov_thm}. On the null hypothesis side, in~\cite[Theorem~8, Eq.~(C4)]{generalised-Sanov} it is assumed that $\RR$ satisfies all of the Brand\~{a}o--Plenio axioms (Axioms~\ref{BP_ax_convex_closed}--\ref{BP_ax_permutations}) and moreover Axiom~\ref{BP_ax_6}. As it turns out, these assumptions together are strictly stronger than, and hence imply, Axioms~\ref{new_ax_tensor_powers} and~\ref{new_ax_type_stability}. This shows that Theorem~\ref{stronger_generalised_Sanov_thm} strictly subsumes~\cite[Theorem~8, Eq.~(C4)]{generalised-Sanov}, as claimed.

\begin{lemma} \label{BP_ax_1_6_imply_new_ax_1_4_lemma}
Axioms~\ref{BP_ax_convex_closed}--\ref{BP_ax_6} together imply Axioms~\ref{new_ax_depolarising}--\ref{new_ax_type_stability}.
\end{lemma}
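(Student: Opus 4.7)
By Lemma~\ref{BP_ax_1_5_imply_new_ax_1_3_lemma}, Axioms~\ref{BP_ax_convex_closed}--\ref{BP_ax_permutations} already yield Axioms~\ref{new_ax_depolarising}--\ref{new_ax_closed_permutations}. So the only thing to show is that Axioms~\ref{BP_ax_convex_closed}--\ref{BP_ax_6} together imply Axiom~\ref{new_ax_type_stability}. The overall strategy mirrors the proof of Proposition~\ref{verifying_type_stability_prop}, but replaces the filtered-relative-entropy lower bound (which required Axiom~\ref{new_ax_filtering}) with the single-letterised $D^\infty$ bound supplied by the meta-lemma (Lemma~\ref{meta_lemma_perm_symm}), whose faithfulness is then enforced by Axiom~\ref{BP_ax_6}.

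First I would fix $P\in\PP(\XX)$ satisfying the hypothesis of Axiom~\ref{new_ax_type_stability}, i.e.\ for each $\delta>0$ there is an infinite $I=I(\delta)\subseteq\N^+$ along which
\[
\sup_{Q_n\in\FF_n}\pr_{X^n\sim Q_n}\bigl\{\tfrac12\|P_{X^n}-P\|_1\le\delta\bigr\}\ge n^{-K}.
\]
An immediate warm-up step is to verify $\supp(P)\subseteq\supp(R)$, where $R$ is the distribution from Axiom~\ref{new_ax_depolarising} (which we have from BP1--BP5). This is done exactly as in~\eqref{verifying_type_stability_proof_eq5}: if $x_0\in\supp(P)\setminus\supp(R)$, take $\delta<P(x_0)$ and note that Axiom~\ref{new_ax_depolarising} forces $P_{x^n}(x_0)=0$ for every $x^n$ with nonzero $Q_n$-weight, producing a contradiction.

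Next, for every $n\in I$ I would pick $Q_n\in\FF_n$ witnessing the hypothesis up to a factor $1/2$. By convexity (Axiom~\ref{BP_ax_convex_closed}) and permutation invariance of the event $\{\tfrac12\|P_{X^n}-P\|_1\le\delta\}$, symmetrising $Q_n$ via Axiom~\ref{BP_ax_permutations} (Axiom~\ref{new_ax_closed_permutations}) keeps it in $\FF_n$ and does not change the probability. Expanding the event over types and using the counting bound $|\TT_n|\le(n+1)^{|\XX|}$ yields a type $V_n\in\TT_n$ with $\tfrac12\|V_n-P\|_1\le\delta$ and
\[
Q_n(T_{n,V_n})\;\ge\;\frac{1}{2n^K(n+1)^{|\XX|}}\,.
\]
Now I apply the sharper meta-lemma bound~\eqref{meta_lemma_perm_symm_bound_infty}: Axioms~\ref{new_ax_depolarising},~\ref{new_ax_closed_permutations}, and (crucially for the $D^\infty$ form) Axiom~\ref{new_ax_tensor_products_n_and_m} are all available (the last one being exactly Axiom~\ref{BP_ax_tensor_products}), and the $\FF_n$ are convex. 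Hence for any $\Delta>0$ and all sufficiently large $n\in I$,
\[
\frac{1}{2n^K(n+1)^{|\XX|}}\;\le\;Q_n(T_{n,V_n})\;\le\;\exp\!\bigl[-n\bigl(D^\infty(P\|\FF)-\phi(\delta)-\Delta\bigr)\bigr].
\]

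Taking logarithms, dividing by $n$, and sending $n\to\infty$ along $I$ kills the polynomial factor and gives $D^\infty(P\|\FF)\le\phi(\delta)+\Delta$. Since $\Delta>0$ and $\delta\in(0,1/3)$ are arbitrary, continuity of $\phi$ at $0$ with $\phi(0)=0$ forces $D^\infty(P\|\FF)\le 0$, hence $D^\infty(P\|\FF)=0$. At this point Axiom~\ref{BP_ax_6} is invoked directly: it asserts that $D^\infty(\cdot\|\FF)$ is faithful on $\FF_1$, so $D^\infty(P\|\FF)=0$ forces $P\in\FF_1$, which is precisely the conclusion of Axiom~\ref{new_ax_type_stability}. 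The only potentially delicate step is ensuring the applicability of Lemma~\ref{meta_lemma_perm_symm} in its $D^\infty$ form, and this is automatic here because BP4 is exactly the hypothesis Axiom~\ref{new_ax_tensor_products_n_and_m} needed to invoke Fekete's lemma inside the meta-lemma.
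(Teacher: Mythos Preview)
Your proposal is correct and follows essentially the same route as the paper: reduce to verifying Axiom~\ref{new_ax_type_stability} via Lemma~\ref{BP_ax_1_5_imply_new_ax_1_3_lemma}, extract a type $V_n$ with polynomially large $Q_n(T_{n,V_n})$, apply the $D^\infty$ form of the meta-lemma (Lemma~\ref{meta_lemma_perm_symm}, enabled by BP4~$=$~Axiom~\ref{new_ax_tensor_products_n_and_m}), and conclude via Axiom~\ref{BP_ax_6}. The only difference is that the paper skips your $\supp(P)\subseteq\supp(R)$ check entirely, since Axiom~\ref{BP_ax_full_rank} already furnishes $R$ with full support, making that condition vacuous.
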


\begin{proof}
Let $\FF = (\FF_n)_n$ be a sequence of sets $\FF_n\subseteq \PP\big(\XX^n\big)$. Due to Lemma~\ref{BP_ax_1_5_imply_new_ax_1_3_lemma}, we need only to show that, in the presence of Axioms~\ref{BP_ax_convex_closed}--\ref{BP_ax_permutations}, Axiom~\ref{BP_ax_6} implies Axiom~\ref{new_ax_type_stability}. The same lemma also tells us that we can assume without loss of generality that Axiom~\ref{new_ax_depolarising} holds with respect to a constant $c>0$ and some probability distribution $R\in \FF_1$ with $\supp(R) = \XX$ (as guaranteed by Axiom~\ref{BP_ax_full_rank}). Note that Axiom~\ref{new_ax_tensor_products_n_and_m} is satisfied, too, as it coincides with Axiom~\ref{BP_ax_tensor_products}. We can thus directly apply Lemma~\ref{meta_lemma_perm_symm}, and in particular~\eqref{meta_lemma_perm_symm_bound_infty}, and conclude the following: for all $\Delta,\delta>0$, with $\delta<1/3$, all $P\in \PP(\XX)$, and all sufficiently large $n$, we have
\bb
\sup_{Q_n\in \FF_n} Q_n(T_{n,V}) \leq \exp\left[ - n \left(D^\infty(P \| \FF) - \phi(\delta) - \Delta\right) \right]
\ee
for all types $V\in \TT_n$ such that $\frac12 \|V-P\|_1\leq \delta$. (The support condition is empty, as $\supp(R) = \XX$.) Here, $\phi$ is a continuous function satisfying $\phi(0) = 0$. Thus,
\bb
\sup_{Q_n\in \FF_n} \pr_{X^n\sim Q_n}\!\big\{ \tfrac12 \|P_{X^n} - P \|_1 \leq \delta \big\} &\leq |\TT_n| \sup_{\substack{Q_n\in \FF_n,\ V\in \TT_n:\ \frac12\|V-P\|_1\leq \delta}} Q_n(T_{n,V}) \\
&\leq (n+1)^{|\XX|} \exp\left[ - n \left(D^\infty(P \| \FF) - \phi(\delta) - \Delta\right) \right] .
\ee
If the leftmost side decays at most polynomially in $n$ as $n\to\infty$, even if on a single subsequence, and since $\Delta$ and $\delta$ can be chosen to be as small as one pleases, the only possibility is that $D^\infty(P \| \FF) = 0$. By Axiom~\ref{BP_ax_6}, this can only be the case if $P\in \FF_1$. This establishes Axiom~\ref{new_ax_type_stability} and concludes the proof.
\end{proof}

\subsection{Constrained de Finetti reduction}

De Finetti theorems provide a way to reduce general permutationally symmetric probability distributions to convex combinations of i.i.d.\ distributions~\cite{deFinetti-original, deFinetti}. Originally studied in the classical setting, they have been thoroughly investigated also in the framework of quantum information theory~\cite{deFinetti0, RennerPhD, deFinetti2, 1-1/2-de-Finetti, deFinetti4-CMP}. It is in this latter context that a special class of these statements, called \emph{de Finetti reductions} (or `post-selection lemmas'), have been first proposed~\cite{Christandl2009}. 
We focus here on the classical case first, and then state a conjecture concerning possible quantum generalisations. In its most elementary form, a de Finetti reduction shows the existence of a universal probability measure $\dd P$ on $\PP(\XX)$ such that, for all $n\in \N^+$, every permutationally symmetric probability distribution $Q_n\in \PP(\XX^n)$ satisfies the entry-wise inequality
\bb
Q_n \leq L(n) \int_{\PP(\XX)} \!\!\dd P\ P^{\otimes n} ,
\ee
where $L(n)$ is a polynomial --- and thus, in particular, a sub-exponential function --- that depends only on $|\XX|$. The distribution on the right-hand side is an example of a \emph{universal distribution}, in the sense of~\cite[Axiom~4 and Lemma~14]{Tomamichel2018}.

Here we follow the philosophy of~\cite{Lancien2017}, where it is argued that the universality of the above construction is both a blessing and a curse. It is a blessing because it simplifies the analysis of arbitrary permutationally symmetric distribution immensely, reducing the general case to the i.i.d.\ case; yet, it is also a curse, because its universality means that any information on $Q_n$ is lost. For example, we might know that $Q_n\in \FF_n$ belongs to the $n$-symbol instance of some some special sequence of sets $\FF = (\FF_n)_n$, with $\FF_n\subseteq \PP(\XX^n)$, and we might want a de Finetti reduction that makes use of this information, in that it features only i.i.d.\ distributions $P^{\otimes n}$ in which $P$ is also in $\FF_1$, or at least very close to it. In~\cite{Lancien2017}, \emph{constrained de Finetti reductions} of this sort were put forward, even in the quantum case. Typically, those results can be phrased as follows: given a sequence $\FF = (\FF_n)_n$ that obeys some stability constraints (typically, some of the Axioms~\ref{BP_ax_convex_closed}--\ref{BP_ax_6}), any $Q_n\in \FF_n$ satisfies that
\bb
Q_n \leq L(n) \int_{\PP(\XX)} \!\!\dd P\ \exp\left[- \rel{D_{1/2}}{P^{\otimes n}}{\FF_n} \right] P^{\otimes n} ,
\ee
where $D_{1/2}(P\|Q) \coloneqq - 2\log \sum_x \sqrt{P(x)Q(x)} = - 2 \log F(P,Q)$ is the R\'enyi-\sfrac{1}{2} relative entropy.\footnote{Using statements analogous to our Lemma~\ref{superadd_filtered_lemma}, in several cases of interest the authors of~\cite{Lancien2017} were then able to show that $\rel{D_{1/2}}{P^{\otimes n}}{\FF_n}$ grows linearly in $n$ whenever $P\notin \FF_1$, which yields the sought exponential suppression of the single-copy distributions that are outside of $\FF_1$.}

With our techniques we can now provide a tighter estimate, in which the R\'enyi-\sfrac{1}{2} relative entropy is replaced by the more fundamental relative entropy. Below we conjecture that this result might be extended to the quantum setting, potentially yielding a new interpretation of the regularised relative entropy of resource in the context of de Finetti reductions.

\begin{lemma}[(Classical constrained de Finetti reduction)] \label{constrained_deFinetti_lemma}
For a finite alphabet $\XX$, let $\FF = (\FF_n)_n$ be a sequence of convex sets $\FF_n\subseteq \PP(\XX^n)$ that obeys Axioms~\ref{new_ax_depolarising} and~\ref{new_ax_closed_permutations}, the former with respect to a probability distribution $R\in \PP(\XX)$ and a constant $c$ such that $\min_{x\in \supp(R)} R(x) \geq c > 0$. Then there exists a measure $\dd P$ on $\PP(\XX)$ with the following property: for any $\Delta>0$, we can find $N = N(\Delta,c,|\XX|) \in \N^+$ such that, for all $n\geq N$, all permutationally symmetric $Q_n\in \FF_n$ satisfy the entry-wise inequality
\bb
Q_n \leq \int_{\PP(\XX)} \dd P\ \exp\left[ - D(P^{\otimes n} \| \FF_n) + n \Delta \right]\, P^{\otimes n} .
\label{constrained_deFinetti}
\ee
If $\FF$ obeys also Axiom~\ref{new_ax_tensor_products_n_and_m}, then we can even write, again for $n\geq N$,
\bb
Q_n \leq \int_{\PP(\XX)} \dd P\ \exp\left[ - n\left( D^\infty(P\| \FF) - \Delta \right) \right]\, P^{\otimes n} ,
\label{constrained_deFinetti_special}
\ee
where $D^\infty(P\| \FF)$ is defined by~\eqref{regularised_relent} (and the limit infimum can be replaced with an ordinary limit).
\end{lemma}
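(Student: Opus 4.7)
The plan is to exploit the permutation symmetry of $Q_n$ to reduce the pointwise bound to the weight on a single type class, apply Lemma~\ref{meta_lemma_perm_symm} to that weight, and then rewrite the resulting bound as an integral via a Dirichlet-type de Finetti identity. I would take $\dd P$ to be the Dirichlet$(1,\ldots,1)$ probability measure on $\{P\in\PP(\XX):\supp(P)\subseteq\supp(R)\}$, whose key property is the elementary identity
\[
\int\dd P\,P^{\otimes n}(x^n)=\frac{1}{|\TT_n(\supp(R))|\,|T_{n,V_{x^n}}|}\quad\text{for all } x^n\in\supp(R)^n,
\]
obtained directly from the Dirichlet integral, while the integral vanishes for other $x^n$. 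By~\eqref{counting_types}, this already gives the handy lower bound $\int\dd P\,P^{\otimes n}(x^n)\geq(n+1)^{-|\XX|}|T_{n,V_{x^n}}|^{-1}$.

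By Axiom~\ref{new_ax_depolarising}(a), $\supp(Q_n)\subseteq\supp(R)^n$, so~\eqref{constrained_deFinetti} holds trivially for $x^n\notin\supp(R)^n$. Fix $x^n\in\supp(R)^n$, set $V\coloneqq V_{x^n}$, and write $Q_n(x^n)=Q_n(T_{n,V})/|T_{n,V}|$ by permutation symmetry. Given $\Delta>0$, choose $\xi\in(0,1/3)$ small enough that $\phi(\xi)\leq\Delta/3$, with $\phi$ the function from Lemma~\ref{meta_lemma_perm_symm}, and apply that lemma with target accuracy $\Delta/3$: for $n\geq N_0(\Delta,c,|\XX|)$ and every $P$ in the set $\BB_\xi\coloneqq\{P\in\PP(\XX):\supp(P)\subseteq\supp(R),\ \tfrac12\|V-P\|_1\leq\xi\}$, one obtains $Q_n(T_{n,V})\leq\exp[-D(P^{\otimes n}\|\FF_n)+2n\Delta/3]$. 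Lower-bounding the target integral by its restriction to $\BB_\xi$ then gives
\[
\int\dd P\,\exp[-D(P^{\otimes n}\|\FF_n)+n\Delta]\,P^{\otimes n}(x^n)\geq\exp[n\Delta/3]\,Q_n(T_{n,V})\int_{\BB_\xi}\dd P\,P^{\otimes n}(x^n).
\]

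A Laplace/concentration estimate for the Dirichlet posterior shows that, for the fixed $\xi>0$, the restricted integral dominates the unrestricted one up to a factor $1-o(1)$, so $\int_{\BB_\xi}\dd P\,P^{\otimes n}(x^n)\geq(1-o(1))(n+1)^{-|\XX|}|T_{n,V}|^{-1}$. Substituting back, the right-hand side is at least $(1-o(1))(n+1)^{-|\XX|}\exp[n\Delta/3]\,Q_n(x^n)$, which exceeds $Q_n(x^n)$ for all $n\geq N(\Delta,c,|\XX|)$, proving~\eqref{constrained_deFinetti}. The refined bound~\eqref{constrained_deFinetti_special} then follows immediately: Axiom~\ref{new_ax_tensor_products_n_and_m} makes $n\mapsto D(P^{\otimes n}\|\FF_n)$ sub-additive, so Fekete's lemma gives $D(P^{\otimes n}\|\FF_n)\geq nD^\infty(P\|\FF)$, which substitutes directly into the exponent in~\eqref{constrained_deFinetti}.

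The main obstacle is the uniform Dirichlet concentration claim: the posterior mass outside $\BB_\xi$ must be sub-polynomially suppressed, uniformly in $V=V_{x^n}$ ranging over all types supported in $\supp(R)$. This can be handled by noting that $\inf_{P\in\PP(\supp(R)),\,\frac12\|V-P\|_1\geq\xi}D(V\|P)>0$ is bounded below uniformly in $V$ by compactness and lower semi-continuity of the Kullback--Leibler divergence, which yields exponential tail suppression of the normalized Dirichlet integrand $P^{\otimes n}(x^n)$ outside $\BB_\xi$.
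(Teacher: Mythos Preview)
Your proof is correct and takes a genuinely different---and in several ways cleaner---route from the paper's. The paper builds $\dd P$ as the push-forward of the uniform measure on the sphere $S_{d-1}$ via $\Psi\mapsto(\Psi(x)^2)_x$, then invokes Sanov's theorem to lower-bound $P^{\otimes n}(T_{n,V})$ by $(n+1)^{-|\XX|}\exp[-nD(V\|P)]$. This forces the paper to control the auxiliary quantity $\lambda_n(\xi)\coloneqq\max_{V\in\TT_n}\sup_{\frac12\|P-V\|_1\leq\xi}D(V\|P)$, which is done by the somewhat delicate trick of letting $\xi=1/(2n)$ shrink with $n$ so that $D_{\max}(V\|P)\leq\log 2$ is guaranteed, and then applying a continuity bound. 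Your Dirichlet$(1,\ldots,1)$ choice sidesteps all of this: the exact identity $\int\dd P\,P^{\otimes n}(x^n)=\big(|\TT_n(\supp(R))|\,|T_{n,V}|\big)^{-1}$ replaces Sanov, and keeping $\xi$ fixed (chosen once via $\phi(\xi)\leq\Delta/3$) means you only need the Dirichlet posterior to concentrate in a ball of \emph{constant} radius, which is immediate.

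Two small remarks on the execution. First, the uniform lower bound you need on $D(V\|P)$ over $\{(V,P):\tfrac12\|V-P\|_1\geq\xi\}$ follows directly from Pinsker's inequality, $D(V\|P)\geq \tfrac{2}{\log e}\,\xi^2$, which is simpler than the compactness/lower-semi-continuity argument you sketch and gives the explicit exponential tail bound $\int_{\BB_\xi^c}\dd P\,P^{\otimes n}(x^n)\big/\int\dd P\,P^{\otimes n}(x^n)\leq(n+1)^{|\XX|}\exp[-2n\xi^2/\log e]$ uniformly in $V$ (using $|T_{n,V}|\leq\exp[nH(V)]$ and $P^{\otimes n}(x^n)=\exp[-n(H(V)+D(V\|P))]$). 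Second, ``sub-polynomially suppressed'' undersells it---the suppression is exponential, which is exactly what beats the polynomial factors $(n+1)^{-|\XX|}$ and $\exp[n\Delta/3]$. The paper even notes in a remark that the proof simplifies considerably if one allows $\dd P$ to depend on $n$ (via the uniform measure on types); your argument shows that the Dirichlet measure achieves this simplification while remaining $n$-independent.
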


\begin{proof}
Let $d\coloneqq |\supp(R)|$ denote the cardinality of the support of $R$. Consider the map from the $(d-1)$-sphere $S_{d-1}$ embedded in $\R^d$ to the probability simplex $\PP(\XX)$ given by 
\bb
\R^d\supseteq S_{d-1} \ni \Psi \mapsto P_\Psi \in \PP(\XX)\, , \qquad P_\Psi(x) \coloneqq \Psi(x)^2.
\ee
Denote with $\dd P$ the push-forward of the uniform measure $\dd \Psi$ on $S_{d-1}$ to $\PP(\XX)$ obtained via this map. Due to the Fuchs--van de Graaf inequalities~\cite{Fuchs1999}, for any two $\Psi,\Phi\in S_{d-1}$ we have
\bb
\|\Psi - \Phi\|_2 = \sqrt{\sumno_x \left(\Psi(x) - \Phi(x)\right)^2} = \sqrt{2\left(1 - \sumno_x \Psi(x) \Phi(x)\right)} \geq \frac12\left\|P_\Psi - P_\Phi \right\|_1\, .
\ee
Hence, for any fixed $V\in \PP(\XX)$ and $\xi\in [0,1]$, we obtain the estimate
\bb
\int_{P:\ \frac12 \|P-V\|_1\leq \xi} \dd P \geq \int_{\Psi:\ \|\Psi - \Phi_V\|_2\leq \xi} \dd \Psi \eqqcolon A(\xi)\, ,
\label{constrained_deFinetti_proof_eq3}
\ee
where we defined $\Phi_V(x) \coloneqq \sqrt{V(x)}$ for all symbols $x\in \XX$, and $A(\xi)$ denotes the surface area of the hyperspherical cap $\big\{\Psi:\ \|\Psi - \Phi_V\|_2\leq \xi\big\}$, which, by rotational invariance, does not depend on $V$. The only property of this function we will use is that~\cite[Lemma~2.3]{BALL}
\bb
A(\xi) \geq C_d\, \xi^{d-1}
\label{constrained_deFinetti_proof_eq4}
\ee
for all $\xi\in [0,2]$, where $C_d>0$ is a universal constant that depends only on $d$. (For example, \cite[Lemma~2.3]{BALL}~shows that one can set $C_d = 2^{-d}$.)

We now claim that~\eqref{constrained_deFinetti} holds for the above choice of $\dd P$. Our starting point is Lemma~\ref{meta_lemma_perm_symm}, which tells us that for any $\Delta>0$ we can find some positive integer $N = N(\Delta,c,|\XX|)$ such that, for all $n\geq N$, $Q_n\in \FF_n$, $V\in \TT_n$, and $P\in \PP(\XX)$ obeying $\supp(P) \subseteq \supp(R)$ and $\frac12\|P - V\|_1\leq \xi\in (0,1/3)$, we have
\bb
Q_n(T_{n,V}) \leq \exp\left[ - D(P^{\otimes n} \| \FF_n) + n \big(\tfrac{\Delta}{5} + \phi(\xi)\big) \right] ,
\label{constrained_deFinetti_proof_eq5}
\ee
where $\phi$ is a continuous function that depends only on $c$ and $|\XX|$ and vanishes at $0$. We now fix some $V\in \TT_n$ and $\xi\in (0,1/3)$, and integrate the above inequality over the set of $P$'s that meet the assumptions. This yields
\begin{align}
&C_d\, \xi^{d-1} Q_n(T_{n,V}) \nonumber \\
&\qquad \leqt{(i)} A(\xi)\, Q_n(T_{n,V}) \nonumber \\
&\qquad \leqt{(ii)} \int_{P:\ \frac12\|P-V\|_1\leq \xi} \dd P\ Q_n(T_{n,V}) \nonumber \\
&\qquad \eqt{(iii)} \int_{P:\ \supp(P) \subseteq \supp(R),\ \frac12\|P-V\|_1\leq \xi} \dd P\ Q_n(T_{n,V}) \nonumber \\
&\qquad \leqt{(iv)} \int_{P:\ \supp(P) \subseteq \supp(R),\ \frac12\|P-V\|_1\leq \xi} \dd P\ \exp\left[ - D(P^{\otimes n} \| \FF_n) + n \big(\tfrac{\Delta}{5} + \phi(\xi)\big) \right] \label{constrained_deFinetti_proof_eq6} \\
&\qquad \eqt{(v)} \int_{P:\ \frac12\|P-V\|_1\leq \xi} \dd P\ \exp\left[ - D(P^{\otimes n} \| \FF_n) + n \big(\tfrac{\Delta}{5} + \phi(\xi)\big) \right] \nonumber \\
&\qquad \leqt{(vi)} (n+1)^{|\XX|}\! \int_{P:\ \frac12\|P-V\|_1\leq \xi} \!\!\dd P\ \exp\left[ - D(P^{\otimes n} \| \FF_n) + n \big(\tfrac{\Delta}{5} + \phi(\xi) + D(V\|P)\big) \right] P^{\otimes n}(T_{n,V}) \nonumber \\
&\qquad \leqt{(vii)} (n+1)^{|\XX|}\! \int_{P:\ \frac12\|P-V\|_1\leq \xi} \!\!\dd P\ \exp\left[ - D(P^{\otimes n} \| \FF_n) + n \big(\tfrac{\Delta}{5} + \phi(\xi) + \lambda_n (\xi) \big) \right] P^{\otimes n}(T_{n,V}) \nonumber \\
&\qquad \leq (n+1)^{|\XX|}\! \int \dd P\ \exp\left[ - D(P^{\otimes n} \| \FF_n) + n \big(\tfrac{\Delta}{5} + \phi(\xi) + \lambda_n (\xi) \big) \right] P^{\otimes n}(T_{n,V})\, . \nonumber
\end{align}
The justification of the above steps is as follows. The inequality~(i) is an application of~\eqref{constrained_deFinetti_proof_eq4}, in~(ii) we employed~\eqref{constrained_deFinetti_proof_eq3}, while~(iii) and~(v) follow from the observation that the measure $\dd P$ is concentrated by construction on the $P$'s such that $\supp(P) \subseteq \supp(R)$. In~(iv) we used~\eqref{constrained_deFinetti_proof_eq5}, noting that the right-hand side is a continuous and therefore measurable function of $P$, due to Lemma~\ref{ac_relent_resource_lemma}. In~(vi) we applied Sanov's theorem~\cite[Exercise~2.12(a), p.~29]{CSISZAR-KOERNER}, and finally in~(vii) we defined the ancillary function
\bb
\lambda_n (\xi) \coloneqq \max_{V\in \TT_n} \sup_{P:\ \frac12\|P-V\|_1\leq \xi} D(V\|P)\, .
\label{constrained_deFinetti_proof_eq6}
\ee

Since $Q_n$ is permutationally symmetric and the same is true of any convex combination of i.i.d.\ distributions, the inequality in~\eqref{constrained_deFinetti_proof_eq5} entails that
\bb
C_d\, \xi^{d-1} Q_n \leq (n+1)^{|\XX|} \int \dd P\ \exp\left[ - D(P^{\otimes n} \| \FF_n) + n \big(\tfrac{\Delta}{5} + \phi(\xi) + \lambda_n (\xi) \big) \right] P^{\otimes n}
\ee
holds as an entry-wise inequality. Massaging this, we obtain
\bb
Q_n \leq \exp\left[ n \left( \tfrac{|\XX|}{n} \log (n\!+\!1) + \tfrac1n \log\tfrac{1}{C_d \xi^{d-1}} + \tfrac{\Delta}{5} + \phi(\xi) + \lambda_n (\xi) \right) \right] \int \dd P\ \exp\left[ - D(P^{\otimes n} \| \FF_n) \right] P^{\otimes n} .
\ee
To proceed further, we fix $\xi = \min\big\{\frac{1}{2n}, \frac13\big\}$, which gives us (as long as $n\geq 2$)
\bb
Q_n \leq \exp\left[ n \left( \tfrac{|\XX|}{n} \log (n\!+\!1) + \tfrac1n \log\tfrac{(2n)^{d-1}}{C_d} + \tfrac{\Delta}{5} + \phi\big(\tfrac{1}{2n}\big) + \lambda_n\big(\tfrac{1}{2n}\big) \right) \right] \int \dd P\ \exp\left[ - D(P^{\otimes n} \| \FF_n) \right] P^{\otimes n} .
\ee
The only thing that remains to be shown to complete the proof of~\eqref{constrained_deFinetti} is that we can make the term inside the round brackets in the first exponential smaller than $\Delta$ for a sufficiently large $n$. We can definitely make sure that
\bb
\tfrac{|\XX|}{n} \log (n\!+\!1) \leq \frac{\Delta}{5}\, ,\qquad \frac1n \log\frac{(2n)^{d-1}}{C_d} \leq \frac{\Delta}{5}\, ,\qquad \phi\big(\tfrac{1}{2n}\big) \leq \frac{\Delta}{5}\, ,
\label{constrained_deFinetti_proof_eq10}
\ee
as long as we choose $n$ to be sufficiently large, because all the functions on the left-hand sides vanish as $n\to\infty$. The problem is whether we can also guarantee that
\bb
\lambda_n\big(\tfrac{1}{2n}\big) \leqt{?} \frac{\Delta}{5}
\label{constrained_deFinetti_proof_eq11}
\ee
for all sufficiently large $n$, which reduces to the problem of establishing whether
\bb
\lim_{n\to\infty} \lambda_n\big(\tfrac{1}{2n}\big) \eqt{?} 0\, .
\label{constrained_deFinetti_proof_eq12}
\ee

To prove~\eqref{constrained_deFinetti_proof_eq12}, start by observing that every $P\in \PP(\XX)$ such that $\frac12 \|P-V\|_1\leq \tfrac{1}{2n}$ satisfies that
\bb
V(x) \leq 2 P(x)\qquad \forall\ x\in \XX\, .
\label{constrained_deFinetti_proof_eq13}
\ee
Indeed, the inequality in~\eqref{constrained_deFinetti_proof_eq13} is obvious when $x\notin\supp(V)$. When, on the contrary, $x\in \supp(V)$, it must be that $V(x) \geq \tfrac1n$, because $V$ is an $n$-type (see~\eqref{all_types}); hence,
\bb
P(x) \geq V(x) - \max_{x'} |V(x') - P(x')| \geq V(x) - \frac12 \|P-V\|_1 \geq \frac1n - \frac{1}{2n} = \frac{1}{2n}\, ,
\ee
so that
\bb
V(x) \leq P(x) + \max_{x'} |V(x') - P(x')| \leq P(x) + \frac12\|P-V\|_1 \leq P(x) + \frac{1}{2n} \leq 2P(x)\, ,
\ee
as claimed. Another way to phrase the now proven~\eqref{constrained_deFinetti_proof_eq13} is by stating that $D_{\max}(V\|P)\leq \log 2$, where the max-relative entropy is defined in~\eqref{D_max}. We can now use~\cite[Eq.~(13)]{continuity-via-integral} to estimate
\bb
D(V\|P) \leq D(P\|P) + \tfrac{1}{2n}\, D_{\max}(V\|P) + h_2\big(\tfrac{1}{2n}\big) \leq \tfrac{1}{2n} \log 2 + h_2\big(\tfrac{1}{2n}\big)
\ee
for any $P$ such that $\frac12 \|P-V\|_1\leq \frac{1}{2n}$, which plugged into~\eqref{constrained_deFinetti_proof_eq6} gives
\bb
\lambda_n\big(\tfrac{1}{2n}\big) \leq \tfrac{1}{2n}\log 2 + h_2\big(\tfrac{1}{2n}\big)\, ,
\ee
which immediately implies~\eqref{constrained_deFinetti_proof_eq12}, and hence also~\eqref{constrained_deFinetti_proof_eq11}. Together with~\eqref{constrained_deFinetti_proof_eq10}, this completes the proof of~\eqref{constrained_deFinetti}. To deduce~\eqref{constrained_deFinetti_special}, simply observe that under Axiom~\ref{new_ax_tensor_products_n_and_m} the sequence $n\mapsto \rel{D}{P^{\otimes n}}{\FF_n}$ is sub-additive, implying, by Fekete's lemma~\cite{Fekete1923}, that 
\bb
D^\infty(P\|\FF) = \inf_{k\in \N^+} \frac1k\, \rel{D}{P^{\otimes k}}{\FF_k} \leq \frac1n\, \rel{D}{P^{\otimes n}}{\FF_n}\qquad \forall\ n\in \N^+. 
\ee
Plugging this estimate into~\eqref{constrained_deFinetti} yields~\eqref{constrained_deFinetti_special} and concludes the proof.
\end{proof}

\begin{rem}
It is possible to simplify the above proof considerably if one is content with a slightly weaker result in which the measure $\dd P$ is allowed to depend on $n$. In this case, one can simply take $\dd P$ as the uniform measure over types. The details are left to the interested reader.
\end{rem}

In light of the above findings, we find the following conjecture quite natural. Note that the second inequality is trivially true due to the sub-additivity of the relative entropy of entanglement.

\begin{cj}
Let $AB$ be a finite-dimensional bipartite quantum system with Hilbert space $\HH_{AB}$. There exists a measure $\dd \omega$ on $\D\big(\HH_{AB}\big)$ with the following property: for any $\Delta>0$, we can find $N = N\big(\Delta, \dim \HH_{AB}\big) \in \N^+$ such that, for all $n\geq N$, all permutationally symmetric separable states $\sigma_n = \sigma_{A^nB^n}\in \SEP_{A^n:B^n} = \SEP_n$ satisfy that
\bb
\sigma_n \leq \int \!\!\dd \omega\ \exp\left[ - D(\omega^{\otimes n} \| \SEP_n) + n \Delta \right]\, \omega^{\otimes n} \leq \int \!\!\dd \omega\ \exp\left[ - n\left( D^\infty(\omega \| \SEP) - \Delta\right) \right]\, \omega^{\otimes n} .
\ee
Here, $\SEP_{A^n:B^n}$ denotes the set of states that are separable (i.e.\ un-entangled)~\cite{Werner} across the cut $A^n:B^n$, where on one side we have $n$ copies of the system $A$, and on the other $n$ copies of the system $B$.
\end{cj}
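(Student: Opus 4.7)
The plan is to port the proof of the classical Lemma~\ref{constrained_deFinetti_lemma} to the quantum setting, replacing the combinatorial tools of the method of types by their Schur--Weyl analogs. For the measure $\dd\omega$, I would take the push-forward of the Haar measure on the unit sphere of $\HH_{AB}$ under $\ket{\Psi}\mapsto \ketbra{\Psi}$, in direct parallel with the classical choice based on $S_{d-1}$; the Fuchs--van de Graaf inequality relating trace distance to pure-state overlap would then play the role of the classical lower bound on the measure of a total-variation ball. Because $\SEP$ is closed under tensor products, the sequence $n\mapsto D(\omega^{\otimes n}\|\SEP_n)$ is sub-additive, so Fekete's lemma gives $D(\omega^{\otimes n}\|\SEP_n) \geq n\, D^\infty(\omega\|\SEP)$, from which the second inequality of the conjecture follows immediately from the first.

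The main obstacle is a \emph{quantum meta-lemma} analogous to Lemma~\ref{meta_lemma_perm_symm}. Informally, the target statement is that, for any $\Delta>0$, all sufficiently large $n$, any permutationally symmetric $\sigma_n\in\SEP_n$, any pure state $\ket{\omega}\in\HH_{AB}$, and any Schur--Weyl isotype labelled by a Young diagram $\lambda\vdash n$ whose normalised shape $\lambda/n$ is within $\xi\in(0,1/3)$ of the spectrum of $\ketbra{\omega}$, one has
\bb
\Tr\!\left[\Pi_\lambda\,\sigma_n\right] \leq \exp\!\left[-D(\omega^{\otimes n}\|\SEP_n) + n\big(\phi(\xi)+\Delta\big)\right],
\ee
where $\Pi_\lambda$ is the isotype projector and $\phi$ a continuous function vanishing at $0$. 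Here Young diagrams replace classical types and Schur--Weyl duality substitutes for the method of types. To establish this, I would develop a quantum symbol-by-symbol blurring lemma using the depolarising channel $\D_{\delta,R}(\rho) \coloneqq (1-\delta)\rho + \delta\, R_A\otimes R_B$ applied independently to each of the $n$ copies; since this map preserves $\SEP$, its tensor power preserves $\SEP_n$, which furnishes a quantum analog of Axiom~\ref{new_ax_depolarising}. The Azuma-style neighbourhood estimate of Lemma~\ref{Alon_Spencer_lemma} would need to be replaced by an operator-theoretic statement on how isotypic projectors transform under $\D_{\delta,R}^{\otimes n}$, controlled through representation-theoretic dimension bounds such as the polynomial bound on the number of isotypes.

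With the quantum meta-lemma in hand, I would proceed exactly as in the classical derivation: restrict the integral over $\dd\omega$ to pure states whose spectrum is $\xi$-close to $\lambda/n$ (this set has Haar measure lower-bounded by a polynomial in $\xi$, by the argument of~\cite[Lemma~2.3]{BALL} applied to the complex sphere), invoke a quantum Sanov-type estimate~\cite{Hiai1991} to relate $\Tr[\Pi_\lambda\,\omega^{\otimes n}]$ to $\exp[-nD(\bar\omega_\lambda\|\omega)]$ up to polynomial factors, and use the permutational symmetry of $\sigma_n$ (which forces it to commute with every $\Pi_\lambda$) to upgrade the scalar inequality on each isotype to an operator inequality of the form $\Pi_\lambda\sigma_n\Pi_\lambda \leq \int\dd\omega\,[\ldots]\,\Pi_\lambda\omega^{\otimes n}\Pi_\lambda$. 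Summing over the polynomially many $\lambda\vdash n$ then absorbs all sub-exponential factors into the $\exp[n\Delta]$ slack, yielding the claimed entry-wise bound. The most delicate technical point beyond the meta-lemma itself is verifying that this isotype-by-isotype upgrade does not introduce uncontrolled cross-terms between different $\lambda$'s; I expect this can be handled by exploiting the orthogonality of distinct Schur--Weyl isotypes together with the post-selection technique of~\cite{Christandl2009} applied on each fixed isotype.
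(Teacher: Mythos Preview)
The statement you are addressing is a \emph{conjecture} in the paper, not a proved result. The paper offers no proof whatsoever of the first inequality; it only remarks that ``the second inequality is trivially true due to the sub-additivity of the relative entropy of entanglement,'' which is exactly the Fekete observation you make. There is therefore no paper proof to compare your proposal against --- what you have written is a research plan for attacking an open problem, and you yourself correctly identify the quantum meta-lemma as an unresolved obstacle rather than a completed step.

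A few technical comments on the plan itself. Your choice of measure --- the push-forward of Haar measure on the unit sphere of $\HH_{AB}$ --- is supported on \emph{pure} states only, whereas the conjecture asks for a measure on all of $\D(\HH_{AB})$. In the classical proof the analogous push-forward from $S_{d-1}$ lands on the full simplex (with support restriction), but in the quantum case pure states form a measure-zero subset of the state space, and each $\ketbra{\omega}^{\otimes n}$ is rank one. An operator upper bound on a generically full-rank $\sigma_n$ by a mixture of rank-one projectors must therefore come entirely from the integral, and the standard post-selection lemma of Christandl--K\"onig--Renner already uses a measure over mixed states for this reason. You may need to enlarge your measure accordingly, or argue that separability and permutation symmetry of $\sigma_n$ force enough additional structure to make the pure-state ansatz work. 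The isotype-by-isotype upgrade from scalar to operator inequality is also more subtle than in the classical case: permutation invariance of $\sigma_n$ guarantees it is block-diagonal across Schur--Weyl isotypes, but within a fixed isotype $\sigma_n$ need not be proportional to $\Pi_\lambda$, so a bound on $\Tr[\Pi_\lambda\sigma_n]$ alone does not control $\Pi_\lambda\sigma_n\Pi_\lambda$ as an operator.
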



\bigskip
\noindent \emph{Acknowledgements.} I thank Ronald de Wolf and Marco Tomamichel for independently suggesting the idea of a symbol-by-symbol blurring procedure when I discussed with them some of the ideas in~\cite{GQSL}. I am also grateful to Mario Berta and Bartosz Regula for many discussions on hypothesis testing, and to Mil\'an Mosonyi for helping me navigate the prior literature on composite Chernoff--Stein's lemmas. Funded by the European Union under the ERC StG ETQO, Grant Agreement no.~101165230.

\bibliography{../../biblio}

\appendix

\section{Proof of the asymptotic continuity of the relative entropy distance (Lemma~\ref{ac_relent_resource_lemma})} \label{app_ac_relent_resource}

In what follows, we will present a self-contained proof of Lemma~\ref{ac_relent_resource_lemma}. The argument is essentially derived from that in~\cite[Proposition~13]{continuity-via-integral}, with minor modifications.

\begin{proof}[Proof of Lemma~\ref{ac_relent_resource_lemma}]
For generic $Q_n\in \FF_n$ and $\delta\in [0,1]$, to be fixed later, we can write
\begin{align}
D(P_n\|\FF_n) &\leqt{(i)} \rel{D}{P_n}{\mm^{\otimes n}(Q_n)} \nonumber \\
&\leqt{(ii)} \rel{D}{P'_n}{\mm^{\otimes n}(Q_n)} + \e\, \rel{D_{\max}}{P_n}{\mm^{\otimes n}(Q_n)} + h_2(\e) \label{ac_relent_resource_proof_eq1} \\
&\leqt{(iii)} D(P'_n\|Q_n) + n\log\tfrac{1}{1-\delta} + n\e \log\tfrac{1}{\delta c} + h_2(\e)\, . \nonumber
\end{align}
Here, (i)~holds because $\mm^{\otimes n}(Q_n)\in \FF_n$ due to Axiom~\ref{new_ax_depolarising}; step~(ii), instead, is an application of~\cite[Eq.~(13)]{continuity-via-integral}. Finally, the critical inequality~(iii) can be justified as follows: on the one hand, by construction $\mm^{\otimes n}(Q_n) \geq (1-\delta)^n Q_n$; this implies, via the monotonicity of the logarithm, that 
\bb
\rel{D}{P'_n}{\mm^{\otimes n}(Q_n)} \leq D(P'_n\|Q_n) + n\log\tfrac{1}{1-\delta}\, ; 
\ee
on the other, the complementary inequality $\mm^{\otimes n}(Q_n) \geq \delta^n R^{\otimes n} \geq (\delta c)^n P_n$, which holds because $\supp(P_n) \subseteq \supp(R)^{n}$, entails that
\bb
\rel{D_{\max}}{P_n}{\mm^{\otimes n}(Q_n)} \leq n \log\tfrac{1}{\delta c}\, .
\ee
We can now minimise the rightmost side of~\eqref{ac_relent_resource_proof_eq1} with respect to $\delta\in [0,1]$. Using the easily verified formula
\bb
\inf_{\delta\in (0,1)} \left\{ \log\tfrac{1}{1-\delta} + \e \log \tfrac1\delta \right\} = g(\e)\, ,
\ee
we obtain immediately that
\bb
D(P_n\|\FF_n) &\leq D(P'_n\| Q_n) + n\e \log\tfrac{1}{c} + ng(\e) + h_2(\e) \, .
\ee
A further minimisation over $Q_n\in \FF_n$ yields~\eqref{ac_relent_resource}.
\end{proof}

\section{Elementary properties of the auxiliary function}
\label{app_proof_variational_F_lemma}

Here we state and prove some useful properties of the auxiliary function $F_c$ defined by~\eqref{F_c}.

\begin{lemma} \label{variational_F_lemma}
For all $c,c_1,c_2\in (0,1]$ and all $x\geq 0$, the function $F_c$ defined by~\eqref{F_c} satisfies the following properties:
\begin{enumerate}[(a)]
\item $F_c(x) = \sup_{y \in [0,x]} \left\{ y \log\frac1c + h_2(y) \right\}$;

\item $F_{c_1}(x) + F_{c_2}(x) \leq 2\, F_{\min\{c_1,c_2\}}(x)$; and

\item $F_c(x) = \inf_{\delta\in \left(0, \scaleobj{1}{\frac{1}{c+1}}\right]} \left\{ x \log \frac{1-\delta}{c\delta} + \log\frac{1}{1-\delta} \right\}$.
\end{enumerate}
\end{lemma}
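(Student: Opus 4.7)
The plan is to prove each item by elementary one-variable calculus, the unifying observation being that the function $g_c(y) := y\log(1/c) + h_2(y)$ is strictly concave on $[0,1]$, with $g_c'(y) = \log\frac{1-y}{cy}$ vanishing at $y^* = \frac{1}{c+1}$ and a corresponding maximum value $g_c(y^*) = \log(1+1/c)$ (a short arithmetic check).

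For part (a), concavity plus the location of $y^*$ means $g_c$ is strictly increasing on $[0,y^*]$ and strictly decreasing on $[y^*, 1]$. Thus $\sup_{y\in [0,x]} g_c(y)$ equals $g_c(x) = x\log(1/c) + h_2(x)$ when $x\leq y^*$ and equals $g_c(y^*) = \log(1+1/c)$ when $x > y^*$, which is exactly the piecewise definition of $F_c(x)$ in~\eqref{F_c}. For part (b), I will assume without loss of generality $c_1 \leq c_2$; then $\log(1/c_1) \geq \log(1/c_2)$, so the variational formula from (a) makes $c\mapsto F_c(x)$ non-increasing, yielding $F_{c_2}(x) \leq F_{c_1}(x) = F_{\min\{c_1,c_2\}}(x)$. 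Adding the trivial bound $F_{c_1}(x) \leq F_{\min\{c_1,c_2\}}(x)$ gives the desired inequality.

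For part (c), I will study $f(\delta) := x\log\frac{1-\delta}{c\delta} + \log\frac{1}{1-\delta}$ on $\delta \in (0, 1/(c+1)]$. A direct differentiation yields $f'(\delta) = \frac{\delta - x}{\delta(1-\delta)}$, so $f$ is strictly convex with a unique critical point at $\delta = x$. If $x \leq 1/(c+1)$ the critical point is feasible and the infimum is $f(x) = (x-1)\log(1-x) - x\log x - x\log c = h_2(x) + x\log(1/c)$, which is the first branch of $F_c$. If instead $x > 1/(c+1)$, then $f$ is strictly decreasing on the feasible interval, so the infimum is attained at the endpoint $\delta = 1/(c+1)$, where the log-argument $\frac{1-\delta}{c\delta}$ simplifies to $1$ and one reads off $f(1/(c+1)) = \log(1+1/c)$, matching the second branch of $F_c$.

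No step poses any real conceptual obstacle; the statements all reduce to standard calculus. The only thing to be careful with is the piecewise nature of $F_c$, which forces a clean case split at $x = 1/(c+1)$ in both (a) and (c), plus the trivial boundary case $x=0$ where $F_c(0)=0$ is consistent with both pieces (in (c), this corresponds to $f(\delta) = \log\frac{1}{1-\delta}$ whose infimum as $\delta \to 0^+$ is $0$).
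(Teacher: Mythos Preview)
Your proof is correct. Parts (a) and (c) follow essentially the same derivative-based analysis as the paper. For part (b), however, you take a genuinely simpler route: you observe that $c \mapsto F_c(x)$ is monotonically non-increasing (immediate from the variational formula in (a), since $y\log(1/c)$ is non-increasing in $c$), and then simply bound each summand by $F_{\min\{c_1,c_2\}}(x)$. The paper instead expands both supremums, bounds $y\log(1/c_1) + z\log(1/c_2) \leq (y+z)\log(1/\min\{c_1,c_2\})$, and then invokes the concavity of $h_2$ to merge the two binary entropy terms. Your monotonicity argument is cleaner and avoids the concavity step entirely; the paper's approach, on the other hand, would generalise more readily to situations where one wants a bound in terms of some averaged parameter rather than the minimum.

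One very minor remark on (c): your claim that $f$ is ``strictly convex'' is only true for $x \in [0,1]$ (for $x>1$ the second derivative can change sign), but your argument does not actually rely on convexity --- you use only the sign of $f'(\delta) = (\delta - x)/[\delta(1-\delta)]$, which is correct for all $x \geq 0$ and suffices for both case branches. So this is a cosmetic imprecision, not a gap.
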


\begin{proof}
We start from~(a). The function $y\mapsto y \log\frac1c + h_2(y)$ has derivative
\bb
\frac{1}{\log e}\, \partial_y \left( y \log\frac1c + h_2(y) \right) = \ln\frac1c + \ln \left(\frac1y - 1\right) .
\ee
This is positive for $y\in \big(0, \frac{1}{c+1}\big)$, and negative for $y\in \big(\frac{1}{c+1},1\big)$. Hence, the maximum is achieved at $y=x$ if $x\leq \frac{1}{c+1}$, and at $y=\frac{1}{c+1}$ otherwise. In this latter case, the value of the maximum is precisely $\log\left(1+\frac1c\right)$. This proves~(a).

We now move on to~(b). It suffices to use~(a) to write
\bb
F_{c_1}(x) + F_{c_2}(x) &= \sup_{y\in [0,x]} \left\{ y \log\frac{1}{c_1} + h_2(y) \right\} + \sup_{z\in [0,x]} \left\{ z \log\frac{1}{c_2} + h_2(z) \right\} \\
&= \sup_{y,z\in [0,x]} \left\{ y \log\frac{1}{c_1} + z \log\frac{1}{c_2} + h_2(y) + h_2(z) \right\} \\
&\leq \sup_{y,z\in [0,x]} \left\{ (y+z) \log\frac{1}{\min\{c_1,c_2\}} + h_2(y) + h_2(z) \right\} \\
&= 2 \sup_{y,z\in [0,x]} \left\{ \frac{y+z}{2} \log\frac{1}{\min\{c_1,c_2\}} + \frac12 \left(h_2(y) + h_2(z)\right) \right\} \\
&\leq 2 \sup_{y,z\in [0,x]} \left\{ \frac{y+z}{2} \log\frac{1}{\min\{c_1,c_2\}} + h_2\left(\frac{y+z}{2}\right) \right\} \\
&= 2 \sup_{w\in [0,x]} \left\{ w \log\frac{1}{\min\{c_1,c_2\}} + h_2(w) \right\} \\
&= 2\, F_{\min\{c_1,c_2\}}(x)\, ,
\ee
where the second inequality is the concavity of the binary entropy function, and on the second-to-last line we introduced the parameter $w\coloneqq (y+z)/2$.

As for~(c), note that the derivative of the objective function is given by
\bb
\frac{1}{\log e}\, \partial_\delta \left( x \log \frac{1-\delta}{c\delta} + \log \frac{1}{1-\delta} \right) &= \frac{1-x}{1-\delta} - \frac{x}{\delta}\, .
\ee
If $x\geq 1$, this is negative for all $\delta \in (0,1)$. If $\frac{1}{c+1} < x < 1$, it is negative for all $\delta \in (0,x)$, an in particular for all $\delta$ in the range. In both cases, i.e.\ whenever $x \geq \frac{1}{c+1}$, the minimum of the objective function is achieved for $\delta=\frac{1}{c+1}$, giving $\log\big(1+\frac1c\big) = F_c(x)$ as the result of the optimisation in this case. If $x\leq \frac{1}{c+1}$, instead, the derivative is non-positive for $0<\delta \leq x$ and non-negative for $\delta\geq x$, implying that the minimum of the objective function is achieved for $\delta=x$, yielding
\bb
x\log \frac{1-x}{cx} + \log \frac{1}{1-x} = x \log \frac{1}{c} + h_2(x) = F_c(x)
\ee
and thus completing the proof.
\end{proof}

\end{document}